\theoremstyle{definition}
\newtheorem{Thm}{Theorem}[section]
\newtheorem{exmp}[Thm]{Example}
\newtheorem{Def}[Thm]{Definition}
\newtheorem{Coroll}[Thm]{Corollary}
\newtheorem{Prop}[Thm]{Proposition}
\begin{document}

\title*{Multi-twisted codes as free modules over principal ideal domains}
\titlerunning{MT codes as free modules over PIDs}
\author{Ramy Taki ElDin}
\institute{Ramy Taki ElDin \at Faculty of Engineering-Ain Shams University, Cairo, Egypt, \email{ramy.farouk@eng.asu.edu.eg}}
%
%
\maketitle

\abstract{We begin this chapter by introducing the simple algebraic structure of cyclic codes over finite fields. This structure undergoes a series of generalizations to present algebraic descriptions of constacyclic, quasi-cyclic (QC), quasi-twisted (QT), generalized quasi-cyclic (GQC), and multi-twisted (MT) codes. The correspondence between these codes and submodules of the free $\mathbb{F}_q[x]$-module $\left(\mathbb{F}_q[x]\right)^\ell$ is established. Thus, any of these codes corresponds to a free linear code over the principal ideal domain (PID) $\mathbb{F}_q[x]$. A basis of this code exists and is used to build a generator matrix with polynomial entries, called the generator polynomial matrix (GPM). The Hermite normal form of matrices over PIDs is exploited to achieve the reduced GPMs of MT codes. Some properties of the reduced GPM are introduced, for example, the identical equation. A formula for a GPM of the dual code $\mathcal{C}^\perp$ of a MT code is given. At this point, special attention is paid to QC codes. For a QC code $\mathcal{C}$, we define its reversed code $\mathcal{R}$. We call $\mathcal{C}$ reversible or self-dual if $\mathcal{R}=\mathcal{C}$ or $\mathcal{C}^\perp=\mathcal{C}$, respectively. A formula for a GPM of $\mathcal{R}$ is given. We characterize GPMs for QC codes that combine reversibility and self-duality/self-orthogonality. For the reader interested in running computer search for optimal codes, we show the existence of binary self-orthogonal reversible QC codes that have the best known parameters as linear codes. These results can be obtained by brute-force search using GPMs that meet the above characterization.}

\section{Introduction}
In communication systems, such as mobile networks and data storage/retrieval systems, data is transmitted and stored as a sequence of symbols belonging to a particular set of alphabet. During data transmission or retrieval, noise may undoubtedly cause data errors and distortion. The goal of channel encoding/decoding is to remove noise effects as much as possible. Error correction techniques provide transmission robustness by detecting and correcting data distortion. Data is divided into groups of symbols. Each data group is encoded to a codeword by adding some redundancy symbols used by the decoder to detect and correct transmission errors, see \cite{Verlinde} for a detailed explanation of channel encoding/decoding. Formally, a code is defined as the set of all possible codewords. Different codes may have different lengths, sizes, or error correction capabilities. By defining a metric, e.g., the Hamming distance, the code with the largest minimum distance between its codewords has the greatest error-correcting capability. For fixed code length and code size, the code with the largest minimum Hamming distance is called an optimal code. Researchers use computer search to find optimal codes over different alphabets \cite{Eric,Grassl}. In \cite{Grassl}, one can find records of the best known minimum Hamming distances for different code parameters.

To facilitate software and hardware implementations, communication applications prefer the use of a code with logarithmic encoding/decoding techniques. Therefore, a preference has arisen for linear codes in different applications than non-linear codes. Moreover, linear codes with rich algebraic structure have found their superiority in engineering applications. For example, cyclic codes constitute a class of linear codes with a rich algebraic structure. This was followed by many generalizations for the algebraic structure of cyclic codes that led to other classes of linear codes with more complicated algebraic descriptions. Meanwhile, investigating codes over different alphabets has been of interest to many researchers. Although codes over finite fields are perhaps the most studied in literature, codes over various alphabets have also received considerable attention \cite{Honold1999,Spiegel1977,Cao2013,Dinh2004,QuangDinh2004}. For instance, constructing DNA codes in most cases requires four-element alphabets \cite{Siap2009,Liang2015}. The origin of the study of codes over finite commutative rings began with the study of codes over $\mathbb{Z}_4$, $\mathbb{F}_2[u]/\langle u^2\rangle$, and $\mathbb{F}_2[v]/\langle v^2+v\rangle$, see \cite{Hammons1994,FernndezCrdoba,Dougherty2001,Bonnecaze1999}. Since then, a sequence of generalizations of codes over different alphabets is followed. For instance, cyclic codes over the local principal ideal ring $\mathbb{Z}_{p^m}$ were investigated in \cite{Kanwar1997}, while codes over the principal ideal ring $\mathbb{Z}_m$ were considered in \cite{Spiegel1977}. Codes over general finite chain rings were examined in \cite{Norton2000}, where a chain ring is a local principal ideal ring. The Chinese remainder theorem is found as a powerful tool for decomposing a code over a finite commutative ring into a direct sum of codes over finite commutative local rings \cite{Dougherty2017}. However, in most cases, the commutative ring is chosen to be Frobenius since codes over Frobenius rings are shown in \cite{Wood1999,wood1997extension} to satisfy MacWilliams identity.

Although investigating properties of codes over different rings has been studied in depth, generalizing some classes of code that have rich algebraic structures to broader classes has been, and remains, of interest to coding theorists and mathematicians. For instance, cyclic codes have a noteworthy algebraic structure which has made this class subject to a sequence of generalizations. Some scientists consider code construction to be a purely mathematical discipline, however, decoding of codewords is still an important consideration in code design. The richer the algebraic structure of the code, the easier it is to obtain an encoding/decoding algorithm. Linear codes over a ring ${R}$ of length $n$ are described algebraically as ${R}$-submodules of ${R}^n$; hence, linear codes are vector spaces if  ${R}$ is a finite field. A cyclic code over ${R}$ is a linear code with the structure of an ideal in the ring of polynomials over ${R}$. The rich algebraic structure of cyclic codes over finite fields have encouraged researchers to generalize this class. The class of quasi-cyclic (QC) codes is obtained by not limiting the shift index of cyclic codes to unity. A QC code is a linear code invariant under the cyclic shift of a number of coordinates. QC codes have been addressed in several studies, e.g., \cite{Barbier2012,Conan1993,SanLing2001,Ling2003,Ling2005,Ling2006,Pei2007,Cao2010,Cayrel2010}. Another class of linear codes is the class of constacyclic codes, which is obtained by generalizing the shift constant of cyclic codes. In constacyclic codes, the shift constant is not restricted to unity; the shift constant of a constacyclic codes over a finite field can be any non-zero element. See \cite{Chen2012,Cao2013,Bakshi2012,LaGuardia2016} for a detailed description of constacyclic codes and their algebraic structure. Although the shift index of QC codes is not limited to unity, the block lengths are equal. Generalizing block lengths of QC codes, such that they are not necessarily equal, has led to a more general class of codes known in the literature as generalized quasi-cyclic (GQC) codes. In \cite{Gneri2017,Esmaeili2009,Cao2011,Siap05thestructure}, the algebraic structure of GQC codes is explained. However, generalizing the shift constant of QC codes, such that it does not necessarily equal unity, led to the class of quasi-twisted (QT) codes \cite{Jia2012,Gao2014}. In fact, QT codes generalize the shift constant of QC codes, and on the other hand they generalize the shift index of constacyclic codes. In \cite{Aydin2017}, QT and GQC codes are generalized to multi-twisted (MT) codes. MT codes are similar to GQC codes in that the block lengths are not necessarily equal, and they are similar to QT codes in that the shift constant is not necessarily equal to unity, moreover, the shift constants for different blocks are not necessarily the same. The algebraic structure of MT codes over finite fields is studied in \cite{Sharma2018,chauhan2021}.

We begin this chapter by defining cyclic and constacyclic codes over finite fields and presenting their algebraic structures as ideals in quotient rings. Cyclic and constacyclic codes have a unit shift index. Releasing the shift index of cyclic and constacyclic codes from being one leads to the classes of QC and QT codes, respectively. QC, QT, GQC, and MT codes over finite fields are invariant under some invertible linear transformations. Such invariance is the key behind defining an action on these codes which gives them the structure of modules over a principal ideal domain (PID) \cite{Lally2001,Conan1993,Matsui2015}. However, other algebraic structures are present in literature. In \cite{Cayrel2010}, QC codes over finite fields are viewed as cyclic codes over a non commutative ring of matrices over a finite field. In \cite{Cao_2011}, QC codes over finite fields are associated with some monic polynomial factors in the ring of polynomials over matrices with entries from the field. In \cite{Guneri2013}, the Chinese remainder theorem is used to provide the concatenated structure of QC codes, in which a QC code is written as a direct sum of concatenations of irreducible cyclic codes and linear codes. This concatenated structure has been generalized to QT and GQC codes in \cite{jia2012quasi,gao2015note,ozbudak2017structure}. We follow the representation of these codes as modules over PIDs. Specifically, we consider QC, QT, GQC, and MT codes as linear codes over PIDs. We show how generators of these codes can be deduced from their generator matrices. Hence, these generators are used to construct a generator polynomial matrix (GPM) for the code as a linear code over a PID. For the broader class of MT codes, we aim to provide a minimal generating set and an algorithmic technique for finding this minimal set. Since GPMs are over PIDs, we consider the reduced form of a matrix over a PID in the Section \ref{Matrices_PIDs}.

Modules over commutative rings generalize vector spaces over fields. When the modules are over PIDs, it was found that many known properties of vector spaces remain true for these modules. For instance, a submodule of a free module over a PID is free \cite{Roman2008-wr}. This makes MT codes free modules over PIDs, and we show how a basis of MT codes can be inferred. Although this basis is used to construct a GPM with entries from a PID, we aimed to find a unique reduced matrix form. This reduced form can be achieved by the Hermite normal form of matrices over PIDs \cite{gathen_gerhard_2013,Cohen2000lq,Kipp_Martin2012,Cohen1996}. The Hermite normal form of a matrix is a left equivalent matrix that generalizes the row reduced echelon form of matrices over a field. We show how to get the unique reduced GPM from any GPM. We prove the relationship between the code dimension and the diagonal entries of the reduced GPM. We present the identical equation satisfied by the GPM. The identical equation plays a major role in constructing the GPM of the dual code.

The dual of a code is defined as the set of all vectors that have a zero inner product with each codeword in the code. Duality, its properties, and its interrelationship with other branches of mathematics were among the research points that gained great importance in the past decades \cite{MacWilliams1978}. A code is self-dual if it is identical to its dual, while it is self-orthogonal if it is contained in its dual. Finding self-dual codes has been associated with many branches of mathematics, e.g., invariant theory \cite{Nebe2006}, combinatorics \cite{Pless1975}, design theory \cite{Assmus1992}, projective geometry \cite{Dougherty1234}, and lattices \cite{Conway1999}. We define dual codes, introduce the MacWilliams identity \cite{MacWilliams1978,Dougherty2017} that relates the weight enumerator of a code to the weight enumerator of its dual, and describe the dual code of any MT code which has been shown to be MT as well. Using the identical equation, we explain how to construct a GPM for the dual of a MT code. Specifying to subclasses of MT codes, formulas for GPMs of the dual codes of QC, QT, and GQC codes are presented. Intuitively, this provides conditions for the self-duality and self-orthogonality of these codes.

A code is called reversible if it is invariant under reversing the coordinates of its codewords. Reversibility is essential in some code design applications, for example, DNA codes \cite{Gaborit2005,Marathe2001,Milenkovic2006}, cryptography \cite{Li2017,Carlet2015}, and locally repairable codes \cite{Zeh2015}. Massey \cite{Massey1964} considered reversibility in the class of cyclic codes. He demonstrated that a cyclic code over a finite field is reversible if and only if its generator polynomial is self-reciprocal. Reversibility for codes over different alphabets has been considered in \cite{Bennenni2017,Guenda2013,Bayram2015}. Since QC codes generalize cyclic codes, it was natural to study reversibility in the class of QC codes. In \cite{TakiEldin2019,Eldin2020,TakiElDin2020}, reversibility for some classes of QC codes has been considered. We show an explicit formula for a GPM of the reversed code of a QC code, where the reversed code is the code obtained by reversing the coordinates of all codewords. Using this formula, we summarize some of the results of \cite{ELDIN2021} about the relations between reversibility, self-duality, and self-orthogonality of QC codes and their impact on the GPMs. We restrict ourselves to QC codes because this class contains many codes with the best known parameters \cite{SanLing2003,MartinezPerez2007,Greenough1992}, where a code has the best known parameters if its minimum distance meets the upper bound in \cite{Grassl}. Specifically, we aim to present some QC codes that combine reversibility, self-orthogonality, and optimality. We conclude this chapter by listing some results obtained by computer search for binary optimal self-orthogonal reversible QC codes in Table \ref{tab1}.

\section{From cyclic to multi-twisted codes}
\label{sec:1}
Let $\mathbb{F}_q$ be the finite field of order $q$, where $q$ is a prime power. If $q$ is a prime number, then $\mathbb{F}_q$ is isomorphic to the quotient ring $\mathbb{Z}/\langle q\rangle$, where $\langle q\rangle=\{0,\pm q,\pm 2q,\pm 3q, \ldots\}$ is the ideal generated by $q$ in the ring of integers $\mathbb{Z}$. In $\mathbb{Z}/\langle q\rangle$, additions and multiplications are carried modulo $q$. However, if $q=p^d$ for a prime $p$ and an integer $d>1$, then $\mathbb{F}_q$ is isomorphic to the quotient ring $\mathbb{F}_p[x]/\langle f(x)\rangle$, where $\mathbb{F}_p[x]$ is the ring of polynomials in the indeterminate $x$ with coefficient from $\mathbb{F}_p$, $f(x) \in \mathbb{F}_p[x]$ is an irreducible polynomial of degree $d$, and $\langle f(x)\rangle=\{a(x)f(x)|a(x)\in\mathbb{F}_p[x]\}$.

A code $\mathcal{C}$ over the alphabet $\mathbb{F}_q$ of length $n$ is a subset of the vector space $\mathbb{F}_q^n$. The elements of $\mathcal{C}$ are called codewords. The weight of a codeword of $\mathcal{C}$ is the number of non-zero coordinates in the codeword. The minimum weight of $\mathcal{C}$ is the smallest weight among all non-zero codewords. The Hamming distance between two codewords of $\mathcal{C}$ is the number of coordinates at which the two codewords differ. The minimum Hamming distance of $\mathcal{C}$, denoted $d_{\mathrm{min}}(\mathcal{C})$ or $d_{\mathrm{min}}$, is the smallest Hamming distance between each pair of unequal codewords. If $\mathcal{C}$ is a code over $\mathbb{F}_q$ of length $n$ and minimum distance $d_{\mathrm{min}}$, then the first $n-d_{\mathrm{min}}+1$ coordinates of all the codewords are distinct. Hence, there are at most $q^{n-d_{\mathrm{min}}+1}$ codewords in $\mathcal{C}$. This gives the following fundamental bound on the minimum distance of a code, which is known as the Singleton bound \cite{MacWilliams1978}.
\begin{Thm}[Singleton Bound]
Let $\mathcal{C}$ be a code of length $n$ over an alphabet of size $q$. If $\mathcal{C}$ has a minimum distance $d_{\mathrm{min}}$, then
\begin{equation*}
|\mathcal{C}|\le q^{n-d_{\mathrm{min}}+1}.
\end{equation*}
\end{Thm}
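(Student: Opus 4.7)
The plan is to use the projection (puncturing) argument sketched by the author in the paragraph immediately preceding the statement. Specifically, I would introduce the map
\[
\pi:\mathcal{C}\longrightarrow \mathbb{F}_q^{\,n-d_{\mathrm{min}}+1}
\]
sending each codeword to the tuple of its first $n-d_{\mathrm{min}}+1$ coordinates, and then show that $\pi$ is injective. Once injectivity is established, the bound $|\mathcal{C}|\le q^{n-d_{\mathrm{min}}+1}$ is immediate, since the codomain of $\pi$ has exactly that cardinality.

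The heart of the proof is the injectivity of $\pi$. I would argue by contradiction: suppose $\mathbf{c},\mathbf{c}'\in\mathcal{C}$ are distinct codewords with $\pi(\mathbf{c})=\pi(\mathbf{c}')$. Then $\mathbf{c}$ and $\mathbf{c}'$ agree on each of their first $n-d_{\mathrm{min}}+1$ coordinates, so they can differ only within the remaining $d_{\mathrm{min}}-1$ positions. Hence the Hamming distance $d(\mathbf{c},\mathbf{c}')\le d_{\mathrm{min}}-1$, which contradicts the definition of $d_{\mathrm{min}}$ as the smallest Hamming distance between any two distinct codewords of $\mathcal{C}$. Therefore $\mathbf{c}=\mathbf{c}'$, and $\pi$ is injective.

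The main obstacle is essentially none: the proof is purely combinatorial and requires no linearity assumption on $\mathcal{C}$, matching the generality of the statement. The only bookkeeping points worth flagging are the boundary cases $d_{\mathrm{min}}=1$, where the bound degenerates to the trivial $|\mathcal{C}|\le q^n$, and $d_{\mathrm{min}}=n$, where $\pi$ reduces to projection onto a single coordinate and forces all codewords to differ already in that coordinate, giving $|\mathcal{C}|\le q$. Both extremes are handled uniformly by the same projection map, so no separate treatment is needed.
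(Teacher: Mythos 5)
Your argument is exactly the projection-onto-the-first-$n-d_{\mathrm{min}}+1$-coordinates argument that the paper sketches in the sentence immediately preceding the theorem statement, made precise by the injectivity-by-contradiction step. Correct, and the same approach as the paper.
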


A code $\mathcal{C}$ is linear if it is a subspace of $\mathbb{F}_q^n$, that is, $\mathcal{C}$ is closed under addition and scalar multiplication by elements of $\mathbb{F}_q$. The dimension $k$ of a linear code is its dimension as a vector space over $\mathbb{F}_q$. Therefore, the size of a linear code of dimension $k$ is $|\mathcal{C}|=q^k$. The minimum weight and the minimum Hamming distance are equal for any linear code, and the Singleton bound takes the form
\begin{equation}
\label{singleton}
d_{\mathrm{min}}\le n-k+1.
\end{equation} 
There are other minimum distance bounds, for example, sphere-packing bound \cite{Dougherty2017} and Gilbert–Varshamov bound \cite{Gilbert1952}. For fixed alphabet, code length, and dimension, a code with the largest minimum Hamming distance is best for practical communication systems. A code that achieves the upper bound of any of the bounds on the minimum distance is called optimal. In particular, a code that achieves the Singleton bound is called maximum distance separable (MDS), while a code that achieves the sphere-packing bound is called perfect. Researchers are interested in using computer searches to find optimal codes. Their results yield codes with the best known $d_{\mathrm{min}}$ for fixed $n$ and $k$. These records can be found in \cite{Grassl}.

A linear code $\mathcal{C}$ over $\mathbb{F}_q$ of length $n$ is cyclic if it is invariant under cyclic shifts of its coordinates. Namely, 
\begin{equation*}
\left(c_0, c_1, \ldots, c_{n-2}, c_{n-1} \right)\in \mathcal{C} \Rightarrow \left(c_{n-1}, c_0, \ldots, c_{n-3}, c_{n-2} \right)\in \mathcal{C}.
\end{equation*}
The main advantage of using cyclic codes over other codes in communication systems is that cyclic codes can be efficiently encoded using shift registers. Let $\mathscr{R}$ be the quotient ring $\mathbb{F}_q[x] / \langle x^n-1 \rangle$. Elements of $\mathscr{R}$ are represented by polynomials in the indeterminate $x$ over $\mathbb{F}_q$ of degree at most $n-1$ with addition and multiplication carried out modulo $x^n-1$. In addition, $\mathscr{R}$ can be viewed as a vector space over $\mathbb{F}_q$ of dimension $n$, so $\mathscr{R}$ is isomorphic to $\mathbb{F}_q^n$. It is usual to use this isomorphism to represent codewords by polynomials rather than vectors. Precisely, the word $\left(a_0, a_1, \ldots, a_{n-2}, a_{n-1} \right)\in\mathbb{F}_q^n$ has a  polynomial representation $a_0 + a_1 x +\cdots + a_{n-2} x^{n-2} + a_{n-1} x^{n-1}$ for its correspondence in $\mathscr{R}$. In its polynomial representation, a linear code is an $\mathbb{F}_q$-subspace of $\mathscr{R}$. Since multiplication is carried modulo $x^n-1$, cyclic shift of a codeword corresponds to multiplying its polynomial representation by $x$. The ring $\mathscr{R}$ can be thought of as an $\mathscr{R}$-module. Therefore, the property of a cyclic code being invariant under cyclic shifts makes this code acts as an $\mathscr{R}$-submodule of $\mathscr{R}$, i.e., an ideal in $\mathscr{R}$. There is a one-to-one correspondence between cyclic codes over $\mathbb{F}_q$ of length $n$ and ideals of $\mathscr{R}$. This property is generalized to a broader class, the class of constacyclic codes. For a non-zero $\lambda \in \mathbb{F}_q$, a linear code $\mathcal{C}$ is $\lambda$-constacyclic if 
\begin{equation*}
\left(c_0, c_1, \ldots, c_{n-2}, c_{n-1} \right)\in \mathcal{C} \Rightarrow \left(\lambda c_{n-1}, c_0, \ldots, c_{n-3}, c_{n-2} \right)\in \mathcal{C}.
\end{equation*}
We call $\lambda$ the shift constant. A $1$-constacyclic code is simply a cyclic code. A $(-1)$-constacyclic code is called negacyclic; any negacyclic code over a ring of characteristic $2$ is cyclic. The class of constacyclic codes includes cyclic and negacyclic codes. Analogous to cyclic codes, constacyclic codes have a polynomial representation. Consequently, a $\lambda$-constacyclic code over $\mathbb{F}_q$ of length $n$ is an ideal in the quotient ring $\mathscr{R}_\lambda=\mathbb{F}_q[x] / \langle x^n-\lambda \rangle$, and there is a one-to-one correspondence between $\lambda$-constacyclic codes and ideals of $\mathscr{R}_\lambda$.

A PID is an integral domain in which each ideal is principal. That is, an integral domain ${R}$ is a PID if for every ideal $I$ of ${R}$, there is an element $a \in I$ such that $I={R} a=\{ra|r\in R\}$. The element $a$ is a generator of $I$ and we write $I=\langle a\rangle$. Examples of PIDs are the ring of integers $\mathbb{Z}$, the quotient ring $\mathbb{Z}[x]/\langle x^2+1\rangle$, and the ring of polynomials $\mathbb{F}_q[x]$. PIDs play an important role in the algebraic structure of some classes of codes. For instance, for constacyclic codes, we have the following:
\begin{Prop}
Let $\mathcal{C}$ be a $\lambda$-constacyclic code over $\mathbb{F}_q$ of length $n$ and dimension $k$. Then $\mathcal{C}=\langle g(x) \rangle$ for some $g(x)\in \mathbb{F}_q[x]$ such that $g(x)|\left(x^n-\lambda\right)$. Moreover, $k=n-\deg(g(x))$.\end{Prop}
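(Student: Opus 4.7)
The plan is to exploit the correspondence, already established in the excerpt, between $\lambda$-constacyclic codes of length $n$ over $\mathbb{F}_q$ and ideals of the quotient ring $\mathscr{R}_\lambda=\mathbb{F}_q[x]/\langle x^n-\lambda\rangle$. Since $\mathbb{F}_q[x]$ is a PID, the ideal correspondence theorem for quotient rings says that every ideal of $\mathscr{R}_\lambda$ has the form $\langle g(x)\rangle/\langle x^n-\lambda\rangle$ for some principal ideal $\langle g(x)\rangle$ of $\mathbb{F}_q[x]$ containing $\langle x^n-\lambda\rangle$. The containment $\langle x^n-\lambda\rangle\subseteq\langle g(x)\rangle$ is equivalent to the divisibility $g(x)\mid (x^n-\lambda)$, which yields the first claim. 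One may further normalize $g(x)$ to be monic, making it the unique monic divisor of $x^n-\lambda$ whose coset generates $\mathcal{C}$.

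For the dimension formula I would produce an explicit $\mathbb{F}_q$-basis. Writing $x^n-\lambda=g(x)h(x)$ with $r=\deg g(x)$ (so $\deg h(x)=n-r$), the plan is to show that $\mathcal{B}=\{g(x),\,xg(x),\,\ldots,\,x^{n-r-1}g(x)\}$ is a basis of $\mathcal{C}$. Linear independence is immediate from the observation that the degrees of these polynomials are pairwise distinct and all strictly less than $n$, so no reduction modulo $x^n-\lambda$ is required and a non-trivial $\mathbb{F}_q$-linear relation would force a non-zero polynomial of degree $<n$ to be $0$. For spanning, take any $c(x)\in\mathcal{C}$, represented by a polynomial of degree less than $n$; since $\mathcal{C}=\langle g(x)\rangle$ in $\mathscr{R}_\lambda$ we have $c(x)\equiv a(x)g(x)\pmod{x^n-\lambda}$ for some $a(x)\in\mathbb{F}_q[x]$. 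Because $x^n-\lambda=g(x)h(x)$, any multiple of $x^n-\lambda$ is already a multiple of $g(x)$ in $\mathbb{F}_q[x]$, so $c(x)=a'(x)g(x)$ as polynomials for some $a'(x)$, and the constraint $\deg c(x)<n$ forces $\deg a'(x)<n-r$. Expanding $a'(x)$ in the monomial basis realises $c(x)$ as an $\mathbb{F}_q$-linear combination of $\mathcal{B}$.

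Counting yields $k=\dim_{\mathbb{F}_q}\mathcal{C}=|\mathcal{B}|=n-r=n-\deg(g(x))$, as desired. I expect the only slightly delicate step to be the passage from the congruence $c(x)\equiv a(x)g(x)\pmod{x^n-\lambda}$ to the exact equality $c(x)=a'(x)g(x)$ in $\mathbb{F}_q[x]$; this is precisely where the divisibility $g(x)\mid(x^n-\lambda)$ gets used in the dimension argument, and it is what prevents one from naively reading off the dimension from the mere existence of a generator.
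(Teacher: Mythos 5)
Your proposal is correct and follows essentially the same route as the paper for the first claim: pass to the ideal correspondence between $\mathscr{R}_\lambda$ and ideals of $\mathbb{F}_q[x]$ containing $\langle x^n-\lambda\rangle$, invoke that $\mathbb{F}_q[x]$ is a PID to get $\mathcal{C}=\langle g(x)\rangle\supseteq\langle x^n-\lambda\rangle$, and convert containment into divisibility. For the dimension, the paper is terse --- it simply records that $\dim_{\mathbb{F}_q}\bigl(\langle g(x)\rangle/\langle x^n-\lambda\rangle\bigr)=\deg a(x)$ where $a(x)g(x)=x^n-\lambda$ --- whereas you flesh this out by exhibiting the explicit basis $\{g,xg,\ldots,x^{n-r-1}g\}$ and verifying independence and spanning; your remark that the divisibility $g\mid(x^n-\lambda)$ is exactly what upgrades the congruence $c\equiv ag\pmod{x^n-\lambda}$ to an actual polynomial factorization $c=a'g$ is the right point to highlight, and it is precisely the content compressed in the paper's one-line dimension count.
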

\begin{proof}
We know that $\mathcal{C}$ is an ideal in $\mathscr{R}_\lambda$. However, the projection map $\pi: \mathbb{F}_q[x] \to \mathscr{R}_\lambda$ defines a bijection between ideals of $\mathbb{F}_q[x]$ that contain $\langle x^n-\lambda \rangle$ and ideals of $\mathscr{R}_\lambda$. Then, $\mathcal{C}$ can be shown as an ideal of $\mathbb{F}_q[x]$ containing $\langle x^n-\lambda \rangle$. Since $\mathbb{F}_q[x]$ is a PID, $\mathcal{C} =\langle g(x) \rangle \supseteq \langle x^n-\lambda \rangle$ for some $g(x)\in \mathbb{F}_q[x]$. But $\langle g(x) \rangle \supseteq \langle x^n-\lambda \rangle$ if and only if $g(x)|\left( x^n-\lambda\right)$, i.e., $a(x)g(x)= x^n-\lambda$ for some $a(x) \in \mathbb{F}_q[x]$. We call $a(x)$ the check polynomial of $\mathcal{C}$. The dimension of $\mathcal{C}$ is the dimension of $\langle g(x)\rangle / \langle x^n-\lambda \rangle$ as an $\mathbb{F}_q$-vector space. Therefore, $k=\deg(a(x))=n-\deg(g(x))$. 
\end{proof}

The polynomial $g(x)$ is not uniquely specified; any $g(x)$ such that $\mathcal{C}=\langle g(x) \rangle$ is called a generator polynomial of $\mathcal{C}$. However, restricting $g(x)$ to be monic ensures its uniqueness. Thus, one can uniquely define $g(x)$ to be the least degree monic polynomial in $\mathcal{C}$ as an ideal in $\mathbb{F}_q[x]$. So far we have shown that a $\lambda$-constacyclic code $\mathcal{C}$ is completely determined by the monic polynomial $g(x)\in\mathcal{C}$ that divides $x^n-\lambda$ and generates $\mathcal{C}$ as an ideal in the PID $\mathbb{F}_q[x]$. Consequently, all $\lambda$-constacyclic codes over $\mathbb{F}_q$ of length $n$ can be enumerated by decomposing $ x^n-\lambda $ into irreducible factors in $\mathbb{F}_q[x]$. We aim to give the corresponding unique representation in the broader class of multi-twisted codes. 

\begin{exmp}
Let $\mathbb{F}_9=\{0,1,2,\theta,2\theta,1+\theta,1+2\theta,2+\theta,2+2\theta\}$, where $\theta^2+2\theta+2=0$. We construct a $2$-constacyclic code $\mathcal{C}$ over $\mathbb{F}_9$ of length $5$. In $\mathbb{F}_9$, $2=-1$, then $\mathcal{C}$ is negacyclic as well. Factorizing $x^5-2$ to irreducible factors yields $x^5-2=\left(x+1 \right)\left(x^2+2\theta x+1 \right)\left(x^2+(\theta+2)x+1 \right)$. Different $2$-constacyclic codes over $\mathbb{F}_9$ of length $5$ are obtained through different choices of generator polynomials from the factorization of $x^5-2$ in $\mathbb{F}_9[x]$. Let $\mathcal{C}=\langle g(x) \rangle$, where $g(x)= x^2+2\theta x+1$. The dimension of $\mathcal{C}$ is $k=n-\deg\left(g(x)\right)=3$. The check polynomial of $\mathcal{C}$ is $a(x)=\frac{x^5-2}{g(x)}=\left(x+1 \right)\left(x^2+(\theta+2)x+1 \right)$. By listing all codewords of $\mathcal{C}$, one can observe that the codewords with the fewest number of monomials in their polynomial representation contain three monomials. Therefore, $d_{\mathrm{min}}=3$. From \eqref{singleton}, $\mathcal{C}$ achieves the Singleton bound, and is therefore MDS.\end{exmp}

Constacyclic codes generalize cyclic codes by generalizing the shift constant $\lambda$. A different generalization of cyclic codes is obtained by generalizing the shift index, and this leads to the class of QC codes.
\begin{Def}
A linear code $\mathcal{C}$ over $\mathbb{F}_q$ of length $n$ is $\ell$-QC if it is invariant under cyclic shift by $\ell$ coordinates. Namely,
\begin{equation}
\label{shift1}
\left( c_{1}, c_{2}, \ldots, c_{n}\right) \in\mathcal{C} \Rightarrow \left( c_{n-\ell+1}, \dots, c_{n}, c_{1}, c_{2},\ldots, c_{n-\ell}\right) \in\mathcal{C}.
\end{equation}\end{Def}

Define the index of a QC code to be the smallest positive integer $\ell$ satisfying \eqref{shift1}. Hereinafter, the index is denoted by $\ell$. 
\begin{Prop}
\label{Prop_index}
The index $\ell$ of a QC code divides the code length $n$.\end{Prop}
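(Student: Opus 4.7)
The plan is to use the division algorithm together with the fact that the set of integers $m$ for which $\mathcal{C}$ is invariant under the cyclic shift $T^m$ by $m$ coordinates is closed under subtraction modulo $n$. Let $T$ denote the cyclic shift by one coordinate; then $T$ is a bijection on $\mathbb{F}_q^n$ and $T^n$ is the identity. Since $\mathcal{C}$ is $\ell$-QC, we have $T^\ell(\mathcal{C})\subseteq \mathcal{C}$, and because $\mathcal{C}$ is finite and $T^\ell$ is a bijection, in fact $T^\ell(\mathcal{C})=\mathcal{C}$.

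Writing $n=q\ell+r$ with $0\le r<\ell$ by the division algorithm, I would argue as follows. Applying the relation $T^\ell(\mathcal{C})=\mathcal{C}$ exactly $q$ times yields $T^{q\ell}(\mathcal{C})=\mathcal{C}$, and the identity $T^n=\mathrm{id}$ gives $T^n(\mathcal{C})=\mathcal{C}$ trivially. Composing the first with the inverse of the second (both being bijective self-maps of $\mathcal{C}$) shows that $T^{n-q\ell}(\mathcal{C})=T^r(\mathcal{C})=\mathcal{C}$. Hence $\mathcal{C}$ is invariant under a cyclic shift by $r$ coordinates.

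Now the minimality of $\ell$ finishes the proof. Since $\ell$ was defined as the \emph{smallest} positive integer satisfying \eqref{shift1}, and since $0\le r<\ell$, the only possibility is $r=0$, so $\ell\mid n$. I do not anticipate a real obstacle here; the only subtlety worth flagging is the passage from set inclusion $T^\ell(\mathcal{C})\subseteq\mathcal{C}$ to set equality, which relies on $\mathcal{C}$ being finite and $T^\ell$ being injective, and the implicit use of this equality to pass from invariance under $q\ell$ and $n$ to invariance under $r=n-q\ell$.
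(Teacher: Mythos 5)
Your proof is correct and follows essentially the same strategy as the paper's: apply the division algorithm to $n$ and $\ell$, observe that invariance under $\ell$ shifts together with $T^n=\mathrm{id}$ forces invariance under the remainder, and then invoke minimality of $\ell$ to conclude the remainder is zero. The paper phrases the division slightly differently, choosing $m$ with $0\le m\ell-n<\ell$ (i.e., $m=\lceil n/\ell\rceil$) rather than $n=q\ell+r$, but the two formulations are interchangeable. You add a worthwhile clarification the paper leaves implicit: the upgrade from $T^\ell(\mathcal{C})\subseteq\mathcal{C}$ to $T^\ell(\mathcal{C})=\mathcal{C}$ via finiteness and injectivity, which is what licenses running the shift argument backwards.
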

\begin{proof}
Let $m$ be the unique positive integer defined such that $0\le m\ell -n  < \ell$. Since $\mathcal{C}$ is invariant under $\ell$ shifts of coordinates, it is invariant under $m\ell -n$ shifts. Then $m\ell -n  =0$ because $m\ell -n  < \ell$ and $\ell$ is the smallest integer satisfying \eqref{shift1}. Thus $n=m\ell$ and $\ell|n$. 
\end{proof}

The integer $m=n/\ell$ in the proof of Proposition \ref{Prop_index} is called the co-index of $\mathcal{C}$. A codeword of an $\ell$-QC code $\mathcal{C}$ of length $n$ and co-index $m$ can be partitioned as follows:
\begin{equation}
\label{shift2}
\mathbf{c}=\left( c_{0,1}, c_{0,2}, \ldots, c_{0,\ell}, c_{1,1}, c_{1,2}, \ldots, c_{1,\ell}, \ldots, c_{m-1,1}, c_{m-1,2}, \ldots, c_{m-1,\ell} \right).
\end{equation}
A linear code $\mathcal{C}$ is QC if and only if, for every $\mathbf{c}\in\mathcal{C}$ in the form of \eqref{shift2}, $\mathcal{C}$ contains the codeword
\begin{equation*}
\left( c_{m-1,1}, c_{m-1,2}, \ldots, c_{m-1,\ell}, c_{0,1}, c_{0,2}, \ldots, c_{0,\ell}, \ldots, c_{m-2,1}, c_{m-2,2}, \ldots, c_{m-2,\ell} \right).
\end{equation*}
Let $T_\ell$ be the automorphism of $\mathbb{F}_q^n$ that corresponds to shifting by $\ell$ coordinates. That is, $T_\ell:\left( a_{1}, a_{2}, \ldots, a_{n}\right) \mapsto \left( a_{n-\ell+1}, \dots, a_{n}, a_{1}, a_{2},\ldots, a_{n-\ell}\right)$ for every $\left( a_{1}, a_{2}, \ldots, a_{n}\right)\in\mathbb{F}_q^n$. In the standard basis of $\mathbb{F}_q^n$, $T_\ell$ has the $n\times n$ matrix representation
\begin{equation*}
\begin{pmatrix}
\mathbf{0}_{\ell\times(n-\ell)} & \mathbf{I}_{\ell\times\ell}\\
\mathbf{I}_{(n-\ell)\times(n-\ell)} & \mathbf{0}_{(n-\ell)\times\ell}
\end{pmatrix},
\end{equation*} 
where $\mathbf{0}_{i\times j}$ and $\mathbf{I}_{i}$ are the zero matrix of size $i\times j$ and the identity matrix of size $i\times i$, respectively. An alternative definition of an $\ell$-QC code is a linear subspace of $\mathbb{F}_q^n$ that is invariant under $T_\ell$. This invariance is used to give QC codes the structure of module over PID. Although QC codes generalize cyclic codes because cyclic codes are QC with $\ell=1$, QC codes do not generalize constacyclic codes. One can find a class containing QC and constacyclic codes as subclasses by generalizing the linear operator $T_\ell$. For $0\ne \lambda\in\mathbb{F}_q$, let $T_{(\ell,\lambda)}$ be the automorphism of $\mathbb{F}_q^n$ such that $T_{(\ell,\lambda)}:\left( a_{1}, a_{2}, \ldots, a_{n}\right) \mapsto \left(\lambda a_{n-\ell+1}, \dots, \lambda a_{n}, a_{1}, a_{2},\ldots, a_{n-\ell}\right)$. In the standard basis of $\mathbb{F}_q^n$, $T_{(\ell,\lambda)}$ has the $n\times n$ matrix representation 
\begin{equation}
\label{M_ell_lambda}
\mathbf{M}=\begin{pmatrix}
\mathbf{0}_{\ell\times(n-\ell)} & \lambda\mathbf{I}_{\ell}\\
\mathbf{I}_{(n-\ell)} & \mathbf{0}_{(n-\ell)\times\ell}
\end{pmatrix}.
\end{equation} 
Obviously $T_\ell=T_{(\ell,1)}$, and this generalizes QC codes to QT codes. From \eqref{M_ell_lambda}, we have $\mathbf{M}^m=\lambda \mathbf{I}_{n}$, hence $\frac{1}{\lambda}T_{(\ell,\lambda)}^m$ is the identity map on $\mathbb{F}_q^n$. Then $\mathbf{M}^{mt}= \mathbf{I}_{n}$ and $T_{(\ell,\lambda)}^{mt}$ is the identity map, where $t$ is the multiplicative order of $\lambda$, i.e., $t$ is the least positive integer such that $\lambda^t=1$.

\begin{Def}
For a non-zero $\lambda \in\mathbb{F}_q$, a linear code $\mathcal{C}$ is called $(\ell,\lambda)$-QT if $\mathcal{C}$ is invariant under $T_{(\ell,\lambda)}$. That is,
\begin{equation}
\label{shift3}
\left( c_{1}, c_{2}, \ldots, c_{n}\right) \in\mathcal{C} \Rightarrow \left( \lambda c_{n-\ell+1}, \dots, \lambda c_{n}, c_{1}, c_{2},\ldots, c_{n-\ell}\right) \in\mathcal{C}.
\end{equation}
The index of $\mathcal{C}$ is the smallest positive integer $\ell$ that satisfies \eqref{shift3}. We call $\lambda$ the shift constant of $\mathcal{C}$.\end{Def}

Similar to QC codes, the index of a QT code divides its length. The co-index of $\mathcal{C}$ is the integer $m=n/\ell$. By partitioning the codewords of a linear code $\mathcal{C}$ of length $m\ell$ to $m$ blocks of length $\ell$ each, then $\mathcal{C}$ is $(\ell,\lambda)$-QT if and only if 
\begin{equation*}
\left( \lambda c_{m-1,1},\!  \lambda c_{m-1,2}, \ldots, \! \lambda c_{m-1,\ell}, c_{0,1}, c_{0,2}, \ldots, c_{0,\ell}, \ldots, c_{m-2,1}, c_{m-2,2}, \ldots, c_{m-2,\ell} \right)
\end{equation*}
is a codeword in $\mathcal{C}$ for every codeword in the form of \eqref{shift2}. The class of QT codes generalizes QC and constacyclic codes because an $\ell$-QC code is $(\ell,1)$-QT, a $\lambda$-constacyclic code is $(1,\lambda)$-QT, and a cyclic code is $(1,1)$-QT.

\begin{Thm}
\label{Cond_QT}
Let $\mathcal{C}$ be a linear code over $\mathbb{F}_q$ of length $n$, dimension $k$, and a $k\times n$ generator matrix $\mathbf{G}$. Then, $\mathcal{C}$ is $(\ell,\lambda)$-QT if and only if the rank of the block matrix 
\begin{equation}
\label{Condition_QT}
\begin{pmatrix}
\mathbf{G}\\
\mathbf{G} \mathbf{M}^t
\end{pmatrix}
\end{equation} 
is $k$, where $^t$ stands for matrix transpose and $\mathbf{M}$ is given by \eqref{M_ell_lambda}.\end{Thm}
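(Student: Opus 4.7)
The plan is to translate the invariance condition defining a QT code into a linear-algebra statement about the row spaces of $\mathbf{G}$ and $\mathbf{G}\mathbf{M}^t$, and then read off the rank condition from a dimension count.

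First I would observe that, interpreting codewords as row vectors, the linear map $T_{(\ell,\lambda)}$ acts on $\mathbb{F}_q^n$ by right multiplication by $\mathbf{M}^t$: a direct computation from \eqref{M_ell_lambda} shows that for any row $\mathbf{a} = (a_1,\dots,a_n)$, the product $\mathbf{a}\mathbf{M}^t$ is exactly $(\lambda a_{n-\ell+1},\dots,\lambda a_n,a_1,\dots,a_{n-\ell}) = T_{(\ell,\lambda)}(\mathbf{a})$. Since the rows of $\mathbf{G}$ form a basis of $\mathcal{C}$, the rows of $\mathbf{G}\mathbf{M}^t$ form a basis of the image $T_{(\ell,\lambda)}(\mathcal{C})$, which is again $k$-dimensional because $\mathbf{M}$ is invertible (as $\lambda \ne 0$).

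Next I would rephrase the QT property in terms of row spaces. By definition, $\mathcal{C}$ is $(\ell,\lambda)$-QT iff $T_{(\ell,\lambda)}(\mathcal{C})\subseteq \mathcal{C}$, and as noted above, invertibility of $T_{(\ell,\lambda)}$ promotes this containment to equality by dimension. This containment is equivalent to saying that every row of $\mathbf{G}\mathbf{M}^t$ lies in the row space of $\mathbf{G}$, which in turn is equivalent to the row space of the stacked matrix $\begin{pmatrix}\mathbf{G}\\ \mathbf{G}\mathbf{M}^t\end{pmatrix}$ coinciding with the row space of $\mathbf{G}$. For the forward direction I conclude that the rank of the block matrix equals $\dim \mathcal{C}=k$.

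For the converse, suppose the rank of the block matrix is $k$. The top block $\mathbf{G}$ already contributes $k$ linearly independent rows, so these rows span the row space of the entire block matrix. Hence each row of $\mathbf{G}\mathbf{M}^t$ is an $\mathbb{F}_q$-linear combination of rows of $\mathbf{G}$, giving $T_{(\ell,\lambda)}(\mathcal{C})\subseteq \mathcal{C}$ and therefore $(\ell,\lambda)$-QT invariance of $\mathcal{C}$.

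The only mildly delicate point is tracking whether $T_{(\ell,\lambda)}$ corresponds to left multiplication by $\mathbf{M}$ or right multiplication by $\mathbf{M}^t$; the transpose in the statement is precisely compensating for using row-vector conventions on $\mathbf{G}$. Once that bookkeeping is correct, the rest is a routine dimension argument, and no further obstacle arises.
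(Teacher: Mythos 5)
Your proof is correct and follows essentially the same approach as the paper: identify the action of $T_{(\ell,\lambda)}$ on row vectors with right multiplication by $\mathbf{M}^t$, then read off the equivalence from a rank/row-space argument. The paper's forward direction is a bit more roundabout (it fixes an arbitrary codeword $\mathbf{c}=\mathbf{a}\mathbf{G}$ and sandwiches the rank of $\bigl(\begin{smallmatrix}\mathbf{G}\\ \mathbf{c}\mathbf{M}^t\end{smallmatrix}\bigr)$ between $k$ and $k$ using a block-matrix factorization), whereas you work directly with row spaces and use linearity once the generators of $\mathbf{G}\mathbf{M}^t$ are seen to lie in $\mathcal{C}$; this is a streamlining of exposition, not a different method.
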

\begin{proof}
{Assume $\mathbf{G}$ satisfies  
\begin{equation*} \text{rank}\begin{pmatrix}
\mathbf{G}\\
\mathbf{G} \mathbf{M}^t
\end{pmatrix} =k.
\end{equation*} 
For any codeword $\mathbf{c}\in\mathcal{C}$, there exists $\mathbf{a}\in\mathbb{F}_q^k$ such that $\mathbf{c}=\mathbf{a}\mathbf{G}$. Since the dimension of $\mathcal{C}$ is $k$, $\text{rank}(\mathbf{G})=k$ and 
\begin{equation*}
\text{rank}\begin{pmatrix}
\mathbf{G}\\
\mathbf{c} \mathbf{M}^t
\end{pmatrix} \ge k.
\end{equation*} 
But
\begin{equation*}
\text{rank}\begin{pmatrix}
\mathbf{G}\\
\mathbf{c} \mathbf{M}^t
\end{pmatrix}=\text{rank}\begin{pmatrix}
\mathbf{G}\\
\mathbf{a}\mathbf{G} \mathbf{M}^t
\end{pmatrix} = \text{rank}\left(\begin{pmatrix}
\mathbf{I}_{k} & \mathbf{0}_{k \times k}\\
\mathbf{0}_{1\times k} & \mathbf{a}
\end{pmatrix}\begin{pmatrix}
\mathbf{G}\\
\mathbf{G} \mathbf{M}^t
\end{pmatrix}\right) \le \text{rank}\begin{pmatrix}
\mathbf{G}\\
\mathbf{G} \mathbf{M}^t
\end{pmatrix} =k.
\end{equation*} 
Then
\begin{equation*}
\text{rank}\begin{pmatrix}
\mathbf{G}\\
\mathbf{c} \mathbf{M}^t
\end{pmatrix} = k.
\end{equation*}
This shows that $\mathbf{c}\mathbf{M}^t \in\mathcal{C}$ and $\mathcal{C}$ is $(\ell,\lambda)$-QT because it is invariant under $T_{\ell,\lambda}$. 

Conversely, if $\mathcal{C}$ is $(\ell,\lambda)$-QT, then rows of $\mathbf{G} \mathbf{M}^t$ are codewords and 
\begin{equation*} \text{rank}\begin{pmatrix}
\mathbf{G}\\
\mathbf{G} \mathbf{M}^t
\end{pmatrix} =k.
\end{equation*} }
\end{proof}

\begin{exmp}
\label{FirstQTCode}
Let $\mathcal{C}$ be the linear code over $\mathbb{F}_4$ of length $n=9$, dimension $k=6$, and systematic generator matrix
\begin{equation}
\label{GMinExample}
\mathbf{G}=\begin{pmatrix}
   1&0&0&0&0&0&0&1+\theta&1\\
   0&1&0&0&0&0&1+\theta&\theta&1\\
   0&0&1&0&0&0&1&0&\theta\\
   0&0&0&1&0&0&1&\theta&\theta\\
   0&0&0&0&1&0&0&\theta&1\\
   0&0&0&0&0&1&1+\theta&1&1+\theta
\end{pmatrix},
\end{equation}
where $\theta^2+\theta+1=0$. By Theorem \ref{Cond_QT}, $\mathcal{C}$ is $(3,\theta)$-QT code because 
\begin{equation*}
\label{Condition_QT}
\text{rank}\begin{pmatrix}
\mathbf{G}\\
\mathbf{G} \mathbf{M}^t
\end{pmatrix}=\text{rank}\begin{pmatrix}
   1&0&0&0&0&0&0&1+\theta&1\\
   0&1&0&0&0&0&1+\theta&\theta&1\\
   0&0&1&0&0&0&1&0&\theta\\
   0&0&0&1&0&0&1&\theta&\theta\\
   0&0&0&0&1&0&0&\theta&1\\
   0&0&0&0&0&1&1+\theta&1&1+\theta\\
   0&1&\theta&1&0&0&0&0&0\\
   1&1+\theta&\theta&0&1&0&0&0&0\\
   \theta&0&1+\theta&0&0&1&0&0&0\\
   \theta&1+\theta&1+\theta&0&0&0&1&0&0\\
   0&1+\theta&\theta&0&0&0&0&1&0\\
   1&\theta&1&0&0&0&0&0&1
\end{pmatrix}=6.\end{equation*} 
In fact, $\mathcal{C}$ is optimal because its minimum distance is $d_{\mathrm{min}}=3$, which is the best known minimum distance for a linear code over $\mathbb{F}_4$ of length $9$ and dimension $6$, see \cite{Grassl}.\end{exmp}

The linear transformation $T_{\ell,\lambda}$ is an automorphism of the vector space $\mathbb{F}_q^{m\ell}$. Define an $\mathbb{F}_q[x]$-module structure on $\mathbb{F}_q^{m\ell}$ such that the action of $x$ on $\mathbb{F}_q^{m\ell}$ is the action of $T_{\ell,\lambda}$. Specifically, the action of $f(x)=\sum_{h}f_h x^h\in\mathbb{F}_q[x]$ on $\mathbf{a}\in\mathbb{F}_q^{m\ell}$ is defined by $f(x) \mathbf{a}=\sum_h f_h T_{\ell,\lambda}^h \left( \mathbf{a}\right)$. Let $\mathcal{C}$ be an $(\ell,\lambda)$-QT code over $\mathbb{F}_q$ of length $m\ell$. Since $\mathcal{C}$ is a $T_{\ell,\lambda}$-invariant $\mathbb{F}_q$-subspace of $\mathbb{F}_q^{m\ell}$, it is an $\mathbb{F}_q[x]$-submodule of $\mathbb{F}_q^{m\ell}$. The opposite is also true; meaning that an $\mathbb{F}_q[x]$-submodule of $\mathbb{F}_q^{m\ell}$ is an $(\ell,\lambda)$-QT code over $\mathbb{F}_q$ of length $m\ell$. In addition, the following result shows that an $(\ell,\lambda)$-QT code corresponds to an $\mathbb{F}_q[x]$-submodule of $\mathscr{R}_\lambda^\ell$, where $\mathscr{R}_\lambda=\mathbb{F}_q[x]/\langle x^m-\lambda \rangle$. This correspondence leads to the definition of the polynomial representation of $(\ell,\lambda)$-QT codes.

\begin{Thm}
\label{QT_Corresp}
There is a one-to-one correspondence between $(\ell,\lambda)$-QT codes over $\mathbb{F}_q$ of length $m\ell$ and $\mathbb{F}_q[x]$-submodules of $\mathscr{R}_\lambda^\ell$, where $\mathscr{R}_\lambda=\mathbb{F}_q[x]/\langle x^m-\lambda \rangle$.\end{Thm}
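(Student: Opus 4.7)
The plan is to construct an explicit $\mathbb{F}_q[x]$-module isomorphism between $\mathbb{F}_q^{m\ell}$, equipped with the action of $x$ as $T_{\ell,\lambda}$, and $\mathscr{R}_\lambda^\ell$, then transport the correspondence through submodules. The excerpt has already shown that $(\ell,\lambda)$-QT codes of length $m\ell$ are precisely the $\mathbb{F}_q[x]$-submodules of $\mathbb{F}_q^{m\ell}$ under this action, so only the identification with $\mathscr{R}_\lambda^\ell$ needs justification.

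First, I would define the map $\phi:\mathbb{F}_q^{m\ell}\to\mathscr{R}_\lambda^\ell$ using the block partition in \eqref{shift2}: for $\mathbf{a}=(a_{0,1},\ldots,a_{0,\ell},a_{1,1},\ldots,a_{m-1,\ell})$, set
\begin{equation*}
\phi(\mathbf{a})=\bigl(a_1(x),a_2(x),\ldots,a_\ell(x)\bigr),\qquad a_j(x)=\sum_{i=0}^{m-1}a_{i,j}x^i\in\mathscr{R}_\lambda.
\end{equation*}
Reading off coefficients block by block defines the inverse, so $\phi$ is an $\mathbb{F}_q$-linear bijection.

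Next I would verify that $\phi$ intertwines the two $\mathbb{F}_q[x]$-actions, which reduces to checking compatibility with multiplication by $x$. On the left, $x\cdot\mathbf{a}=T_{\ell,\lambda}(\mathbf{a})$ has $j$-th block polynomial $\lambda a_{m-1,j}+a_{0,j}x+\cdots+a_{m-2,j}x^{m-1}$. On the right, in $\mathscr{R}_\lambda$ we have $x\cdot a_j(x)=a_{0,j}x+a_{1,j}x^2+\cdots+a_{m-1,j}x^m$, and since $x^m\equiv\lambda\pmod{x^m-\lambda}$ this simplifies to the same expression. Thus $\phi(x\cdot\mathbf{a})=x\cdot\phi(\mathbf{a})$, and by $\mathbb{F}_q$-linearity the identity extends to every $f(x)\in\mathbb{F}_q[x]$, so $\phi$ is an $\mathbb{F}_q[x]$-module isomorphism.

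Finally, since $\phi$ is an isomorphism of $\mathbb{F}_q[x]$-modules, it induces an inclusion-preserving bijection between $\mathbb{F}_q[x]$-submodules of $\mathbb{F}_q^{m\ell}$ and $\mathbb{F}_q[x]$-submodules of $\mathscr{R}_\lambda^\ell$. Combined with the earlier identification of $(\ell,\lambda)$-QT codes with $\mathbb{F}_q[x]$-submodules of $\mathbb{F}_q^{m\ell}$, this yields the desired one-to-one correspondence. There is no genuine obstacle here; the only mildly delicate step is keeping the double-index bookkeeping in \eqref{shift2} consistent with the convention that the shift $T_{\ell,\lambda}$ picks up the factor $\lambda$ on the wrapped block, which is exactly what makes the reduction $x^m=\lambda$ in $\mathscr{R}_\lambda$ the right choice of quotient.
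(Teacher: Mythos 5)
Your proof is correct and takes essentially the same approach as the paper: you define the same block-by-block polynomial map $\phi$, verify the same intertwining identity $\phi\circ T_{\ell,\lambda}=\psi\circ\phi$ (which the paper packages as a commutative diagram), and conclude that $\phi$ is an $\mathbb{F}_q[x]$-module isomorphism inducing the bijection on submodules.
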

\begin{proof}
Let $\phi: \mathbb{F}_q^{m\ell} \rightarrow \mathscr{R}_\lambda^\ell$ be the $\mathbb{F}_q$-vector space isomorphism such that
\begin{eqnarray*}
\phi:\mathbf{a}=\left( a_{0,1}, a_{0,2}, \ldots, a_{0,\ell}, a_{1,1}, a_{1,2}, \ldots, a_{1,\ell}, \ldots, a_{m-1,1}, a_{m-1,2}, \ldots, a_{m-1,\ell} \right)\\
\mapsto \mathbf{a}(x)=\left( a_1(x), a_2(x), \ldots, a_\ell(x)\right),
\end{eqnarray*}
where $a_j(x)=a_{0,j}+a_{1,j} x+a_{2,j} x^2+\cdots +a_{m-1,j}x^{m-1}\in \mathscr{R}_\lambda$ for $1\le j \le \ell$. 
Consider the following diagram of $\mathbb{F}_q$-vector space isomorphisms
\begin{equation}
\label{Commut_diagram}
\begin{tikzcd}[swap]
    \mathbb{F}_q^{m\ell} \arrow{r}[swap]{\phi}{} \arrow{d}{T_{\ell,\lambda}}  & \mathscr{R}_\lambda^\ell \arrow{d}[swap]{\psi}{} \\  
    \mathbb{F}_q^{m\ell} \arrow{r}[swap]{}{\phi}   & \mathscr{R}_\lambda^\ell
\end{tikzcd}
\end{equation}
where $\psi:  \left( a_1(x), a_2(x), \ldots, a_\ell(x)\right) \mapsto \left(x a_1(x),x a_2(x), \ldots,x a_\ell(x)\right)$. Diagram \eqref{Commut_diagram} is commutative because
\begin{align*}
\phi\!\circ\! T_{\ell,\lambda}(\mathbf{a})\! &=\phi\left( \lambda a_{m-1,1}, \lambda a_{m-1,2}, \ldots, \lambda a_{m-1,\ell}, a_{0,1}, a_{0,2}, \ldots, a_{0,\ell},\ldots,\right. \\ &\qquad \qquad \qquad \qquad \qquad \qquad \qquad \qquad\qquad  \left. a_{m-2,1}, a_{m-2,2}, \ldots, a_{m-2,\ell} \right) \\
&=\left( \lambda a_{m-1,1}+a_{0,1} x+a_{1,1} x^2+\cdots +a_{m-2,1}x^{m-1},\ldots, \right. \\ & \ \ \ \quad \quad \qquad \qquad \qquad\qquad   \left. \lambda a_{m-1,\ell}+a_{0,\ell} x+a_{1,\ell} x^2+\cdots +a_{m-2,\ell}x^{m-1}\right) \\
&=\left(  a_{m-1,1}x^m+a_{0,1} x+a_{1,1} x^2+\cdots +a_{m-2,1}x^{m-1}, \ldots, \right. \\ &\qquad \qquad \qquad \qquad\qquad   \left. a_{m-1,\ell}x^m+a_{0,\ell} x+a_{1,\ell} x^2+\cdots +a_{m-2,\ell}x^{m-1}\right) \\
&=x\!\left(\!  a_{0,1} +a_{1,1} x+\cdots +a_{m-1,1}x^{m-1}\!, \ldots,  a_{0,\ell}+a_{1,\ell} x+\cdots +a_{m-1,\ell}x^{m-1}\!\right) \\
&=x\phi\left(\mathbf{a}\right)=\psi\circ\phi\left(\mathbf{a}\right).
\end{align*}
For any $f(x)=\sum_{h}f_h x^h\in\mathbb{F}_q[x]$, we have 
\begin{equation*}
f(x)\phi\left(\mathbf{a}\right)=\sum_{h}f_h x^h \phi\left(\mathbf{a}\right)= \sum_{h}f_h \psi^h\circ \phi\left(\mathbf{a}\right)= \sum_{h}f_h \phi\circ T_{\ell,\lambda}^h\left(\mathbf{a}\right)=\phi\left( f(T_{\ell,\lambda})(\mathbf{a})\right).
\end{equation*}
Then $\phi$ is an $\mathbb{F}_q[x]$-module isomorphism, where the module structure of $\mathbb{F}_q^{m\ell}$ is described in the paragraph before Theorem \ref{QT_Corresp}. This isomorphism determines the correspondence between submodules of $\mathscr{R}_\lambda^\ell$ and submodules of $\mathbb{F}_q^{m\ell}$, i.e., $(\ell,\lambda)$-QT codes over $\mathbb{F}_q$ of length $m\ell$.
\end{proof}

Let $\mathcal{C}$ be an $(\ell,\lambda)$-QT code over $\mathbb{F}_q$ of length $m\ell$. Then $\mathcal{C}$ is an $\mathbb{F}_q[x]$-submodule of $\mathbb{F}_q^{m\ell}$, and $f(x)\in\mathbb{F}_q[x]$ acts on a codeword $\mathbf{c}\in\mathcal{C}$ by $f(T_{\ell,\lambda}) \left(\mathbf{c}\right)$. Since QT codes generalize the constacyclic codes, we aim to provide a polynomial representation for $(\ell,\lambda)$-QT codes similar to that of cyclic and constacyclic codes. The polynomial representation of an $(\ell,\lambda)$-QT code $\mathcal{C}\subseteq \mathbb{F}_q^{m\ell}$ is $\phi\left(\mathcal{C}\right)$, its image under the isomorphism defined in the proof of Theorem \ref{QT_Corresp}. Specifically, the codeword $\mathbf{c}$ given by \eqref{shift2} is represented by the polynomial vector
\begin{equation}
\label{Poly_rep}
\begin{split}
\mathbf{c}(x)=&\left( c_{0,1}+c_{1,1} x+c_{2,1} x^2+\cdots +c_{m-1,1}x^{m-1}, \right.\\
&\qquad\qquad\qquad c_{0,2}+c_{1,2} x+c_{2,2} x^2+\cdots +c_{m-1,2}x^{m-1}, \ldots,\\ 
&\qquad\qquad\qquad\qquad\qquad\quad \left. c_{0,\ell}+c_{1,\ell} x+c_{2,\ell} x^2+\cdots +c_{m-1,\ell}x^{m-1}\right)\in \mathscr{R}_\lambda^\ell.
\end{split}
\end{equation}
We do not distinguish between the code and its polynomial representation. That is, an $(\ell,\lambda)$-QT code $\mathcal{C}$ over $\mathbb{F}_q$ of length $m\ell$ is an $\mathbb{F}_q[x]$-submodule of $\mathscr{R}_\lambda^\ell$. Hence, $\mathcal{C}$ is a torsion $\mathbb{F}_q[x]$-module \cite{Roman2008-wr}. In fact, $\left(x^m-\lambda\right)$ annihilates all codewords $\mathbf{c}(x)\in\mathcal{C}$, i.e., $\left(x^m-\lambda\right) \mathbf{c}(x)=\mathbf{0}$. Then, the ideal $\langle x^m-\lambda \rangle \subset \mathbb{F}_q[x]$ is in the annihilator of $\mathcal{C}$. This allows $\mathcal{C}$ to be seen as an $\mathscr{R}_\lambda$-module. However, our interest will be in the $\mathbb{F}_q[x]$-module structure of $\mathcal{C}$. But more than that, we need to make $\mathcal{C}$ into an $\mathbb{F}_q[x]$-submodule of $\left(\mathbb{F}_q[x]\right)^\ell$; the following result leads to this.

\begin{Thm}
\label{QT_Corresp2}
There is a one-to-one correspondence between $(\ell,\lambda)$-QT codes over $\mathbb{F}_q$ of length $m\ell$ and $\mathbb{F}_q[x]$-submodules of $\left(\mathbb{F}_q[x]\right)^\ell$ that contain the submodule
\begin{equation*}
M=\left( (x^m-\lambda)\mathbb{F}_q[x] \right)^\ell.
\end{equation*}
\end{Thm}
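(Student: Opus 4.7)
The plan is to reduce Theorem \ref{QT_Corresp2} to Theorem \ref{QT_Corresp} via the standard correspondence (lattice isomorphism) theorem for modules. Theorem \ref{QT_Corresp} already identifies $(\ell,\lambda)$-QT codes of length $m\ell$ with $\mathbb{F}_q[x]$-submodules of $\mathscr{R}_\lambda^\ell$, so it suffices to set up an $\mathbb{F}_q[x]$-module isomorphism between $\mathscr{R}_\lambda^\ell$ and the quotient $(\mathbb{F}_q[x])^\ell / M$, and then apply the correspondence theorem.

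First I would recall that $\mathscr{R}_\lambda = \mathbb{F}_q[x]/\langle x^m-\lambda\rangle$, so the componentwise projection
\begin{equation*}
\pi: \left(\mathbb{F}_q[x]\right)^\ell \longrightarrow \mathscr{R}_\lambda^\ell, \qquad \left(f_1(x),\ldots,f_\ell(x)\right) \mapsto \left(f_1(x)+\langle x^m-\lambda\rangle,\ldots,f_\ell(x)+\langle x^m-\lambda\rangle\right),
\end{equation*}
is a surjective $\mathbb{F}_q[x]$-module homomorphism. Its kernel is precisely the set of tuples all of whose entries are divisible by $x^m-\lambda$ in $\mathbb{F}_q[x]$, which is $M=\left((x^m-\lambda)\mathbb{F}_q[x]\right)^\ell$. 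By the first isomorphism theorem, $\pi$ induces an $\mathbb{F}_q[x]$-module isomorphism $\bar\pi: (\mathbb{F}_q[x])^\ell / M \xrightarrow{\sim} \mathscr{R}_\lambda^\ell$.

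Next, I would invoke the correspondence theorem: for any ring $R$ and any $R$-module $N$ with submodule $M\subseteq N$, the map $K\mapsto K/M$ is an inclusion-preserving bijection between $R$-submodules $K$ of $N$ containing $M$ and $R$-submodules of $N/M$. Applying this with $R=\mathbb{F}_q[x]$ and $N=(\mathbb{F}_q[x])^\ell$ gives a bijection between the set of $\mathbb{F}_q[x]$-submodules of $(\mathbb{F}_q[x])^\ell$ that contain $M$ and the set of $\mathbb{F}_q[x]$-submodules of $(\mathbb{F}_q[x])^\ell/M$. Composing with $\bar\pi$, this latter set is in bijection with the $\mathbb{F}_q[x]$-submodules of $\mathscr{R}_\lambda^\ell$. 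Chaining with Theorem \ref{QT_Corresp}, which identifies these with $(\ell,\lambda)$-QT codes of length $m\ell$, yields the desired one-to-one correspondence.

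There is really no serious obstacle here; the argument is a routine invocation of the correspondence theorem, and the main care is just to verify that $\ker\pi = M$ and that all maps involved respect the $\mathbb{F}_q[x]$-action (which is automatic, since $\pi$ is defined componentwise by the natural quotient map of $\mathbb{F}_q[x]$-modules). The only thing worth being explicit about is the explicit preimage: given a QT code $\mathcal{C}\subseteq \mathscr{R}_\lambda^\ell$, its corresponding submodule of $(\mathbb{F}_q[x])^\ell$ is $\pi^{-1}(\mathcal{C})$, and this preimage automatically contains $M=\ker\pi$, confirming the correspondence is well-defined in both directions.
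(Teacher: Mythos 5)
Your proof is correct and takes essentially the same route as the paper: both use the projection $(\mathbb{F}_q[x])^\ell \to \mathscr{R}_\lambda^\ell$ with kernel $M$, the lattice/correspondence theorem for submodules, and then invoke Theorem \ref{QT_Corresp}. The only cosmetic difference is that you apply the first isomorphism theorem directly, whereas the paper projects onto $(\mathbb{F}_q[x])^\ell/M$ first and then composes with the natural isomorphism $\tau$ to $\mathscr{R}_\lambda^\ell$.
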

\begin{proof}
$\left(\mathbb{F}_q[x]\right)^\ell$ is a free $\mathbb{F}_q[x]$-module of rank $\ell$ and $M$ is the submodule
\begin{eqnarray*}
M=\left\{\big((x^m-\lambda)f_1(x), (x^m-\lambda)f_2(x), \ldots, (x^m-\lambda)f_\ell(x)\big) \qquad \qquad \qquad \qquad\quad \right. \\
\left. \text{such that } f_j(x)\in\mathbb{F}_q[x] \text{ for } 1\le j\le \ell\right\}.
\end{eqnarray*}
Let $\pi:\left(\mathbb{F}_q[x]\right)^\ell \rightarrow \left(\mathbb{F}_q[x]\right)^\ell \! \slash M$ be the projection homomorphism. Then, $\pi$ defines a one-to-one correspondence between submodules of $\left(\mathbb{F}_q[x]\right)^\ell$ that contain $M$ and submodules of $\left(\mathbb{F}_q[x]\right)^\ell \! \slash M$. In addition, let $\tau$ be the natural $\mathbb{F}_q[x]$-module isomorphism between $\left(\mathbb{F}_q[x]\right)^\ell \! \slash M$ and $\mathscr{R}_\lambda^\ell$. Then, $\tau$ induces one-to-one correspondence between submodules of $\mathscr{R}_\lambda^\ell$ and the submodules of $\left(\mathbb{F}_q[x]\right)^\ell$ that contain $M$. The result follows from Theorem \ref{QT_Corresp}.
\end{proof}

Theorem \ref{QT_Corresp} presents any $(\ell,\lambda)$-QT code $\mathcal{C}$ as an $\mathbb{F}_q[x]$-submodule of $\mathscr{R}_\lambda^\ell$. While Theorem \ref{QT_Corresp2} presents $\mathcal{C}$ as an $\mathbb{F}_q[x]$-submodule of $\left(\mathbb{F}_q[x]\right)^\ell$. The proof of Theorem \ref{QT_Corresp2} provides a way to get generators of $\mathcal{C}$ as a submodule of $\left(\mathbb{F}_q[x]\right)^\ell$ from those generating $\mathcal{C}$ as a submodule of $\mathscr{R}_\lambda^\ell$. Specifically, let $\mathcal{C}$ be generated as a submodule of $\mathscr{R}_\lambda^\ell$ by the set
\begin{equation*}
\left\{ \left( g_{i,1}(x)+\langle x^m-\lambda \rangle , g_{i,2}(x)+\langle x^m-\lambda \rangle ,\ldots, g_{i,\ell}(x)+\langle x^m-\lambda \rangle \right)\  \big| \ 1\le i \le r \right\},
\end{equation*}
where $g_{i,j}(x)\in\mathbb{F}_q[x]$ for $1\le i\le r$ and $1\le j\le \ell$. Then $\mathcal{C}$ as a submodule of $\left(\mathbb{F}_q[x]\right)^\ell$ is generated by
\begin{equation}
\label{generating_set}
\begin{split}
&\left\{ \left( g_{i,1}(x) , g_{i,2}(x) ,\ldots, g_{i,\ell}(x)  \right) \big| 1\le i \le r \right\} \bigcup \\
&\qquad \qquad \left\{ \left(x^m-\lambda,0,\ldots,0\right),\left(0,x^m-\lambda,0,\ldots,0\right),\ldots, \left(0,\ldots,0,x^m-\lambda\right) \right\}.
\end{split}
\end{equation}

\begin{exmp}
We continue discussing the $(3,\theta)$-QT code $\mathcal{C}$ started in Example \ref{FirstQTCode}. Rows of the generator matrix given by \eqref{GMinExample} generate $\mathcal{C}$ as an $\mathbb{F}_4$-subspace of $\mathbb{F}_4^9$. Hence, they generate $\mathcal{C}$ as an $\mathbb{F}_4[x]$-submodule of $\mathbb{F}_4^9$, see the paragraph before Theorem \ref{QT_Corresp}. The polynomial representations of these rows are their image under the isomorphism $\phi$ given in the proof of Theorem \ref{QT_Corresp}; they construct the set 
\begin{equation*}
\begin{split}
&\left\{ \left( 1, (1+\theta)x^2, x^2 \right) ,  \left( (1+\theta)x^2, 1+\theta x^2, x^2 \right) ,  \left( x^2, 0, 1+\theta x^2 \right),\right. \\ 
&\left. \qquad\quad  \left( x+x^2, \theta x^2, \theta x^2\right) , \left(0, x+\theta x^2, x^2 \right)  ,  \left( (1+\theta) x^2, x^2, x+(1+\theta)x^2 \right)\right\}\subset \mathscr{R}_\theta^3.
\end{split}
\end{equation*}
Since $\phi$ is an $\mathbb{F}_4[x]$-module isomorphism, the above set generates $\mathcal{C}$ as an $\mathbb{F}_4[x]$-submodule of $\mathscr{R}_\theta^3$. From \eqref{generating_set}, a generating set for $\mathcal{C}$ as an $\mathbb{F}_4[x]$-submodule of $\left(\mathbb{F}_4[x]\right)^3$ is 
\begin{equation*}
\begin{split}
&\left\{ \left( 1, (1+\theta)x^2, x^2 \right) ,  \left( (1+\theta)x^2, 1+\theta x^2, x^2 \right) ,  \left( x^2, 0, 1+\theta x^2 \right), \right.\\ 
&\qquad\qquad \left( x+x^2, \theta x^2, \theta x^2\right) , \left(0, x+\theta x^2, x^2 \right)  ,  \left( (1+\theta) x^2, x^2, x+(1+\theta)x^2 \right),\\ 
&\left. \qquad\qquad\qquad\qquad\qquad\qquad\qquad\quad  \left( x^3+\theta, 0, 0\right) , \left(0, x^3+\theta, 0 \right)  ,  \left( 0, 0, x^3+\theta \right)\right\}.
\end{split}
\end{equation*}
This set is not necessarily a minimal generating set of $\mathcal{C}$, but we will show later how it can be reduced to a minimal one.\end{exmp}

Again, we will not distinguish between an $(\ell,\lambda)$-QT code $\mathcal{C}$ as a $T_{\ell,\lambda}$-invariant subspace of $\mathbb{F}_q^{m\ell}$ and the corresponding $\mathbb{F}_q[x]$-submodule of $\left(\mathbb{F}_q[x]\right)^\ell$ that contains $M$. That is, $\mathcal{C}$ is a linear code over $\mathbb{F}_q[x]$ of length $\ell$, where a linear code over a ring $R$ of length $\ell$ is an $R$-submodule of $R^\ell$. On the other hand, a linear code over $\mathbb{F}_q[x]$ of length $\ell$ is an $(\ell,\lambda)$-QT code if it contains $M$. However, we note that the minimum distance, code length, and rank of $\mathcal{C}$ as a linear code over $\mathbb{F}_q[x]$ are different from those of $\mathcal{C}$ as a linear code over $\mathbb{F}_q$. We aim to find a generator matrix for $\mathcal{C}$ as a linear code over $\mathbb{F}_q[x]$. We call any such matrix a GPM of $\mathcal{C}$ because its entries are polynomials over $\mathbb{F}_q$. A GPM can be constructed from the generating set given by \eqref{generating_set}. Since a GPM is a matrix over the PID $\mathbb{F}_q[x]$, one might ask for a unique reduced form GPM. Matrices over PIDs and their reduced form are discussed in Section \ref{Matrices_PIDs}.

We conclude this section by defining the class of MT codes that provides an additional generalization of QT codes. This generalization is based on generalizing the co-index into $\ell$ block lengths, which are not necessarily equal.
\begin{Def}
\label{def_MTcode}
Let $m_1, m_2, \ldots ,m_\ell$ be positive integers and $\Lambda=\left(\lambda_1, \lambda_2, \ldots, \lambda_\ell \right)$, where $0\ne\lambda_j \in\mathbb{F}_q$ for $1\le j\le \ell$. A $\Lambda$-MT code over $\mathbb{F}_q$ of code length $n=m_1+m_2+\cdots+m_\ell$ and block lengths $(m_1,m_2,\ldots,m_\ell)$ is an $\mathbb{F}_q[x]$-submodule of $\left(\mathbb{F}_q[x]\right)^\ell$ that contains the submodule
\begin{equation*}
\begin{split}
M_\Lambda=\left\{\big((x^{m_1}-\lambda_1)f_1(x), (x^{m_2}-\lambda_2)f_2(x), \ldots, (x^{m_\ell}-\lambda_\ell)f_\ell(x)\big)   \qquad \qquad \qquad\right. \\
\left. \text{such that } f_j(x)\in\mathbb{F}_q[x] \text{ for } 1\le j\le \ell\right\}.
\end{split}
\end{equation*}\end{Def}

From Theorem \ref{QT_Corresp2}, an $(\ell,\lambda)$-QT code is $\Lambda$-MT of equal block lengths and $\Lambda=(\lambda,\lambda,\ldots,\lambda)$. 
\begin{Thm}
\label{MT_Corresp}
There is a one-to-one correspondence between $\left(\lambda_1, \lambda_2, \ldots, \lambda_\ell \right)$-MT codes over $\mathbb{F}_q$ of block lengths $(m_1,m_2,\ldots,m_\ell)$ and $T_{\Lambda}$-invariant $\mathbb{F}_q$-subspaces of $\mathbb{F}_q^{n}$, where $n=m_1+m_2+\cdots+m_\ell$ and $T_{\Lambda}$ is the automorphism of $\mathbb{F}_q^n$ given by
\begin{equation}
\label{T_ell_Lambda}
\begin{split}
&T_{\Lambda}:\left( a_{0,1},\ldots, a_{m_1-1,1},a_{0,2},\ldots, a_{m_2-1,2},\ldots, a_{0,\ell},\ldots, a_{m_\ell-1,\ell}\right)\mapsto\\
&\left(\lambda_1 a_{m_1-1,1},a_{0,1},\ldots, a_{m_1-2,1}, \lambda_2 a_{m_2-1,2},a_{0,2},\ldots, a_{m_2-2,2},\ldots,\right.\\
&\left.\qquad\qquad\qquad\qquad\qquad\qquad\qquad\qquad\qquad\qquad \lambda_\ell a_{m_\ell-1,\ell},a_{0,\ell},\ldots, a_{m_\ell-2,\ell}\right).
\end{split}
\end{equation}
\end{Thm}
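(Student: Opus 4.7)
The plan is to mimic the proofs of Theorems \ref{QT_Corresp} and \ref{QT_Corresp2}, replacing the single pair $(m,\lambda)$ with the tuple $\big((m_1,\lambda_1),\ldots,(m_\ell,\lambda_\ell)\big)$. First I would set up the $\mathbb{F}_q$-vector space isomorphism
\begin{equation*}
\phi: \mathbb{F}_q^n \to \mathscr{R}_{\lambda_1}\times\mathscr{R}_{\lambda_2}\times\cdots\times\mathscr{R}_{\lambda_\ell},
\end{equation*}
where $\mathscr{R}_{\lambda_j}=\mathbb{F}_q[x]/\langle x^{m_j}-\lambda_j\rangle$, sending $(a_{0,1},\ldots,a_{m_1-1,1},\ldots,a_{0,\ell},\ldots,a_{m_\ell-1,\ell})$ to the $\ell$-tuple whose $j$-th entry is $a_{0,j}+a_{1,j}x+\cdots+a_{m_j-1,j}x^{m_j-1}$. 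Block by block this is an $\mathbb{F}_q$-linear bijection, since each $\mathscr{R}_{\lambda_j}$ has $\mathbb{F}_q$-dimension $m_j$ and the block lengths sum to $n$.

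Next I would verify the analogue of diagram \eqref{Commut_diagram}, namely $\phi\circ T_{\Lambda}=\psi\circ\phi$, where $\psi$ multiplies each coordinate polynomial by $x$. By \eqref{T_ell_Lambda}, the action of $T_\Lambda$ is block-diagonal, and on the $j$-th block it is exactly the $\lambda_j$-constacyclic shift of a length-$m_j$ vector; inside $\mathscr{R}_{\lambda_j}$ this shift is multiplication by $x$, because $x\cdot x^{m_j-1}=x^{m_j}\equiv\lambda_j$. Endow $\mathbb{F}_q^n$ with the $\mathbb{F}_q[x]$-module structure in which $x$ acts as $T_\Lambda$, so that $f(x)$ acts as $f(T_\Lambda)$. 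The commutativity promotes $\phi$ to an $\mathbb{F}_q[x]$-module isomorphism, and in this structure $T_\Lambda$-invariant $\mathbb{F}_q$-subspaces of $\mathbb{F}_q^n$ coincide with $\mathbb{F}_q[x]$-submodules. Hence they biject with $\mathbb{F}_q[x]$-submodules of $\prod_{j=1}^\ell \mathscr{R}_{\lambda_j}$.

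Finally I would return to $(\mathbb{F}_q[x])^\ell$: the componentwise reduction map gives a natural $\mathbb{F}_q[x]$-module isomorphism between $(\mathbb{F}_q[x])^\ell/M_\Lambda$ and $\prod_{j=1}^\ell\mathscr{R}_{\lambda_j}$, and the projection $(\mathbb{F}_q[x])^\ell\to(\mathbb{F}_q[x])^\ell/M_\Lambda$ induces the lattice bijection between submodules of $(\mathbb{F}_q[x])^\ell$ that contain $M_\Lambda$ and submodules of the quotient. Composing these three bijections, together with Definition \ref{def_MTcode}, yields the claimed one-to-one correspondence. The only real difficulty is notational bookkeeping across $\ell$ distinct pairs $(m_j,\lambda_j)$; the substantive content is the blockwise reduction to $\ell$ independent copies of the QT/constacyclic argument already carried out in Theorem \ref{QT_Corresp}.
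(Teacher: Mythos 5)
Your proposal is correct and follows essentially the same route as the paper: it builds the blockwise isomorphism $\phi:\mathbb{F}_q^n\to\oplus_{j=1}^\ell \mathscr{R}_{m_j,\lambda_j}$, verifies the commutative square $\phi\circ T_\Lambda=\psi\circ\phi$, promotes $\phi$ to an $\mathbb{F}_q[x]$-module isomorphism, and then composes with the lattice bijection induced by the quotient projection $(\mathbb{F}_q[x])^\ell\to(\mathbb{F}_q[x])^\ell/M_\Lambda$. The only difference from the paper's exposition is the order in which the two bijections are introduced, which is purely presentational.
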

\begin{proof}
For $1\le j\le \ell$, let $\mathscr{R}_{m_j,\lambda_j}=\mathbb{F}_q[x]/\langle x^{m_j}-\lambda_j\rangle$ and $\pi_j:\mathbb{F}_q[x]\rightarrow\mathscr{R}_{m_j,\lambda_j}$ be the projection homomorphism. Then $\pi=\oplus_{j=1}^\ell \pi_j : \left(\mathbb{F}_q[x]\right)^\ell \rightarrow \oplus_{j=1}^\ell \mathscr{R}_{m_j,\lambda_j}$ is a surjective homomorphism with kernel $M_\Lambda$. Actually, $\pi$ defines a one-to-one correspondence between $\mathbb{F}_q[x]$-submodules of $\oplus_{j=1}^\ell \mathscr{R}_{m_j,\lambda_j}$ and $\mathbb{F}_q[x]$-submodules of $\left(\mathbb{F}_q[x]\right)^\ell$ that contain $M_\Lambda$ given in Definition \ref{def_MTcode}. Hence, there is a one-to-one correspondence between $\left(\lambda_1, \lambda_2, \ldots, \lambda_\ell \right)$-MT codes over $\mathbb{F}_q$ of block lengths $(m_1,m_2,\ldots,m_\ell)$ and submodules of $\oplus_{j=1}^\ell \mathscr{R}_{m_j,\lambda_j}$.

The automorphism $T_{\Lambda}$ makes $\mathbb{F}_q^{n}$ into an $\mathbb{F}_q[x]$-module, where the action of $x$ on $\mathbb{F}_q^{n}$ is the action of $T_{\Lambda}$. In this setting, $\mathbb{F}_q[x]$-submodules of $\mathbb{F}_q^{n}$ are precisely the $T_{\Lambda}$-invariant $\mathbb{F}_q$-subspaces of $\mathbb{F}_q^{n}$. We establish a one-to-one correspondence between submodules of $\oplus_{j=1}^\ell \mathscr{R}_{m_j,\lambda_j}$ and $T_{\Lambda}$-invariant $\mathbb{F}_q$-subspaces of $\mathbb{F}_q^{n}$ by obtaining an $\mathbb{F}_q[x]$-module isomorphism between $\oplus_{j=1}^\ell \mathscr{R}_{m_j,\lambda_j}$ and $\mathbb{F}_q^{n}$. There is an $\mathbb{F}_q$-vector space isomorphism $\phi: \mathbb{F}_q^n\rightarrow \oplus_{j=1}^\ell \mathscr{R}_{m_j,\lambda_j}$ given by
\begin{equation}
\label{the_isomorphism_phi}
\left( a_{0,1},\ldots, a_{m_1-1,1},\ldots, a_{0,\ell},\ldots, a_{m_\ell-1,\ell}\right) \mapsto \left( a_1(x),a_2(x),\ldots,a_\ell(x)\right),
\end{equation}
where $a_j(x)=a_{0,j}+a_{1,j}x+\cdots+ a_{m_j-1,j}x^{m_j-1}$ for $1\le j\le \ell$. Consider the following diagram of $\mathbb{F}_q$-vector space isomorphisms
\begin{equation}
\label{Commut_diagram2}
\begin{tikzcd}[swap]
    \mathbb{F}_q^{n} \arrow{r}[swap]{\phi}{} \arrow{d}{T_{\Lambda}}  & \oplus_{j=1}^\ell \mathscr{R}_{m_j,\lambda_j} \arrow{d}[swap]{\psi}{} \\  
    \mathbb{F}_q^{n} \arrow{r}[swap]{}{\phi}   & \oplus_{j=1}^\ell \mathscr{R}_{m_j,\lambda_j}
\end{tikzcd}
\end{equation}
where $\psi:  \left( a_1(x), a_2(x), \ldots, a_\ell(x)\right) \mapsto \left(x a_1(x),x a_2(x), \ldots,x a_\ell(x)\right)$. 
Similar to the proof of Theorem \ref{QT_Corresp}, Equation \eqref{T_ell_Lambda} shows that Diagram \eqref{Commut_diagram2} is commutative. Then $x\phi\left(\mathbf{a}\right)=\phi\left(T_{\Lambda}(\mathbf{a})\right)$ for any $\mathbf{a}\in\mathbb{F}_q^{n}$, and $\phi$ is an $\mathbb{F}_q[x]$-module isomorphism.

If $\mathcal{C}$ is a $\left(\lambda_1, \lambda_2, \ldots, \lambda_\ell \right)$-MT code over $\mathbb{F}_q$ of block lengths $(m_1,m_2,\ldots,m_\ell)$, then $\phi^{-1}\circ\pi\left(\mathcal{C}\right)$ is a $T_{\Lambda}$-invariant $\mathbb{F}_q$-subspace of $\mathbb{F}_q^{n}$. Conversely, the image of any $T_{\Lambda}$-invariant $\mathbb{F}_q$-subspace of $\mathbb{F}_q^{n}$ under the map $\pi^{-1}\circ\phi$ is an $\mathbb{F}_q[x]$-submodule of $\left(\mathbb{F}_q[x]\right)^\ell$ that contains $M_\Lambda$.
\end{proof}

In literature, a MT code may mean a $T_{\Lambda}$-invariant subspace of $\mathbb{F}_q^{n}$, a submodule of $\oplus_{j=1}^\ell \mathscr{R}_{m_j,\lambda_j}$, or a submodule of $\left(\mathbb{F}_q[x]\right)^\ell$ that contains $M_\Lambda$. We do not distinguish between the three representations, and the representation used is determined by context. However, the polynomial representation of a MT-code is the corresponding submodule of $\oplus_{j=1}^\ell \mathscr{R}_{m_j,\lambda_j}$.

Similar to what we did with QT codes, if a $\Lambda$-MT code $\mathcal{C}$ is generated as a submodule of $\oplus_{j=1}^\ell \mathscr{R}_{m_j,\lambda_j}$ by the set
\begin{equation*}
\left\{ \left( g_{i,1}(x)+\langle x^{m_1}-\lambda_1 \rangle ,g_{i,2}(x)+\langle x^{m_2}-\lambda_2 \rangle ,\ldots, g_{i,\ell}(x)+\langle x^{m_\ell}-\lambda_\ell \rangle \right) \big| 1\le i \le r \right\},
\end{equation*}
where $g_{i,j}(x)\in\mathbb{F}_q[x]$ for $1\le i\le r$ and $1\le j\le \ell$, then $\mathcal{C}$ is generated as a submodule of $\left(\mathbb{F}_q[x]\right)^\ell$ by the set 
\begin{equation*}
\begin{split}
&\left\{ \left( g_{i,1}(x) , g_{i,2}(x) ,\ldots, g_{i,\ell}(x)  \right) \big| 1\le i \le r \right\} \bigcup \\
&\qquad \qquad \left\{ \left(x^{m_1}-\lambda_1,0,\ldots,0\right),\left(0,x^{m_2}-\lambda_2,0,\ldots,0\right),\ldots, \left(0,\ldots,0,x^{m_\ell}-\lambda_\ell\right) \right\}.
\end{split}
\end{equation*}
Hence, $\mathcal{C}$ is a linear code over $\mathbb{F}_q[x]$ of length $\ell$ with generator matrix
\begin{equation*}
\mathbf{G}=\begin{pmatrix}
g_{1,1}(x) & g_{1,2}(x) & g_{1,3}(x) & \cdots & g_{1,\ell}(x)\\
g_{2,1}(x) & g_{2,2}(x) & g_{2,3}(x) & \cdots & g_{2,\ell}(x)\\
\vdots & \vdots & \vdots & \ddots & \vdots \\
g_{r,1}(x) & g_{r,2}(x) & g_{r,3}(x) & \cdots & g_{r,\ell}(x)\\
x^{m_1}-\lambda_1 & 0 & 0 & \cdots & 0\\
0 & x^{m_2}-\lambda_2 & 0 &  \cdots & 0\\
\vdots & \vdots & \vdots & \ddots & \vdots \\
0 & 0 & 0 & \cdots & x^{m_\ell}-\lambda_\ell
\end{pmatrix}.
\end{equation*}
Since the entries of $\mathbf{G}$ are polynomials, so it is called GPM. Rows of $\mathbf{G}$ are not necessary to form a minimal generating set for $\mathcal{C}$. However, a reduced form for $\mathbf{G}$ can be obtained, see the next section.

In the standard basis of $\mathbb{F}_q^n$, the matrix representation of $T_{\Lambda}$ is the block diagonal matrix 
\begin{equation}
\label{M_i_MT}
\mathbf{M}_\Lambda=\mathrm{diag}\left[\mathbf{M}_{\lambda_1,m_1},\mathbf{M}_{\lambda_2,m_2},\ldots, \mathbf{M}_{\lambda_\ell,m_\ell}\right], 
\end{equation}
where 
\begin{equation*}
\mathbf{M}_{\lambda_j,m_j}=\begin{pmatrix}
\mathbf{0}_{1\times(m_j-1)} & \lambda_j\\
\mathbf{I}_{(m_j-1)} & \mathbf{0}_{(m_j-1)\times 1}
\end{pmatrix}
\text{ for } 1\le j\le \ell.
\end{equation*} 
The following result can be proven in a similar way to the proof of Theorem \ref{Cond_QT}.
\begin{Thm}
\label{Cond_MT2}
Let $\mathcal{C}$ be a linear code over $\mathbb{F}_q$ of length $n$, dimension $k$, and a $k\times n$ generator matrix $\mathbf{G}$. Then, $\mathcal{C}$ is $\left(\lambda_1, \lambda_2, \ldots, \lambda_\ell \right)$-MT with block lengths $(m_1,m_2,\ldots,m_\ell)$ if and only if  
\begin{equation*}
\text{rank}\begin{pmatrix}
\mathbf{G}\\
\mathbf{G} \mathbf{M}_\Lambda^t
\end{pmatrix}=k.
\end{equation*}\end{Thm}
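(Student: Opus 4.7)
The plan is to follow, almost word for word, the structure of the proof of Theorem \ref{Cond_QT}, with $\mathbf{M}$ replaced by the block-diagonal matrix $\mathbf{M}_\Lambda$ of \eqref{M_i_MT}. The first step is to record the identity $\mathbf{v}\mathbf{M}_\Lambda^t = T_\Lambda(\mathbf{v})$ for any row vector $\mathbf{v}\in\mathbb{F}_q^n$; this is the transposed counterpart of the fact, implicit in \eqref{T_ell_Lambda}, that $\mathbf{M}_\Lambda$ represents $T_\Lambda$ on column vectors in the standard basis. In particular, the rows of $\mathbf{G}\mathbf{M}_\Lambda^t$ are exactly the images of the rows of $\mathbf{G}$ under $T_\Lambda$. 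By Theorem \ref{MT_Corresp}, a linear code $\mathcal{C}$ is $\Lambda$-MT if and only if it is $T_\Lambda$-invariant, and this is the only structural input I need.

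For the forward direction, I assume $\mathcal{C}$ is $\Lambda$-MT. Since each row of $\mathbf{G}$ is a codeword and $\mathcal{C}$ is $T_\Lambda$-invariant, each row of $\mathbf{G}\mathbf{M}_\Lambda^t$ also lies in $\mathcal{C}$. Hence all $2k$ rows of the block matrix lie in the $k$-dimensional space $\mathcal{C}$, while its top $k$ rows already form a basis of $\mathcal{C}$; so the rank of the block matrix is exactly $k$.

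For the reverse direction, I assume the rank equals $k$. Given any $\mathbf{c}=\mathbf{a}\mathbf{G}\in\mathcal{C}$ with $\mathbf{a}\in\mathbb{F}_q^k$, the factorization
\begin{equation*}
\begin{pmatrix} \mathbf{G} \\ \mathbf{c}\mathbf{M}_\Lambda^t \end{pmatrix}
= \begin{pmatrix} \mathbf{I}_{k} & \mathbf{0}_{k\times k} \\ \mathbf{0}_{1\times k} & \mathbf{a} \end{pmatrix}
\begin{pmatrix} \mathbf{G} \\ \mathbf{G}\mathbf{M}_\Lambda^t \end{pmatrix}
\end{equation*}
bounds the rank of the left-hand side by $k$. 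Combined with $\text{rank}(\mathbf{G})=k$, this forces $\mathbf{c}\mathbf{M}_\Lambda^t$ to lie in the row space of $\mathbf{G}$, i.e., in $\mathcal{C}$. Thus $\mathcal{C}$ is $T_\Lambda$-invariant and hence $\Lambda$-MT.

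The only real subtlety, and the step I would treat most carefully, is the initial bookkeeping that verifies $\mathbf{M}_\Lambda$ of \eqref{M_i_MT} really is the matrix of $T_\Lambda$ on column vectors, so that $\mathbf{v}\mathbf{M}_\Lambda^t=T_\Lambda(\mathbf{v})$ for row vectors. One checks this block by block: on the $j$-th block of length $m_j$, the submatrix $\mathbf{M}_{\lambda_j,m_j}$ implements a cyclic permutation with the wrap-around entry scaled by $\lambda_j$, which matches exactly the action of $T_\Lambda$ on that block as written in \eqref{T_ell_Lambda}. After this verification, no new idea is required; the rest of the argument is purely linear-algebraic and transports verbatim from the QT case.
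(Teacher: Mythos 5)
Your proof is correct and takes exactly the approach the paper intends: the paper states only that Theorem \ref{Cond_MT2} ``can be proven in a similar way to the proof of Theorem \ref{Cond_QT},'' and you carry out precisely that adaptation, replacing $\mathbf{M}$ by $\mathbf{M}_\Lambda$ and verifying block by block that $\mathbf{v}\mathbf{M}_\Lambda^t = T_\Lambda(\mathbf{v})$ on row vectors, which is the one new bookkeeping step. The rank-factorization argument in both directions transports verbatim from the QT case, and your dimension counts in the block product are consistent, so nothing is missing.
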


\begin{exmp}
\label{ex_Sh}
Let $\mathcal{C}$ be the linear code over $\mathbb{F}_3$ of length $60$, dimension $6$, and generator matrix
$\mathbf{G}=\begin{pmatrix}
\mathbf{I}_{6} & \mathbf{N} \end{pmatrix}$, 
where $\mathbf{N}$ is given by
\begin{equation*}
\resizebox{\hsize}{!}{
$\left(\begin{array}{*{54}c}
1&2&1&2&2&1&0&2&0&2&1&2&0&1&1&0&1&1&0&2&0&2&0&0&2&2&0&2&0&1&0&1&1&2&2&0&2&2&0&1&0&1&0&0&1&1 &0&1&0&2&0&2&2&1\\
1&0&0&0&1&0&1&2&2&2&0&0&2&1&2&1&1&2&1&2&2&2&2&0&2&1&2&2&2&1&1&1&2&0&1&2&2&1&2&1&1&1&1&0&1&2 &1&1&1&2&2&2&1&0\\
1&0&1&2&2&2&0&0&2&1&0&2&0&0&1&2&2&2&2&0&2&1&2&2&2&1&1&1&2&0&1&2&2&1&2&1&1&1&1&0&1&2&1&1&1&2 &2&2&1&0&2&1&1&2\\
0&1&0&1&2&2&2&0&0&2&1&0&2&0&2&1&2&2&2&2&0&2&1&2&2&2&1&1&1&2&0&1&2&2&1&2&1&1&1&1&0&1&2&1&1&1 &2&2&2&1&0&2&1&1\\
0&0&1&0&1&2&2&2&0&0&2&1&0&2&1&2&1&2&2&2&2&0&2&1&2&2&2&1&1&1&2&0&1&2&2&1&2&1&1&1&1&0&1&2&1&1 &1&2&2&2&1&0&2&1\\
2&1&2&2&1&0&2&0&2&1&2&0&1&2&0&1&1&0&2&0&2&0&0&2&2&0&2&0&1&0&1&1&2&2&0&2&2&0&1&0&1&0&0&1&1&0 &1&0&2&0&2&2&1&1
\end{array}
\right).$}
\end{equation*}
By brute-force, the minimum distance of $\mathcal{C}$ is $36$, hence $\mathcal{C}$ is optimal according to \cite{Grassl}. Furthermore, 
Theorem \ref{Cond_MT2} emphasizes that $\mathcal{C}$ is $(2,1)$-MT with block lengths $m_1=20$ and $m_2=40$ because 
\begin{equation*}
\text{rank}\begin{pmatrix}
\mathbf{G}\\
\mathbf{G} \mathbf{M}_\Lambda^t
\end{pmatrix}=6.
\end{equation*} 
A generating set $S$ for $\mathcal{C}$ as an $\mathbb{F}_3[x]$-submodule of $\mathscr{R}_{20,2} \oplus \mathscr{R}_{40,1}$ can be obtained from the polynomial representations of rows of $\mathbf{G}$. Namely, 
\begin{align*}
S=\Big\{ &\left(  1+x^{6}+2x^{7}+x^{8}+2x^{9}+2x^{10}+x^{11}+2x^{13}+2x^{15}+x^{16}+2x^{17}+x^{19},\right.\\
 &\quad 1+x^{2}+x^{3}+2x^{5}+2x^{7}+2x^{10}+2x^{11}+2x^{13}+x^{15}+x^{17}+x^{18}+2x^{19}+2x^{20}\\
&\left. \qquad\ \ +2x^{22}+2x^{23}+x^{25}+x^{27}+x^{30}+x^{31}+x^{33}+2x^{35}+2x^{37}+2x^{38}+x^{39} \right), \\
&\left(  x+x^{6}+x^{10}+x^{12}+2x^{13}+2x^{14}+2x^{15}+2x^{18}+x^{19},\right.\\
& \quad 2+x+x^{2}+2x^{3}+x^{4}+2x^{5}+2x^{6}+2x^{7}+2x^{8}+2x^{10}+x^{11}+2x^{12}+2x^{13}\\
& \quad +2x^{14}+x^{15}+x^{16}+x^{17}+2x^{18}+x^{20}+2x^{21}+2x^{22}+x^{23}+2x^{24}+x^{25}+x^{26}\\
& \left. \qquad\quad+x^{27}+x^{28}+x^{30}+2x^{31}+x^{32}+x^{33}+x^{34}+2x^{35}+2x^{36}+2x^{37}+x^{38} \right), \\
&\left(  x^2+x^{6}+x^{8}+2x^{9}+2x^{10}+2x^{11}+2x^{14}+x^{15}+2x^{17},\right.\\
& \quad 1+2x+2x^{2}+2x^{3}+2x^{4}+2x^{6}+x^{7}+2x^{8}+2x^{9}+2x^{10}+x^{11}+x^{12}+x^{13}\\
& \quad +2x^{14}+x^{16}+2x^{17}+2x^{18}+x^{19}+2x^{20}+x^{21}+x^{22}+x^{23}+x^{24}+x^{26}+2x^{27}\\
& \left. \qquad\ \ +x^{28}+x^{29}+x^{30}+2x^{31}+2x^{32}+2x^{33}+x^{34}+2x^{36}+x^{37}+x^{38}+2x^{39} \right), \\
&\left(  x^3+x^{7}+x^{9}+2x^{10}+2x^{11}+2x^{12}+2x^{15}+x^{16}+2x^{18},\right.\\
& \quad 2+x+2x^{2}+2x^{3}+2x^{4}+2x^{5}+2x^{7}+x^{8}+2x^{9}+2x^{10}+2x^{11}+x^{12}+x^{13}\\
& \quad +x^{14}+2x^{15}+x^{17}+2x^{18}+2x^{19}+x^{20}+2x^{21}+x^{22}+x^{23}+x^{24}+x^{25}+x^{27}\\
& \left. \qquad\ \ +2x^{28}+x^{29}+x^{30}+x^{31}+2x^{32}+2x^{33}+2x^{34}+x^{35}+2x^{37}+x^{38}+x^{39} \right), \\
&\left(  x^4+x^{8}+x^{10}+2x^{11}+2x^{12}+2x^{13}+2x^{16}+x^{17}+2x^{19},\right.\\
& \quad 1+2x+x^{2}+2x^{3}+2x^{4}+2x^{5}+2x^{6}+2x^{8}+x^{9}+2x^{10}+2x^{11}+2x^{12}+x^{13}\\
& \quad +x^{14}+x^{15}+2x^{16}+x^{18}+2x^{19}+2x^{20}+x^{21}+2x^{22}+x^{23}+x^{24}+x^{25}+x^{26}\\
&\left.  \qquad\ \ +x^{28}+2x^{29}+x^{30}+x^{31}+x^{32}+2x^{33}+2x^{34}+2x^{35}+x^{36}+2x^{38}+x^{39} \right), \\
&\left(  x^5+2x^{6}+x^{7}+2x^{8}+2x^{9}+x^{10}+2x^{12}+2x^{14}+x^{15}+2x^{16}+x^{18}+2x^{19},\right.\\
&\ \    x+x^{2}+2x^{4}+2x^{6}+2x^{9}+2x^{10}+2x^{12}+x^{14}+x^{16}+x^{17}+2x^{18}+2x^{19}+2x^{21}\\
&\left.  \qquad\ \ +2x^{22}+x^{24}+x^{26}+x^{29}+x^{30}+x^{32}+2x^{34}+2x^{36}+2x^{37}+x^{38}+x^{39} \right)\Big\}.
\end{align*}
However, as an $\mathbb{F}_3[x]$-submodule of $\left( \mathbb{F}_3[x]\right)^2$, $\mathcal{C}$ is generated by 
\begin{equation}
\label{in_ex_Sh}
S\cup \left\{(x^{20}+1,0),(0,x^{40}+2)\right\}. 
\end{equation}
As a linear code over $\mathbb{F}_3[x]$, a GPM for $\mathcal{C}$ can be constructed from the latter set. In Section \ref{Matrices_PIDs}, we will discuss how to reduce this generating set to a minimal set, see Example \ref{ex_Sh2} below.\end{exmp}

GQC codes form a subclass of MT codes; an $\ell$-GQC code of block lengths $(m_1,m_2,\ldots,m_\ell)$ is $\left(1,1,\ldots,1\right)$-MT. The following is a corollary of Theorem \ref{MT_Corresp}. 
\begin{Coroll}
\label{Corollary_GQC}
Any of the following can be used to represent an $\ell$-GQC code of block lengths $(m_1,m_2,\ldots,m_\ell)$ over $\mathbb{F}_q$:
\begin{enumerate}
\item An $\mathbb{F}_q[x]$-submodule of $\left(\mathbb{F}_q[x]\right)^\ell$ that contains 
\begin{equation*}
\left\{\left((x^{m_1}\!-1)f_1(x),\! (x^{m_2}\!-1)f_2(x), \ldots,\! (x^{m_\ell}\!-1)f_\ell(x)\right) \big| f_1(x),\ldots,f_\ell(x)\!\in\!\mathbb{F}_q[x]\right\}\!.
\end{equation*}
Hence, an $\ell$-GQC is a linear code over $\mathbb{F}_q[x]$ of length $\ell$.
\item An $\mathbb{F}_q[x]$-submodule of $\oplus_{j=1}^\ell \mathscr{R}_{m_j,1}$, where $\mathscr{R}_{m_j,1}=\mathbb{F}_q[x]/\langle x^{m_j}-1\rangle$.
\item An invariant $\mathbb{F}_q$-subspace of $\mathbb{F}_q^{n}$, where $n=\sum_{j=1}^\ell m_j$, under the automorphism
\begin{equation*}
\begin{split}
&\left( a_{0,1},\ldots, a_{m_1-1,1},a_{0,2},\ldots, a_{m_2-1,2},\ldots, a_{0,\ell},\ldots, a_{m_\ell-1,\ell}\right)\mapsto\\
&\quad\qquad\qquad\qquad\left( a_{m_1-1,1},a_{0,1},\ldots, a_{m_1-2,1},\ldots, a_{m_\ell-1,\ell},a_{0,\ell},\ldots, a_{m_\ell-2,\ell}\right).
\end{split}
\end{equation*} 
\end{enumerate}\end{Coroll}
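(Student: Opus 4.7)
The plan is to specialize Theorem \ref{MT_Corresp}, together with the constructions inside its proof, to the case $\Lambda = (1, 1, \ldots, 1)$. The remark immediately preceding the corollary already identifies an $\ell$-GQC code of block lengths $(m_1, \ldots, m_\ell)$ with a $\Lambda$-MT code for this particular $\Lambda$, so each of items (1)--(3) should drop out by substituting $\lambda_j = 1$ for every $j$ into the objects built in Theorem \ref{MT_Corresp}.

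For item (1), I would substitute $\lambda_j = 1$ directly into Definition \ref{def_MTcode}: the submodule $M_\Lambda$ then takes precisely the form displayed in the corollary, and the defining property (being an $\mathbb{F}_q[x]$-submodule of $(\mathbb{F}_q[x])^\ell$ containing $M_\Lambda$) becomes the statement in (1). For item (2), I would invoke the surjective homomorphism $\pi = \oplus_{j=1}^\ell \pi_j$ constructed in the proof of Theorem \ref{MT_Corresp}, whose kernel is $M_\Lambda$. By the submodule correspondence for the projection $\pi$, one obtains a bijection between submodules of $(\mathbb{F}_q[x])^\ell$ containing $M_\Lambda$ and submodules of $\oplus_{j=1}^\ell \mathscr{R}_{m_j,\lambda_j}$. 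Specializing to $\lambda_j = 1$ for all $j$ yields the required correspondence with submodules of $\oplus_{j=1}^\ell \mathscr{R}_{m_j,1}$.

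For item (3), I would reuse the $\mathbb{F}_q$-vector space isomorphism $\phi : \mathbb{F}_q^n \to \oplus_{j=1}^\ell \mathscr{R}_{m_j,\lambda_j}$ of \eqref{the_isomorphism_phi}, whose commuting square \eqref{Commut_diagram2} upgrades $\phi$ to an $\mathbb{F}_q[x]$-module isomorphism that intertwines the action of $x$ on the right with the action of $T_\Lambda$ on the left. Consequently $\mathbb{F}_q[x]$-submodules of $\oplus_{j=1}^\ell \mathscr{R}_{m_j,\lambda_j}$ correspond bijectively to $T_\Lambda$-invariant $\mathbb{F}_q$-subspaces of $\mathbb{F}_q^n$. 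Substituting $\lambda_j = 1$ in the formula \eqref{T_ell_Lambda}, each prefactor $\lambda_j a_{m_j-1,j}$ collapses to $a_{m_j-1,j}$, so $T_\Lambda$ becomes exactly the block-wise cyclic shift displayed in item (3), and composing the two bijections from items (1) and (2) with this one completes the chain of equivalent representations.

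There is no genuine obstacle: all the heavy lifting was carried out in the proof of Theorem \ref{MT_Corresp}, and the proof of the corollary is just the book-keeping of setting $\lambda_j = 1$ in three places (the kernel submodule $M_\Lambda$, the component rings $\mathscr{R}_{m_j,\lambda_j}$, and the automorphism $T_\Lambda$). The only point worth flagging, though routine, is that the reader should check that the intertwining square \eqref{Commut_diagram2} continues to commute after this substitution, which is immediate since the argument given there is formally identical for any choice of shift constants.
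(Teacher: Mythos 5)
Your proposal is correct and takes essentially the same approach as the paper: the paper derives the corollary by identifying an $\ell$-GQC code as a $(1,1,\ldots,1)$-MT code and then invoking Theorem \ref{MT_Corresp}, which is exactly your specialization of $M_\Lambda$, the projection $\pi$, the isomorphism $\phi$, and the automorphism $T_\Lambda$ to $\lambda_j=1$.
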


In the standard basis of $\mathbb{F}_q^{n}$ and using \eqref{M_i_MT} with $\lambda_j=1$ for $1\le j\le \ell$, the matrix representation of the automorphism given in Corollary \ref{Corollary_GQC} is
\begin{equation*}
\mathbf{M}_{\mathbf{1}}=\mathrm{diag}\left[\mathbf{M}_{1,m_1},\mathbf{M}_{1,m_2},\ldots, \mathbf{M}_{1,m_\ell}\right].
\end{equation*}
For $1\le j\le \ell$, $\mathbf{M}_{1,m_j}$ is a permutation matrix of size $m_j$. In addition, $\mathbf{M}_{1,m_j}$ has multiplicative order $m_j$, i.e., $\mathbf{M}_{1,m_j}^{m_j}=\mathbf{I}_{m_j}$. Consequently, $\mathbf{M}_{\mathbf{1}}$ is a permutation matrix with multiplicative order $N=\mathrm{lcm}\left(m_1,m_2,\ldots, m_\ell \right)$, the least common multiple of $m_1,m_2,\ldots, m_\ell$. This can be generalized to the matrix $\mathbf{M}_\Lambda$ given by \eqref{M_i_MT}; the multiplicative order of $\mathbf{M}_\Lambda$ is $\mathrm{lcm}\left(t_1 m_1,t_2 m_2,\ldots, t_\ell m_\ell \right)$, where $t_j$ is the multiplicative order of $\lambda_j$ for $1\le j\le \ell$. Equivalently, $T_\Lambda^{\mathrm{lcm}\left(t_1 m_1,t_2 m_2,\ldots, t_\ell m_\ell \right)}$ is the identity map.

\section{GPMs as matrices over PID}
\label{Matrices_PIDs}
Let ${R}$ be a commutative ring with identity and $\mathcal{M}$ be a finitely generated ${R}$-module. The rank of $\mathcal{M}$ is the size of the minimal generating set of $\mathcal{M}$. An $R$-module $\mathcal{M}$ is free if there exists an ${R}$-linearly independent generating set $\{\nu_1, \nu_2,\ldots, \nu_r\}$ for $\mathcal{M}$. The set $\{\nu_1, \nu_2,\ldots, \nu_r\}$ forms a basis for $\mathcal{M}$. That is, for each $\mathrm{m}\in\mathcal{M}$, there is a unique $\left(a_1,a_2,\ldots,a_r \right)\in{R}^r$ such that $\mathrm{m}=\sum_{i=1}^r a_i \nu_i$. The ``\textit{invariant basis number}'' property for commutative rings asserts that any two bases of $\mathcal{M}$ have the same cardinality, and that a basis forms a minimal generating set. Hence, if $\{\nu_1, \nu_2,\ldots, \nu_r\}$ is a basis of an $R$-module $\mathcal{M}$, then $\mathcal{M}$ is isomorphic to ${R}^r$ and has rank $r$. The free $R$-module $R^r$ of rank $r$ has the standard basis $\{\mathbf{e}_1,\mathbf{e}_1,\ldots,\mathbf{e}_r\}$, where $\mathbf{e}_i$ is the vector that has $1$ in the $i^\mathrm{th}$ coordinate and zeros in the remaining coordinates. Many interesting properties of modules over PID can be found in \cite{Roman2008-wr}. In the following, we assume ${R}$ PID. The reason for the interest in modules over PIDs is that many of the known results of vector spaces still apply to these modules. We summarize some of these properties without proof.
\begin{Thm}
\label{finitely_modules}
Let $\mathcal{M}$ be a finitely generated $R$-module, where $R$ is PID.
\begin{enumerate}
\item If $\mathcal{M}$ is free of rank $r$, then $\mathcal{M}$ is torsion-free. That is, $a \mathrm{m}=0$ for $a\in{R}$ and $\mathrm{m}\in\mathcal{M}$ if and only if $a=0$ or $\mathrm{m}=0$. This is easy to see since $\mathcal{M}$ is isomorphic to $R^r$. Thus, if $a \mathrm{m}=0$ for $a\ne 0$, then $a \left(a_1,a_2,\ldots,a_r\right)=\mathbf{0}$, where $\left(a_1,a_2,\ldots,a_r\right)\in{R}^r$ is the image of $\mathrm{m}$ under the isomorphism $\mathcal{M}\simeq {R}^r$. Since ${R}$ is an integral domain, $a a_i=0$ and $a_i=0$ for every $1\le i\le r$. Hence $\left(a_1,a_2,\ldots,a_r\right)=0$ and $\mathrm{m}=0$.
\item If $\mathcal{M}$ is torsion-free, then $\mathcal{M}$ is free.
\item If $\mathcal{N}$ is a submodule of $\mathcal{M}$, then rank($\mathcal{N}$) $\le$ rank($\mathcal{M}$). 
\item If $\mathcal{M}$ is free and $\mathcal{N}$ is a submodule, then $\mathcal{N}$ is free.
\item Similar to vector spaces, if $\mathcal{M}$ is free of rank $r$, then any generating set of size $r$ forms a basis of $\mathcal{M}$. On the other hand, unlike vector spaces, if $\mathcal{M}$ is free of rank $r$, then an ${R}$-linearly independent set of size $r$ does not necessarily generate $\mathcal{M}$. For instance, let ${R}=\mathbb{Z}$, $\mathcal{M}=\mathbb{Z}^2$, and $S=\{\mathbf{e}_1, 2\mathbf{e}_2\}$. 
Although $S$ is $\mathbb{Z}$-linearly independent of size $2$, it does not generate $(0,1)$, i.e., $\langle S \rangle \ne \mathcal{M}$. 
\item Let $\mathcal{M}$ be free of rank $r$ and let $\mathcal{N}$ be a submodule of rank $t$. There is a subset $\{\tau_1,\tau_2,\ldots,\tau_t\}$ of some basis $\{\nu_1,\nu_2,\ldots,\nu_r\}$ of $\mathcal{M}$ and elements $s_1,s_2,\ldots,s_t\in{R}$ such that 
\begin{enumerate}
\item $s_1|s_2|s_3|\cdots|s_t$, and
\item $\{s_1\tau_1,s_2\tau_2,\ldots,s_t\tau_t\}$ is a basis of $\mathcal{N}$.

\end{enumerate}
The elements $s_1,s_2,\ldots,s_t$ are called the invariant factors of $\mathcal{N}$ and are independent of the choice of basis of $\mathcal{M}$.

\end{enumerate}\end{Thm}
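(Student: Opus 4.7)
The plan is to prove the six claims not in the order stated but in dependency order: item~1 is essentially argued inside the statement, so I would tackle item~4 first as the keystone, deduce item~3 from its proof, then establish items~2 and~5, and finally use everything to build item~6, which carries the substantive content.

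For item~4 I would induct on the rank $r$ of $\mathcal{M}$. The base $r=1$ is immediate: any submodule of $R$ is an ideal, hence principal, hence free of rank $0$ or $1$. For the inductive step, fix a basis $\{e_1,\dots,e_r\}$ of $\mathcal{M}$ and let $\pi:\mathcal{M}\to R$ be projection onto the last coordinate. The image $\pi(\mathcal{N})=\langle s\rangle$ is a principal ideal. If $s=0$ then $\mathcal{N}\subseteq\langle e_1,\dots,e_{r-1}\rangle$ and the induction hypothesis finishes the job. Otherwise pick $v\in\mathcal{N}$ with $\pi(v)=s$; one checks directly that $\mathcal{N}=(\mathcal{N}\cap\ker\pi)\oplus Rv$, the first summand being free by induction and $Rv\cong R$ because $v$ is non-torsion ($rv=0$ forces $rs=0$, hence $r=0$). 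The same induction proves item~3, since each step raises the rank by at most one.

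Item~2 follows by clearing denominators: if $\{y_1,\dots,y_n\}$ generates $\mathcal{M}$ and $\{y_1,\dots,y_k\}$ is a maximal $R$-linearly independent subset, then for each $j>k$ there is a nonzero $a_j\in R$ with $a_j y_j\in\langle y_1,\dots,y_k\rangle$; with $a=\prod_{j>k}a_j$ the submodule $a\mathcal{M}$ lies inside a free module of rank $k$ and is free by item~4, while torsion-freeness makes $\mathrm{m}\mapsto a\mathrm{m}$ an $R$-module isomorphism $\mathcal{M}\simeq a\mathcal{M}$. For item~5, map $R^r$ onto $\mathcal{M}$ by $e_i\mapsto w_i$; the kernel $K$ is free by item~4, the exact sequence $0\to K\to R^r\to\mathcal{M}\to 0$ splits, and the resulting isomorphism $R^r\simeq K\oplus R^r$ combined with invariant basis number forces $K=0$. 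The example $\{e_1,2e_2\}\subset\mathbb{Z}^2$ is $\mathbb{Z}$-linearly independent but generates only $\mathbb{Z}\oplus 2\mathbb{Z}$, a proper submodule.

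The main obstacle is item~6, the invariant factor decomposition, which I would prove by induction on $t=\mathrm{rank}(\mathcal{N})$ via the classical maximal-content argument. Consider the collection of ideals $\varphi(\mathcal{N})$ as $\varphi$ ranges over $\mathrm{Hom}_R(\mathcal{M},R)$; since a PID is Noetherian, pick $\varphi_1$ so that $\varphi_1(\mathcal{N})=\langle s_1\rangle$ is maximal in this collection, and choose $v_1\in\mathcal{N}$ with $\varphi_1(v_1)=s_1$. The crucial lemma is that $s_1\mid\psi(v_1)$ for every $\psi\in\mathrm{Hom}_R(\mathcal{M},R)$: otherwise $\langle s_1,\psi(v_1)\rangle=\langle d\rangle$ with $d$ a proper divisor of $s_1$, and writing $d=\alpha s_1+\beta\psi(v_1)$ yields $(\alpha\varphi_1+\beta\psi)(v_1)=d$, so the ideal $(\alpha\varphi_1+\beta\psi)(\mathcal{N})$ contains $\langle d\rangle$ and strictly enlarges $\langle s_1\rangle$, contradicting maximality. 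Applying the lemma to the coordinate projections of any basis of $\mathcal{M}$ yields $v_1=s_1\tau_1$ for some $\tau_1\in\mathcal{M}$, and $\varphi_1(v_1)=s_1$ forces $\varphi_1(\tau_1)=1$. Hence $\mathcal{M}=R\tau_1\oplus\ker\varphi_1$ and $\mathcal{N}=Rs_1\tau_1\oplus(\mathcal{N}\cap\ker\varphi_1)$, so the induction hypothesis applied to $\mathcal{N}\cap\ker\varphi_1$ as a submodule of the free module $\ker\varphi_1$ produces the remaining $\tau_i$ and $s_i$. The divisibility $s_1\mid s_2$ is automatic because any functional on $\ker\varphi_1$ extends to $\mathcal{M}$ by sending $\tau_1$ to $0$, so by maximality of $\langle s_1\rangle$ the ideal $\langle s_2\rangle$ must lie inside $\langle s_1\rangle$.
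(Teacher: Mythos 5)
The paper states this theorem explicitly \emph{without proof}, deferring to the reference on modules over PIDs, so there is no in-text argument to compare against beyond the short remark embedded in item~1 (which you correctly identify as already done). Your proof is the standard structure-theorem argument and is essentially sound: items~4, 3, 2, 5 are handled cleanly, and the maximal-content approach to item~6 is the right machinery. Two small points deserve tightening. First, your proof of item~3 piggybacks on the induction for item~4 and therefore only covers the case where $\mathcal{M}$ is free, whereas the statement allows any finitely generated $\mathcal{M}$ (harmless for the paper's applications, where $\mathcal{M}=\bigl(\mathbb{F}_q[x]\bigr)^\ell$, but worth flagging). Second, and more substantively, the final sentence on $s_1\mid s_2$ is a genuine gap as written: you assert that ``by maximality of $\langle s_1\rangle$ the ideal $\langle s_2\rangle$ must lie inside $\langle s_1\rangle$,'' but maximality in the poset of ideals $\{\varphi(\mathcal{N})\}$ does not by itself force comparability with other members of the collection. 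The correct step is to consider $\varphi_1+\varphi_2$, where $\varphi_2$ is the zero-extension of the optimal functional on $\ker\varphi_1$: since $\varphi_2(\tau_1)=0$ one gets $(\varphi_1+\varphi_2)(v_1)=s_1$ and $(\varphi_1+\varphi_2)(v_2)=s_2$, hence $(\varphi_1+\varphi_2)(\mathcal{N})\supseteq\langle\gcd(s_1,s_2)\rangle\supseteq\langle s_1\rangle$, and now maximality forces equality, giving $s_1\mid s_2$. With that repair the argument is complete.
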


Throughout this section, let $\mathcal{C}$ refer to a $\Lambda$-MT code over $\mathbb{F}_q$ with block lengths $\left(m_1,m_2,\ldots,m_\ell\right)$, where $\Lambda=\left(\lambda_1,\lambda_2,\ldots,\lambda_\ell\right)$. From Definition \ref{def_MTcode}, $\mathcal{C}$ is a submodule of a free module of rank $\ell$ over a PID. Hence, Theorem \ref{finitely_modules} asserts that $\mathcal{C}$ is a free $\mathbb{F}_q[x]$-module of rank at most $\ell$. We aim to find a basis of $\mathcal{C}$ in a reduced form. A generating set of $\mathcal{C}$ as a submodule of $\left(\mathbb{F}_q[x]\right)^\ell$ can be obtained from any generator matrix of $\mathcal{C}$ as a linear code over $\mathbb{F}_q$, see Example \ref{ex_Sh}. This generating set can be reduced to a basis of $\mathcal{C}$ using the Hermite normal form of matrices over PIDs. More generally, suppose that $R$ is a PID and that $\mathcal{M}$ is an ${R}$-submodule of ${R}^\ell$. From Theorem \ref{finitely_modules}, $\mathcal{M}$ is a finitely generated free module. Suppose that $S=\left\{\mathbf{g}_1,\mathbf{g}_2,\ldots,\mathbf{g}_r \right\} \subseteq\mathcal{M}$ is a generating set of $\mathcal{M}$, where $\mathbf{g}_i=\left( g_{i,1},g_{i,2},\ldots,g_{i,\ell}\right)\in{R}^\ell$ for $1\le i\le r$. An $r\times \ell$ generator matrix for $\mathcal{M}$ is constructed as follows:
\begin{equation}
\label{Initial_GPM}
\mathbf{G}=\begin{pmatrix}
g_{1,1} & g_{1,2} & \ldots & g_{1,\ell}\\
g_{2,1} & g_{2,2} & \ldots & g_{2,\ell}\\
\vdots & \vdots & \ddots & \vdots \\
g_{r,1} & g_{r,2} & \ldots & g_{r,\ell}
\end{pmatrix}.
\end{equation}
Let $\phi: {R}^r \rightarrow \mathcal{M}$ be the ${R}$-module homomorphism defined by $\phi\left(\mathbf{a} \right)=\mathbf{a}\mathbf{G}$ for every $\mathbf{a}\in{R}^r$. Since $S$ generates $\mathcal{M}$, $\phi$ is surjective. Suppose that $\mathcal{N}$ is a submodule of $R^r$ that has a generator matrix $\mathbf{G}'$ of size $t\times \ell$. On the one hand, $\mathcal{N}$ is a submodule of $\mathcal{M}$ if and only if $\mathbf{G}'=\mathbf{N}\mathbf{G}$ for some matrix $\mathbf{N}$ over ${R}$ of size $t\times r$. On the other hand, if $\mathbf{G}'=\mathbf{U}\mathbf{G}$ for an invertible matrix $\mathbf{U}$, then $\mathcal{M}=\mathcal{N}$, because $\mathbf{G}'=\mathbf{U}\mathbf{G}$ implies $\mathcal{N}\subseteq\mathcal{M}$, while $\mathbf{G}=\mathbf{U}^{-1}\mathbf{G}'$ implies $\mathcal{M}\subseteq\mathcal{N}$. 

\begin{Def}
\label{Two_MT_Codes}
Let $\mathbf{G}$ and $\mathbf{G}'$ be two matrices over ${R}$ of the same size. We say that $\mathbf{G}$ and $\mathbf{G}'$ are left equivalent if $\mathbf{G}'=\mathbf{U}\mathbf{G}$ for some invertible matrix $\mathbf{U}$. Thus, two left equivalent matrices over $R$ generate the same $R$-module.\end{Def}

An Euclidean domain is an integral domain in which we can perform Euclidean division. Examples of Euclidean domains include $\mathbb{Z}$, $\mathbb{F}_q[x]$, and the ring of formal power series over any field. Every ideal in a Euclidean domain is generated by a single element, so Euclidean domains are automatically PIDs. A matrix over a Euclidean domain is invertible if and only if it is a product of elementary matrices, where an elementary matrix is the identity matrix after performing one elementary row operation. In addition, left and right multiplications by elementary matrices correspond to performing elementary row and column operations, respectively. Thus, for matrices over a Euclidean domain, a left equivalent matrix can be obtained by elementary row operations. Furthermore, for a module $\mathcal{M}$ over a Euclidean domain with a generator matrix $\mathbf{G}$, applying some elementary row operations to $\mathbf{G}$ results in a left equivalent matrix, the latter being a generator matrix for $\mathcal{M}$ as well. We will be interested in the case where $\mathcal{M}$ is a MT-code over $\mathbb{F}_q$, hence $R=\mathbb{F}_q[x]$ and $\mathbf{G}$ is a GPM.

\begin{Thm}
\label{Free_MTcode}
Let $\mathcal{C}$ be a $\Lambda$-MT code over $\mathbb{F}_q$ of block lengths $\left(m_1,m_2,\ldots,m_\ell\right)$ that has an $\ell\times\ell$ GPM $\mathbf{G}$. As an $\mathbb{F}_q[x]$-submodule of $\left(\mathbb{F}_q[x]\right)^\ell$, $\mathcal{C}$ is free of rank $\ell$. Moreover, rows of $\mathbf{G}$ form a basis for $\mathcal{C}$.\end{Thm}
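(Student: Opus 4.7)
The plan is to combine the structural results for finitely generated modules over PIDs collected in Theorem \ref{finitely_modules} with the fact that every $\Lambda$-MT code contains the ``full'' submodule $M_\Lambda$ from Definition \ref{def_MTcode}. The argument proceeds in three short steps: establishing freeness, sandwiching the rank between $\ell$ and $\ell$, and then invoking the counting criterion for bases in free modules over PIDs.

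First, since $\mathbb{F}_q[x]$ is a PID and $(\mathbb{F}_q[x])^\ell$ is free of rank $\ell$, part~(4) of Theorem \ref{finitely_modules} immediately gives that the submodule $\mathcal{C}$ is itself free. Part~(3) of the same theorem yields the upper bound $\mathrm{rank}(\mathcal{C})\le \ell$.

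Second, I would show $\mathrm{rank}(\mathcal{C})\ge \ell$ by exploiting the containment $M_\Lambda\subseteq\mathcal{C}$ inherent in Definition~\ref{def_MTcode}. The $\ell$ elements
\begin{equation*}
(x^{m_1}-\lambda_1)\mathbf{e}_1,\ (x^{m_2}-\lambda_2)\mathbf{e}_2,\ \ldots,\ (x^{m_\ell}-\lambda_\ell)\mathbf{e}_\ell
\end{equation*}
lie in $M_\Lambda$ and are supported on mutually disjoint coordinates with nonzero polynomial entries, so they are $\mathbb{F}_q[x]$-linearly independent. Hence $\mathrm{rank}(M_\Lambda)=\ell$, and applying part~(3) of Theorem \ref{finitely_modules} once more to the chain $M_\Lambda\subseteq\mathcal{C}$ gives $\ell=\mathrm{rank}(M_\Lambda)\le\mathrm{rank}(\mathcal{C})$. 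Combined with the first step this forces $\mathrm{rank}(\mathcal{C})=\ell$.

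Third, by hypothesis $\mathbf{G}$ is an $\ell\times\ell$ GPM, so its $\ell$ rows generate $\mathcal{C}$. A generating set of cardinality equal to the rank of a free module over a PID is automatically a basis by part~(5) of Theorem \ref{finitely_modules}; therefore the rows of $\mathbf{G}$ form a basis of $\mathcal{C}$. The only conceptual step requiring care is the lower bound on the rank, but this is handled cleanly by the explicit generators of $M_\Lambda$, so no serious obstacle is expected.
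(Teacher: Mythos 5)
Your argument is correct and follows essentially the same route as the paper's own proof: freeness of $\mathcal{C}$ and the upper bound $\mathrm{rank}(\mathcal{C})\le\ell$ come from being a submodule of $(\mathbb{F}_q[x])^\ell$, the lower bound from the containment $M_\Lambda\subseteq\mathcal{C}$ with $\mathrm{rank}(M_\Lambda)=\ell$, and the basis claim from part~(5) of Theorem~\ref{finitely_modules}. The only cosmetic difference is that you spell out the disjoint-support argument for the linear independence of the $(x^{m_j}-\lambda_j)\mathbf{e}_j$, whereas the paper simply declares them a basis of $M_\Lambda$.
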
 
\begin{proof}
{We know that $\left(\mathbb{F}_q[x]\right)^\ell$ is a free module of rank $\ell$ over a PID. From Theorem \ref{finitely_modules}, $\mathcal{C}$ is free of rank $\le \ell$. The submodule $M_\Lambda$ given in Definition \ref{def_MTcode} is free of rank $\ell$; $M_\Lambda$ has the basis $\left\{(x^{m_j}-\lambda_j) \mathbf{e}_j\right\}_{j=1}^\ell$. Then rank($\mathcal{C}$) $\ge \ell$ because $\mathcal{C}\supseteq M_\Lambda$. Thus, $\mathcal{C}$ is free of rank $\ell$. Rows of $\mathbf{G}$ form a generating set for $\mathcal{C}$ of size equal to the rank, and thus form a basis by Theorem \ref{finitely_modules}.}
\end{proof}

\begin{Thm}
\label{Containment}
Let $\mathcal{C}$ and $\mathcal{C}'$ be two $\Lambda$-MT codes of block lengths $\left(m_1,m_2,\ldots,m_\ell\right)$ with $\ell\times\ell$ GPMs $\mathbf{G}$ and $\mathbf{G}'$, respectively. Then, $\mathcal{C}'\subseteq\mathcal{C}$ if and only if $\mathbf{G}'=\mathbf{U}\mathbf{G}$ for some matrix $\mathbf{U}$, where $\mathbf{U}$ is invertible if and only if $\mathcal{C}'=\mathcal{C}$.\end{Thm}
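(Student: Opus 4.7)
The plan is to leverage Theorem \ref{Free_MTcode}, which guarantees that the rows of an $\ell\times\ell$ GPM form an $\mathbb{F}_q[x]$-basis (hence are $\mathbb{F}_q[x]$-linearly independent) of the corresponding $\Lambda$-MT code. This linear independence is precisely what lets me translate containment statements about the codes into matrix equations and back.

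For the first equivalence, I would argue both directions separately. If $\mathcal{C}'\subseteq\mathcal{C}$, then each row of $\mathbf{G}'$ lies in $\mathcal{C}$ and therefore admits a unique expansion as an $\mathbb{F}_q[x]$-linear combination of the basis rows of $\mathbf{G}$; assembling the coefficient vectors as rows of an $\ell\times\ell$ matrix $\mathbf{U}$ gives $\mathbf{G}'=\mathbf{U}\mathbf{G}$. The reverse direction is immediate: if $\mathbf{G}'=\mathbf{U}\mathbf{G}$, then every row of $\mathbf{G}'$ is in the row span of $\mathbf{G}$, which is $\mathcal{C}$, so $\mathcal{C}'\subseteq\mathcal{C}$.

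For the second equivalence, if $\mathbf{U}$ is invertible then $\mathbf{G}=\mathbf{U}^{-1}\mathbf{G}'$ yields the reverse containment $\mathcal{C}\subseteq\mathcal{C}'$, so combined with the first part we obtain $\mathcal{C}=\mathcal{C}'$. Conversely, assume $\mathcal{C}=\mathcal{C}'$. Applying the first part in both directions produces matrices $\mathbf{U}$ and $\mathbf{V}$ with $\mathbf{G}'=\mathbf{U}\mathbf{G}$ and $\mathbf{G}=\mathbf{V}\mathbf{G}'$, so
\begin{equation*}
\mathbf{G}=\mathbf{V}\mathbf{U}\mathbf{G}, \qquad \text{i.e.,} \qquad (\mathbf{V}\mathbf{U}-\mathbf{I}_\ell)\mathbf{G}=\mathbf{0}.
\end{equation*}
Here the main obstacle is concluding $\mathbf{V}\mathbf{U}=\mathbf{I}_\ell$ from this identity: each row of the matrix $\mathbf{V}\mathbf{U}-\mathbf{I}_\ell$ encodes an $\mathbb{F}_q[x]$-linear combination of the basis rows of $\mathbf{G}$ that evaluates to $\mathbf{0}$, so the $\mathbb{F}_q[x]$-linear independence from Theorem \ref{Free_MTcode} forces that row to be zero. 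Hence $\mathbf{V}\mathbf{U}=\mathbf{I}_\ell$, and since $\mathbf{U},\mathbf{V}$ are square matrices over the commutative ring $\mathbb{F}_q[x]$, a one-sided inverse is two-sided, so $\mathbf{U}$ is invertible with $\mathbf{U}^{-1}=\mathbf{V}$.
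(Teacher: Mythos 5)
Your proof is correct and follows essentially the same route as the paper's: translate containment into a matrix factorization via the basis property from Theorem \ref{Free_MTcode}, then use $\mathbb{F}_q[x]$-linear independence of the GPM rows to force $\mathbf{V}\mathbf{U}=\mathbf{I}_\ell$. The paper instead substitutes in the other order to obtain $\mathbf{U}\mathbf{V}=\mathbf{I}_\ell$ and invokes linear independence of the rows of $\mathbf{G}'$, but this is an immaterial symmetry; in both versions the one-sided identity for square matrices over the commutative ring $\mathbb{F}_q[x]$ yields invertibility of $\mathbf{U}$.
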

\begin{proof}
{We have $\mathcal{C}'\subseteq\mathcal{C}$ if and only if $\mathbf{G}$ generates the rows of $\mathbf{G}'$ if and only if $\mathbf{G}'=\mathbf{U}\mathbf{G}$ for some matrix $\mathbf{U}$. 

Assume $\mathbf{G}'=\mathbf{U}\mathbf{G}$ for an invertible $\mathbf{U}$. Then $\mathcal{C}'\subseteq\mathcal{C}$, and $\mathcal{C}\subseteq\mathcal{C}'$ because $\mathbf{G}=\mathbf{U}^{-1}\mathbf{G}'$, hence $\mathcal{C}'=\mathcal{C}$. 
Conversely, assume that $\mathcal{C}'=\mathcal{C}$. From the first part of this theorem, there is a matrix $\mathbf{V}$ such that $\mathbf{G}=\mathbf{V}\mathbf{G}'$. Now $\mathbf{G}'=\mathbf{U}\mathbf{G}=\mathbf{U}\mathbf{V}\mathbf{G}'$. That is, $\left(\mathbf{U}\mathbf{V}-\mathbf{I}_{\ell}\right)\mathbf{G}'=\mathbf{0}$. From Theorem \ref{Free_MTcode}, rows of $\mathbf{G}'$ form a basis for $\mathcal{C}$, hence $\mathbf{U}\mathbf{V}-\mathbf{I}_{\ell}=\mathbf{0}$. Then $\mathbf{U}$ is invertible.}
\end{proof}

\begin{Thm}
\label{equivalent}
Let $\mathcal{C}$ be a $\Lambda$-MT code over $\mathbb{F}_q$ of block lengths $\left(m_1,m_2,\ldots,m_\ell\right)$ that has an $\ell\times\ell$ GPM $\mathbf{G}$. An $\ell\times \ell$ matrix $\mathbf{G}'$ is a GPM of $\mathcal{C}$ if and only if $\mathbf{G}$ and $\mathbf{G}'$ are left equivalent.\end{Thm}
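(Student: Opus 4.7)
My plan is to obtain Theorem \ref{equivalent} as an almost direct corollary of Theorem \ref{Containment}, using Theorem \ref{Free_MTcode} to guarantee that the rows of an $\ell\times\ell$ GPM actually form a basis (so we never fall short of generating $\mathcal{C}$).

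For the forward implication, I would assume $\mathbf{G}'$ is another $\ell\times\ell$ GPM of $\mathcal{C}$, meaning both $\mathbf{G}$ and $\mathbf{G}'$ generate $\mathcal{C}$ as an $\mathbb{F}_q[x]$-submodule of $(\mathbb{F}_q[x])^\ell$. Applying Theorem \ref{Containment} with the two codes taken to coincide ($\mathcal{C}'=\mathcal{C}$) gives an invertible $\mathbf{U}\in \mathrm{GL}_\ell(\mathbb{F}_q[x])$ with $\mathbf{G}'=\mathbf{U}\mathbf{G}$. By Definition \ref{Two_MT_Codes}, this is precisely left equivalence.

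For the converse, suppose $\mathbf{G}$ and $\mathbf{G}'$ are left equivalent, so $\mathbf{G}'=\mathbf{U}\mathbf{G}$ with $\mathbf{U}$ invertible. Then the rows of $\mathbf{G}'$ lie in the row module generated by $\mathbf{G}$, which is $\mathcal{C}$; conversely, $\mathbf{G}=\mathbf{U}^{-1}\mathbf{G}'$ shows that the rows of $\mathbf{G}$ lie in the row module of $\mathbf{G}'$. Hence $\mathbf{G}'$ generates $\mathcal{C}$. Because $\mathcal{C}$ is free of rank $\ell$ by Theorem \ref{Free_MTcode}, an $\ell\times\ell$ generating matrix automatically has its rows as a basis, so $\mathbf{G}'$ qualifies as a GPM of $\mathcal{C}$.

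There is no real obstacle here; the statement is essentially a repackaging of Theorem \ref{Containment} under the constraint that both matrices have size $\ell\times\ell$. The only point that requires a moment of care is confirming that any $\ell\times\ell$ matrix related to $\mathbf{G}$ by an invertible left multiplication still generates $\mathcal{C}$ (rather than merely a submodule)—but this is immediate from $\mathbf{G}=\mathbf{U}^{-1}\mathbf{G}'$, which gives the reverse containment $\mathcal{C}\subseteq \langle \mathbf{G}'\rangle$ and hence equality.
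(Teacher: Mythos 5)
Your proof is correct and follows the same route as the paper: both rely on Theorem~\ref{Containment} applied with $\mathcal{C}'=\mathcal{C}$ to pass between left equivalence and equality of the generated modules. The appeal to Theorem~\ref{Free_MTcode} in your converse direction is harmless but unnecessary, since a GPM is defined simply as a polynomial generator matrix for $\mathcal{C}$, and the containment $\mathcal{C}\subseteq\langle\mathbf{G}'\rangle$ obtained from $\mathbf{G}=\mathbf{U}^{-1}\mathbf{G}'$ already makes $\mathbf{G}'$ a GPM without needing to observe that its rows form a basis.
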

\begin{proof}
{Let $\mathcal{C}'$ be the $\mathbb{F}_q[x]$-module generated by $\mathbf{G}'$. From Theorem \ref{Containment}, $\mathcal{C}' =\mathcal{C}$ if and only if $\mathbf{G}'=\mathbf{U}\mathbf{G}$ for an invertible matrix $\mathbf{U}$ if and only if $\mathbf{G}$ and $\mathbf{G}'$ are left equivalent.}
\end{proof}

From the discussion before Theorem \ref{Free_MTcode} and the fact that $\mathbb{F}_q[x]$ is a PID, the following is proven.
\begin{Coroll}
Let $\mathcal{C}$ be a $\Lambda$-MT code over $\mathbb{F}_q$ of block lengths $\left(m_1,m_2,\ldots,m_\ell\right)$ that has an $\ell\times\ell$ GPM $\mathbf{G}$. An $\ell\times \ell$ matrix $\mathbf{G}'$ is a GPM of $\mathcal{C}$ if and only if $\mathbf{G}'$ can be obtained from $\mathbf{G}$ by elementary row operations.\end{Coroll}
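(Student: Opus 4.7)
The plan is to chain together Theorem \ref{equivalent} with the characterization of invertible matrices over a Euclidean domain recalled in the paragraph immediately preceding Theorem \ref{Free_MTcode}. By Theorem \ref{equivalent}, $\mathbf{G}'$ is a GPM of $\mathcal{C}$ if and only if $\mathbf{G}$ and $\mathbf{G}'$ are left equivalent, i.e., $\mathbf{G}' = \mathbf{U}\mathbf{G}$ for some invertible matrix $\mathbf{U}$ over $\mathbb{F}_q[x]$. So the corollary reduces to translating the existence of such a $\mathbf{U}$ into a statement about elementary row operations.

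For the ``if'' direction, suppose $\mathbf{G}'$ is obtained from $\mathbf{G}$ by a finite sequence of elementary row operations. Each such operation corresponds to left multiplication by an elementary matrix $\mathbf{E}_i$, so $\mathbf{G}' = \mathbf{E}_s \mathbf{E}_{s-1} \cdots \mathbf{E}_1 \mathbf{G}$. Each elementary matrix is invertible (with inverse the elementary matrix that undoes the operation), hence their product $\mathbf{U} = \mathbf{E}_s \cdots \mathbf{E}_1$ is invertible, and Theorem \ref{equivalent} gives that $\mathbf{G}'$ is a GPM of $\mathcal{C}$.

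For the ``only if'' direction, assume $\mathbf{G}'$ is a GPM of $\mathcal{C}$. By Theorem \ref{equivalent} there is an invertible $\mathbf{U}$ with $\mathbf{G}' = \mathbf{U}\mathbf{G}$. Since $\mathbb{F}_q[x]$ is a Euclidean domain, the fact quoted in the preceding paragraph asserts that every invertible matrix over $\mathbb{F}_q[x]$ is a product of elementary matrices; write $\mathbf{U} = \mathbf{E}_s \cdots \mathbf{E}_1$. Then $\mathbf{G}' = \mathbf{E}_s \cdots \mathbf{E}_1 \mathbf{G}$, so $\mathbf{G}'$ is obtained from $\mathbf{G}$ by performing the elementary row operations corresponding to $\mathbf{E}_1, \mathbf{E}_2, \ldots, \mathbf{E}_s$ in that order.

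There is essentially no obstacle here; the corollary is a direct repackaging of Theorem \ref{equivalent} once the equivalence between ``invertible over a Euclidean domain'' and ``product of elementary matrices'' is in hand. The only subtlety is remembering that this characterization genuinely requires the Euclidean (not merely PID) hypothesis, which is why the statement applies to $\mathbb{F}_q[x]$, and that left multiplication by an elementary matrix effects exactly one elementary row operation on $\mathbf{G}$.
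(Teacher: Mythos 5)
Your proof is correct and takes exactly the approach the paper intends: reduce via Theorem \ref{equivalent} to left equivalence, then invoke the fact (stated in the paragraph before Theorem \ref{Free_MTcode}) that over a Euclidean domain invertible matrices are precisely products of elementary matrices, so left equivalence is the same as being obtainable by elementary row operations. You also correctly flag that the paper's one-line justification citing ``the fact that $\mathbb{F}_q[x]$ is a PID'' is slightly imprecise --- the Euclidean-domain property, not merely the PID property, is what underwrites the decomposition of invertible matrices into elementary ones.
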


In the representation of $\Lambda$-MT codes by $\mathbb{F}_q[x]$-submodules of $\left(\mathbb{F}_q[x]\right)^\ell$ containing $M_\Lambda$, the zero code is represented by $M_\Lambda$. However, $M_\Lambda$ is a free module with basis $\left\{(x^{m_j}-\lambda_j) \mathbf{e}_j\right\}_{j=1}^\ell$. Hence, a GPM for $M_\Lambda$ is the diagonal matrix 
\begin{equation*}
\mathbf{D}=\mathrm{diag}\left[x^{m_1}-\lambda_1,x^{m_2}-\lambda_2,\ldots,x^{m_\ell}-\lambda_\ell\right]=\begin{pmatrix}
x^{m_1}-\lambda_1 & 0 & \ldots & 0\\
0 & x^{m_2}-\lambda_2 & \ldots & 0\\
\vdots & \vdots & \ddots & \vdots \\
0 & 0 & \ldots & x^{m_\ell}-\lambda_\ell
\end{pmatrix}.
\end{equation*}
For any $\Lambda$-MT code $\mathcal{C}$ with a GPM $\mathbf{G}$, the $\mathbb{F}_q$-linearity of $\mathcal{C}$ indicates that it contains the zero code. Then from Theorem \ref{Containment}, there is a matrix $\mathbf{A}$ such that
\begin{equation}
\label{identical_eq}
\mathbf{A}\mathbf{G}=\mathbf{D}.
\end{equation}
Equation \eqref{identical_eq} is called the identical equation \cite{Matsui2015}. In particular, if $\mathcal{C}$ is $(\ell,\lambda)$-QT, then $\mathbf{A}$ and $\mathbf{G}$ commute.
\begin{Thm}
\label{A_G_Commutativity}
Let $\mathcal{C}$ be an $(\ell,\lambda)$-QT code with a GPM $\mathbf{G}$ that satisfies \eqref{identical_eq} by the matrix $\mathbf{A}$. Then $\mathbf{G}\mathbf{A}=\mathbf{D}$.\end{Thm}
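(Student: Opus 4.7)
The plan is to exploit the special feature of the QT case: since all block lengths are equal to some common $m$ and all shift constants are equal to $\lambda$, the diagonal matrix $\mathbf{D}$ degenerates to the scalar matrix $(x^m-\lambda)\,\mathbf{I}_\ell$. Writing $c := x^m-\lambda \in \mathbb{F}_q[x]$, the identical equation becomes $\mathbf{A}\mathbf{G}=c\,\mathbf{I}_\ell$, and the claim $\mathbf{G}\mathbf{A}=\mathbf{D}$ is now the assertion that $\mathbf{A}$ and $\mathbf{G}$ commute up to the nonzero scalar $c$. This is a purely linear-algebraic fact over the PID $\mathbb{F}_q[x]$, and the hypothesis that $\mathcal{C}$ is QT is used precisely so that $\mathbf{D}$ is scalar (for a genuinely multi-twisted $\Lambda$ the argument would fail).

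First I would take determinants of the identical equation in the domain $\mathbb{F}_q[x]$: from $\mathbf{A}\mathbf{G}=c\,\mathbf{I}_\ell$ one gets $\det(\mathbf{A})\det(\mathbf{G})=c^{\ell}\neq 0$, so in particular $\det(\mathbf{A})$ is a nonzero polynomial. Hence, viewing everything in the field of fractions $\mathbb{F}_q(x)$, the square matrix $\mathbf{A}$ is invertible. From $\mathbf{A}\mathbf{G}=c\,\mathbf{I}_\ell$ I would then deduce $\mathbf{G}=c\,\mathbf{A}^{-1}$ over $\mathbb{F}_q(x)$, so
\begin{equation*}
\mathbf{G}\mathbf{A} \;=\; c\,\mathbf{A}^{-1}\mathbf{A} \;=\; c\,\mathbf{I}_\ell \;=\; \mathbf{D}.
\end{equation*}
Since all entries of $\mathbf{G}$ and $\mathbf{A}$ lie in $\mathbb{F}_q[x]\subseteq \mathbb{F}_q(x)$, the identity $\mathbf{G}\mathbf{A}=\mathbf{D}$ holds already in $\mathbb{F}_q[x]$, which is what is required.

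If one prefers to avoid passing to the field of fractions, an equivalent route is via the classical adjugate: multiplying $\mathbf{A}\mathbf{G}=c\,\mathbf{I}_\ell$ on the left by $\mathrm{adj}(\mathbf{A})$ yields $\det(\mathbf{A})\,\mathbf{G}=c\,\mathrm{adj}(\mathbf{A})$; multiplying this on the right by $\mathbf{A}$ and using $\mathrm{adj}(\mathbf{A})\,\mathbf{A}=\det(\mathbf{A})\,\mathbf{I}_\ell$ gives $\det(\mathbf{A})\,\mathbf{G}\mathbf{A}=c\,\det(\mathbf{A})\,\mathbf{I}_\ell$, and cancelling the nonzero polynomial $\det(\mathbf{A})$ in the domain $\mathbb{F}_q[x]$ finishes the argument. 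The main conceptual point, rather than an obstacle, is recognizing that the QT hypothesis is exactly what makes $\mathbf{D}$ a scalar matrix and therefore central in the matrix ring $M_\ell(\mathbb{F}_q[x])$; once this is observed, the commutativity $\mathbf{G}\mathbf{A}=\mathbf{A}\mathbf{G}=\mathbf{D}$ follows from standard commutative algebra.
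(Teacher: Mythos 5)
Your proposal is correct, and it reaches the conclusion by a genuinely different route than the paper. The paper argues as follows: set $\mathbf{B}=\mathbf{G}\mathbf{A}$; then $\mathbf{B}\mathbf{G}=\mathbf{G}\mathbf{A}\mathbf{G}=\mathbf{G}\mathbf{D}=(x^m-\lambda)\mathbf{G}=\mathbf{D}\mathbf{G}$ (using, as you do, that $\mathbf{D}$ is scalar in the QT case), so $(\mathbf{B}-\mathbf{D})\mathbf{G}=\mathbf{0}$, and then one cancels $\mathbf{G}$ on the right by invoking the $\mathbb{F}_q[x]$-linear independence of its rows, which was established in Theorem \ref{Free_MTcode}. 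You instead take determinants of $\mathbf{A}\mathbf{G}=c\,\mathbf{I}_\ell$ to see that $\mathbf{A}$ is nonsingular over the fraction field $\mathbb{F}_q(x)$ (or, equivalently, run the adjugate computation entirely inside $\mathbb{F}_q[x]$), and then read off $\mathbf{G}\mathbf{A}=c\,\mathbf{I}_\ell$ as the statement that a one-sided scalar inverse of a square matrix over a commutative domain is two-sided. The two arguments are of course closely related — linear independence of the rows of $\mathbf{G}$ over $\mathbb{F}_q[x]$ is equivalent to $\det\mathbf{G}\neq 0$ — but yours does not need to invoke Theorem \ref{Free_MTcode} and makes the role of the QT hypothesis (that $\mathbf{D}$ is scalar, hence central) more explicit; the paper's version, by leaning on the already-proved freeness result, avoids determinants and is a line shorter. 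Both are complete and correct.
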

\begin{proof}
{Assume $\mathbf{G}\mathbf{A}=\mathbf{B}$. Then $\mathbf{B}\mathbf{G}=\mathbf{G}\mathbf{A}\mathbf{G}=\mathbf{G}\mathbf{D}=\left(x^m-\lambda\right)\mathbf{G}=\mathbf{D}\mathbf{G}$. That is, $\left(\mathbf{B}-\mathbf{D}\right)\mathbf{G}=\mathbf{0}$. Since rows of $\mathbf{G}$ are $\mathbb{F}_q[x]$-linearly independent, $\left(\mathbf{B}-\mathbf{D}\right)=\mathbf{0}$ and $\mathbf{B}=\mathbf{D}$.}
\end{proof}

The matrix $\mathbf{A}$ plays a fundamental role in constructing a GPM for the dual code of a MT code. An $\ell\times\ell$ polynomial matrix over $\mathbb{F}_q[x]$ is GPM of a $\left(\lambda_1,\lambda_2,\ldots,\lambda_\ell\right)$-MT code if and only if it satisfies the identical equation. Let us fix $\ell$, $\Lambda$, and the block lengths $\left(m_1,m_2,\ldots,m_\ell\right)$. Set
\begin{equation*}
\mathscr{S}=\left\{\mathbf{G}_i \  \big|\  \exists\  \mathbf{A}_i \text{ such that } \mathbf{A}_i\mathbf{G}_i=\mathbf{D}\right\},
\end{equation*}
the set of all the $\ell\times\ell$ matrices that satisfy the identical equation. Then for any $\Lambda$-MT code with a GPM $\mathbf{G}$, we have $\mathbf{G}\in\mathscr{S}$. However, each $\mathbf{G}\in\mathscr{S}$ is a GPM for some $\Lambda$-MT code. We partition $\mathscr{S}$ with an equivalence relation $\sim$ defined by the left equivalence of matrices. That is, every pair $\left(\mathbf{G}_i, \mathbf{G}_j\right)$ of elements in $\mathscr{S}$ is equivalent, written $\mathbf{G}_i\sim\mathbf{G}_j$, if and only if $\mathbf{G}_i=\mathbf{U}\mathbf{G}_j$ for an invertible $\mathbf{U}$. The quotient $\mathscr{S}/\sim$ is the set of all equivalence classes of $\mathscr{S}$ by $\sim$. Theorem \ref{equivalent} shows a one-to-one correspondence between the set of all $\Lambda$-MT codes and $\mathscr{S}/\sim$. Our next goal is to find a unique simple representative for each equivalence class that can be obtained in an algorithmic way. The representative of each equivalence class is called the reduced GPM of the corresponding $\Lambda$-MT code. GPMs are matrices over the Euclidean domain $\mathbb{F}_q[x]$, hence the Hermite normal form of left equivalent matrices over a PID is an intuitive representation to use. In addition, a matrix over a PID is left equivalent to a unique Hermite normal form \cite{gathen_gerhard_2013}.

Two elements $r_1$ and $r_2$ in a PID ${R}$ are associate if there is a unit $u\in{R}$ such that $r_1=u r_2$. A complete set of non-associates of ${R}$ is a set $\mathscr{P}$ such that for every $r\in{R}$, there exists a unique $a\in\mathscr{P}$ associate of $r$. Examples of a complete set of non-associates include 
\begin{enumerate}
\item $\mathscr{P}=\{0,1,2,3,\ldots\}$ for ${R}=\mathbb{Z}$,
\item $\mathscr{P}$ is the set of all monic polynomials over $\mathbb{F}_q$ for ${R}=\mathbb{F}_q[x]$, and
\item $\mathscr{P}=\{x^i+x^{i+1}f(x)|f(x)\in{R}\text{ and }i\ge 0\}$ for ${R}=\mathbb{F}[\![x]\!]$, the ring of formal power series in $x$ over the field $\mathbb{F}$.
\end{enumerate}
If $R$ is a PID with a complete set of non-associates $\mathscr{P}$, then there is a one-to-one correspondence between the set of ideals of ${R}$ and $\mathscr{P}$. Specifically, the ideal $\langle r \rangle\subseteq R$ corresponds to the unique associate element of $r$ in $\mathscr{P}$. For every $a\in\mathscr{P}$, let $\mathscr{Q}(a)\subseteq{R}$ be a complete set of coset representatives of the quotient ring ${R}/\langle a \rangle$. That is, for each $b+\langle a \rangle \in {R}/\langle a \rangle$, there is a unique $c \in \mathscr{Q}(a)$ such that $c-b \in\langle a \rangle$. Continuing the above examples, we have
\begin{enumerate}
\item $\mathscr{Q}(a)=\{0,1,2,\ldots,a-1\}$ for ${R}=\mathbb{Z}$,
\item $\mathscr{Q}(a)$ is the set of all polynomials over $\mathbb{F}_q$ of degree less than that of $a$, and
\item $\mathscr{Q}(a)=\{a_0+a_1 x +\cdots+a_{i-1}x^{i-1}| a_0,a_1,\ldots,a_{i-1}\in\mathbb{F}\}$ for ${R}=\mathbb{F}[\![x]\!]$ and $a$ is an element of $\mathscr{P}$ in which the leading term is $x^{i}$.
\end{enumerate}
\begin{Def}
\label{Hermite_form}
Let ${R}$ be a PID, let $\mathscr{P}$ be a complete set of non-associates of $R$, and let $\mathscr{Q}(a)$ be a complete set of coset representatives for every $a\in\mathscr{P}$. A matrix $\mathbf{G}=[g_{i,j}]$ over $R$ of size $h\times \ell$ and rank $r$ is in Hermite normal form if the $i^\mathrm{th}$ row of $\mathbf{G}$ is zero for $r<i\le h$ and there is a sequence $1\le j_1 < j_2 < j_3 < \cdots < j_r \le \ell$ such that for every $1\le i\le r$
\begin{enumerate}
\item $0\ne g_{i,j_i}\in\mathscr{P}$, 
\item $g_{i,t}=0$ for all $1\le t<j_i$, and
\item $g_{t,j_i}\in\mathscr{Q}(g_{i,j_i})$ for all $1\le t<i$.

\end{enumerate}\end{Def}

As a typical example, for $R=\mathbb{Z}$, $\mathscr{P}=\{0,1,2,\ldots\}$, and $\mathscr{Q}(a)=\{0,1,2,\ldots,a-1\}$, the matrix 
\begin{equation*}
\begin{pmatrix}
0 & 3 & -5 & 4 &10 & 0 & 0\\
0 & 0 & 0  & 7 & -1 & 9 & 0\\
0 & 0 & 0  & 0 &  0 & 0 & 1\\
0 & 0 & 0  & 0 &  0 & 0 & 0
\end{pmatrix}
\end{equation*}
is in Hermite normal form. The Hermite normal form of any matrix $\mathbf{G}$ over a PID is the unique matrix that has the form described in Definition \ref{Hermite_form} and is left equivalent to $\mathbf{G}$. The Hermite normal form is unique as long as we fix a complete set of non-associates $\mathscr{P}$ and a complete set of coset representatives for each element of  $\mathscr{P}$. In literature, the Hermite normal form is the known reduced row echelon form when ${R}$ is a field. When $R$ is a Euclidean domain, the Hermite normal form can be obtained by applying elementary row operations.

Let $\mathcal{C}$ be a $\Lambda$-MT code over $\mathbb{F}_q$ and let $S$ be a generating set for $\mathcal{C}$ as an $\mathbb{F}_q[x]$-submodule of $\left(\mathbb{F}_q[x] \right)^\ell$. From \eqref{Initial_GPM}, $S$ can be used to construct a GPM for $\mathcal{C}$ of size $|S|\times \ell$. From now on, we shall use $\mathbf{G}$ to refer to the Hermite normal form of any GPM of $\mathcal{C}$ (see Theorem \ref{equivalent}), and we call $\mathbf{G}$ the reduced GPM of $\mathcal{C}$. From Theorem \ref{Free_MTcode}, $\mathcal{C}$ and $\mathbf{G}$ are of rank $\ell$. Hence, $\mathbf{G}$ is an $\ell\times\ell$ matrix of rank $\ell$ in Hermite normal form. Then Definition \ref{Hermite_form} shows that $\mathbf{G}$ is an upper triangular matrix having diagonal entries as non-zero monic polynomials. Let $\mathbf{A}=[a_{i,j}]$ be the matrix that satisfies the identical equation \eqref{identical_eq} of $\mathbf{G}$. Actually, $\mathbf{A}$ is upper triangular because $R$ is an integral domain and $\mathbf{G}$ and $\mathbf{D}$ are upper triangular matrices. Now, the identical equation shows that $a_{i,i}g_{i,i}=x^{m_i}-\lambda_i$ for $1\le i\le \ell$. That is, the $i^\mathrm{th}$ diagonal entry $g_{i,i}$ of $\mathbf{G}$ divides $\left(x^{m_i}-\lambda_i\right)$ for $1\le i\le \ell$.

\begin{exmp}
\label{ex_Sh2}
We continue with the $(2,1)$-MT code $\mathcal{C}$ given in Example \ref{ex_Sh}. We have shown that $\mathcal{C}$ as an $\mathbb{F}_3[x]$ submodule of $\left( \mathbb{F}_3[x]\right)^2$ is generated from the set given by \eqref{in_ex_Sh}. In \eqref{Initial_GPM}, we showed how such a generating set could build a GPM $\mathbf{M}$ for $\mathcal{C}$. Although the size of $\mathbf{M}$ is $8\times 2$, Theorem \ref{Free_MTcode} asserts that $\mathrm{rank}\left(\mathbf{M}\right)=2$. Since $\mathbf{M}$ has entries in the Euclidean domain $\mathbb{F}_3[x]$, elementary row operations are applied to $\mathbf{M}$ to get the reduced GPM $\mathbf{G}$ of $\mathcal{C}$. Actually, $\mathbf{G}$ is formed from the non-zero rows of the Hermite normal form of $\mathbf{M}$. Namely, 
\begin{equation*}
\mathbf{G}=\begin{pmatrix}
g_{1,1} & g_{1,2}\\
0 & x^{40}+2
\end{pmatrix},
\end{equation*}
where $g_{1,1}= 2+ x+2 x^2+ x^3+ x^4+2 x^5+ x^7+ x^9+2 x^{10}+ x^{11}+2 x^{13}+x^{14}$ and $g_{1,2}= x+ x^4+ x^5+ x^7+2 x^9+2 x^{11}+2 x^{12}+ x^{13}+ x^{14}+ x^{16}+ x^{17}+2 x^{19}+2 x^{21}+2 x^{24}+2 x^{25}+2 x^{27}+ x^{29}+ x^{31}+ x^{32}+2 x^{33}+2 x^{34}+2 x^{36}+2 x^{37}+x^{39}$. One can check that $\mathbf{G}$ satisfies the identical equation for
\begin{equation*}
\mathbf{A}=\begin{pmatrix}
2+2 x + x^4 + x^5+x^6 & \quad 2x(1+x)^4\\
0 & 1
\end{pmatrix}.
\end{equation*}\end{exmp}

\begin{Thm}
\label{dim_MT}
Let $\mathcal{C}$ be a $\Lambda$-MT code over $\mathbb{F}_q$ of block lengths $\left(m_1,m_2,\ldots,m_\ell\right)$ and let $\mathbf{G}=[g_{i,j}]$ be the reduced GPM of $\mathcal{C}$. Then $\mathcal{C}$ is an $\mathbb{F}_q$-vector space of dimension
\begin{equation*}
k=\sum_{j=1}^{\ell}\left( m_j-\mathrm{deg}(g_{j,j})\right).
\end{equation*}\end{Thm}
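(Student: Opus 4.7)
The plan is to sandwich $\mathcal{C}$ between the free $\mathbb{F}_q[x]$-modules $M_\Lambda$ and $(\mathbb{F}_q[x])^\ell$, both of rank $\ell$, and read off $k$ as an $\mathbb{F}_q$-dimension difference of the two quotients. Via the isomorphism $(\mathbb{F}_q[x])^\ell/M_\Lambda \cong \bigoplus_{j=1}^\ell \mathscr{R}_{m_j,\lambda_j}$ constructed in the proof of Theorem \ref{MT_Corresp}, the polynomial representation of the code is identified with $\mathcal{C}/M_\Lambda$, so $k=\dim_{\mathbb{F}_q}(\mathcal{C}/M_\Lambda)$. The short exact sequence of $\mathbb{F}_q$-vector spaces
\begin{equation*}
0 \to \mathcal{C}/M_\Lambda \to (\mathbb{F}_q[x])^\ell/M_\Lambda \to (\mathbb{F}_q[x])^\ell/\mathcal{C} \to 0
\end{equation*}
then reduces the claim to computing the $\mathbb{F}_q$-dimensions of the two outer quotients.

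The middle term is handled immediately: $M_\Lambda$ decomposes componentwise, so $(\mathbb{F}_q[x])^\ell/M_\Lambda \cong \bigoplus_{j=1}^\ell \mathbb{F}_q[x]/\langle x^{m_j}-\lambda_j\rangle$, and this has $\mathbb{F}_q$-dimension $\sum_{j=1}^\ell m_j = n$.

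For the rightmost term I would invoke the invariant-factor theorem (item 6 of Theorem \ref{finitely_modules}) applied to the inclusion $\mathcal{C} \subseteq (\mathbb{F}_q[x])^\ell$. Theorem \ref{Free_MTcode} guarantees that the reduced GPM $\mathbf{G}$ is a full-rank $\ell\times\ell$ presentation matrix for $\mathcal{C}$, so there are invariant factors $s_1\mid s_2\mid\cdots\mid s_\ell$ in $\mathbb{F}_q[x]$ with $(\mathbb{F}_q[x])^\ell/\mathcal{C} \cong \bigoplus_{i=1}^\ell \mathbb{F}_q[x]/\langle s_i\rangle$. Since invertible row and column operations transform $\mathbf{G}$ into its Smith normal form, the product $s_1 s_2 \cdots s_\ell$ agrees with $\det \mathbf{G}$ up to a unit; and because $\mathbf{G}$ is upper triangular (being in Hermite normal form), $\det \mathbf{G}=\prod_{j=1}^\ell g_{j,j}$. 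Hence $\dim_{\mathbb{F}_q}((\mathbb{F}_q[x])^\ell/\mathcal{C})=\sum_{j=1}^\ell \deg(g_{j,j})$.

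Subtracting gives $k=\sum_{j=1}^\ell(m_j-\deg(g_{j,j}))$. The step I expect to be the main obstacle is tying the invariant factors of $\mathcal{C}$ to the determinant of $\mathbf{G}$; if one prefers to bypass Smith normal form, one can instead exploit the identical equation $\mathbf{A}\mathbf{G}=\mathbf{D}$ (with $\mathbf{A}$ upper triangular, as observed in the passage preceding this statement) to read off $a_{j,j}g_{j,j}=x^{m_j}-\lambda_j$, hence $\deg a_{j,j}=m_j-\deg g_{j,j}$, and then induct on $\ell$ along the triangular shape of $\mathbf{G}$, taking the $\ell=1$ constacyclic case as the base.
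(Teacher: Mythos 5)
Your proof is correct, and it takes a genuinely different route from the paper. You compute $k=\dim_{\mathbb{F}_q}(\mathcal{C}/M_\Lambda)$ by sandwiching $\mathcal{C}$ between $M_\Lambda$ and $(\mathbb{F}_q[x])^\ell$, using the short exact sequence to reduce to $\dim_{\mathbb{F}_q}\bigl((\mathbb{F}_q[x])^\ell/\mathcal{C}\bigr)$, and then invoking the invariant-factor (Smith normal form) theorem together with the determinant-up-to-units invariance to conclude that this torsion quotient has $\mathbb{F}_q$-dimension $\deg\det\mathbf{G}=\sum_j\deg g_{j,j}$. The paper instead bypasses Smith normal form entirely: it establishes an explicit vector-space isomorphism $\mathcal{C}\simeq(\mathbb{F}_q[x])^\ell/\mathcal{A}$, where $\mathcal{A}$ is the row space of the matrix $\mathbf{A}$ from the identical equation $\mathbf{A}\mathbf{G}=\mathbf{D}$, and then directly exhibits a set of unique coset representatives by iterated polynomial division along the upper-triangular shape of $\mathbf{A}$, using $\deg a_{j,j}=m_j-\deg g_{j,j}$. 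Your route is conceptually cleaner and shorter, as it leans on the standard structure theorem for modules over a PID and a determinant computation, whereas the paper's approach is more hands-on and self-contained within the machinery already developed (the identical equation, upper-triangularity of $\mathbf{A}$) and moreover yields an explicit coset-representative description of the code. Both arguments rest on the same structural inputs from Theorem \ref{Free_MTcode}, namely that $\mathcal{C}$ is free of rank $\ell$ with $\mathbf{G}$ a basis, and that $\mathbf{G}$ is upper triangular with monic diagonal in Hermite normal form. Note also that the fallback you sketch at the end, using $a_{j,j}g_{j,j}=x^{m_j}-\lambda_j$ and inducting on $\ell$, is essentially an inductive restatement of what the paper does non-inductively via iterated division, so it is closest in spirit to the paper's own argument.
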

\begin{proof}
{Set $n=\sum_{j=1}^\ell m_j$. In this proof, we consider $\mathbb{F}_q^n$, $\left(\mathbb{F}_q[x]\right)^\ell$, and $\oplus_{j=1}^\ell \mathscr{R}_{m_j,\lambda_j}$ as $\mathbb{F}_q$-vector spaces. To remove any ambiguity, we write $\mathcal{C}$, $\mathcal{C}'$, and $\mathcal{C}''$ to refer to the code as a subspace of $\mathbb{F}_q^n$, $\left(\mathbb{F}_q[x]\right)^\ell$, and $\oplus_{j=1}^\ell \mathscr{R}_{m_j,\lambda_j}$, respectively. 

Let $\tau: \left(\mathbb{F}_q[x]\right)^\ell \rightarrow \mathcal{C}'$ be the map $\tau\left(\mathbf{a}\right) = \mathbf{a}\mathbf{G}$. Since the rows of $\mathbf{G}$ form a basis of $\mathcal{C}'$, $\tau$ is a vector space isomorphism. Let $\pi:\mathcal{C}' \rightarrow \mathcal{C}''$ be the restriction of the projection homomorphism $\left(\mathbb{F}_q[x]\right)^\ell \rightarrow \oplus_{j=1}^\ell \mathscr{R}_{m_j,\lambda_j}$, see the proof of Theorem \ref{MT_Corresp}. Then $\mathrm{Ker}\left(\pi\right)=\mathcal{C}' \cap M_\Lambda =M_\Lambda$. Let $\mathbf{A}=[a_{i,j}]$ be the matrix that satisfies the identical equation of $\mathbf{G}$, and let $\mathcal{A}$ be the $\mathbb{F}_q[x]$-submodule of $\left(\mathbb{F}_q[x]\right)^\ell$ generated by the rows of $\mathbf{A}$. We let $\mathcal{A}$ be viewed as an $\mathbb{F}_q$-vector space.

We claim that $\mathrm{Ker}\left(\pi\circ \tau\right)=\mathcal{A}$. To see this, we have $\mathcal{A} \subseteq \mathrm{Ker}\left(\pi \circ \tau\right)$ because $\pi \circ \tau \left( \mathbf{b}\mathbf{A}\right)=\pi \left( \mathbf{b}\mathbf{A}\mathbf{G}\right)=\pi \left( \mathbf{b}\mathbf{D}\right)=\mathbf{0}$ for any $\mathbf{b}\in \left(\mathbb{F}_q[x] \right)^\ell$. Conversely, if $\mathbf{a}\in \mathrm{Ker}\left(\pi \circ \tau\right)$, then $\tau\left( \mathbf{a}\right)\in \mathrm{Ker}\left( \pi\right)=M_\Lambda$, hence $\mathbf{a}\mathbf{G}=\mathbf{b}\mathbf{D}=\mathbf{b}\mathbf{A}\mathbf{G}$ for some $\mathbf{b}\in \left(\mathbb{F}_q[x] \right)^\ell$. Then, $\mathbf{a}=\mathbf{b}\mathbf{A}\in\mathcal{A}$ because the rows of $\mathbf{G}$ are $\mathbb{F}_q[x]$-linearly independent.

The isomorphism $\phi$ given by \eqref{the_isomorphism_phi} confirms that $\mathcal{C}\simeq\mathcal{C}''$. Now, the sequence of vector space homomorphisms
\begin{equation*}
\left(\mathbb{F}_q[x]\right)^\ell \xrightarrow[]{\tau} \mathcal{C}' \xrightarrow[]{\pi} \mathcal{C}'' \simeq \mathcal{C}.
\end{equation*}
shows that $\mathcal{C}\simeq \left(\mathbb{F}_q[x]\right)^\ell\!\!/\mathcal{A}$. It remains only to determine the dimension of the vector space $\left(\mathbb{F}_q[x]\right)^\ell\!\!/\mathcal{A}$.

Let $\mathbf{f}=\left( f_1(x),f_2(x),\ldots,f_\ell(x)\right)\in \left(\mathbb{F}_q[x]\right)^\ell$. Using the division algorithm iteratively, one can uniquely determine $q_j, r_j\in\mathbb{F}_q[x]$ for $1\le j\le \ell$ such that
\begin{equation*}
f_j(x)-\sum_{i=1}^{j-1} q_i a_{i,j} =q_j a_{j,j} +r_j,
\end{equation*}
where $\mathrm{deg}(r_j)< \mathrm{deg}(a_{j,j})=m_j-\mathrm{deg}(g_{j,j})$. Then 
\begin{equation*}
\mathbf{f}=[q_1,q_2,\ldots,q_\ell]\mathbf{A}+\mathbf{r},
\end{equation*} 
where $\mathbf{r}=\left( r_1,r_2,\ldots,r_\ell\right)$ and, hence, $\mathbf{f}-\mathbf{r}\in\mathcal{A}$. Therefore 
\begin{equation*}
\left(\mathbb{F}_q[x]\right)^\ell \!\!/ \mathcal{A}=\left\{( r_1,r_2,\ldots,r_\ell) + \mathcal{A} \ |\  \mathrm{deg}(r_j)< m_j-\mathrm{ deg}(g_{j,j}) \text{ for } 1\le j\le \ell\right\}.
\end{equation*}
Thus $\left(\mathbb{F}_q[x]\right)^\ell \!\!/ \mathcal{A}$ as a vector space has a dimension of $\sum_{j=1}^{\ell}\left( m_j-\mathrm{deg}(g_{j,j})\right)$.}
\end{proof}

\section{Dual codes of MT codes}
The standard inner product on $\mathbb{F}_q^n$ is defined by
\[\langle\mathbf{a},\mathbf{b}\rangle =\sum_{i=0}^{n-1} a_i b_i,\] 
where $\mathbf{a}=\left(a_0,a_1,\ldots,a_{n-1}\right)$ and $\mathbf{b}=\left(b_0,b_1,\ldots,b_{n-1}\right)$. Let $\mathcal{C}$ be a subset of $\mathbb{F}_q^n$, the dual of $\mathcal{C}$ is defined to be
\begin{equation}
\label{dual_general}
\mathcal{C}^\perp=\left\{\mathbf{a}\in\mathbb{F}_q^n \ |\ \langle \mathbf{a},\mathbf{c}\rangle =0 \ \forall \ \mathbf{c}\in\mathcal{C} \right\}.
\end{equation}
The inner product is a symmetric bilinear form, hence $\mathcal{C}^\perp$ is a linear code over $\mathbb{F}_q$ even if $\mathcal{C}$ is non-linear. Equation \eqref{dual_general} shows that $\mathcal{C}\subseteq\left(\mathcal{C}^\perp\right)^\perp$, where equality holds if $\mathcal{C}$ is linear. 

Let $\mathcal{C}$ be a linear code over $\mathbb{F}_q$ of length $n$, dimension $k$, and a $k\times n$ generator matrix $\mathbf{G}$. Then $\mathcal{C}^\perp$ is the null space of $\mathbf{G}$, or, equivalently, $\mathcal{C}^\perp=\{\mathbf{a}\in\mathbb{F}_q^n \ |\ \mathbf{G}\mathbf{a}^t =\mathbf{0}_{k\times 1}\}$. The rank plus nullity theorem \cite{Roman2008-wr} emphasizes that $\mathcal{C}^\perp$ is a linear subspace of $\mathbb{F}_q^n$ of dimension $n-k$. A code $\mathcal{C}$ is self-orthogonal if $\mathcal{C}^\perp\supseteq \mathcal{C}$, and in this case $n-k\ge k$, hence $k\le\frac{n}{2}$. However, $\mathcal{C}$ is self-dual if $\mathcal{C}^\perp=\mathcal{C}$. In fact, the code length of any self-dual code must be even because self-duality implies $n-k=k$, hence $n=2k$. Additionally, a self-dual code is precisely a self-orthogonal code whose dimension is equal to half the code length, i.e., $k=\frac{n}{2}$.

The MacWilliams identity is a fundamental result in coding theory that relates the weight enumerator of a linear code to the weight enumerator of its dual. Before we present MacWilliams identity in Theorem \ref{MacWilliams}, we define the weight enumerator of a code.
\begin{Def}
Let $\mathcal{C}$ be a code over $\mathbb{F}_q$ of length $n$ and let the weight of a codeword $\mathbf{c}\in\mathcal{C}$ be denoted by $\omega(\mathbf{c})$. The weight enumerator of $\mathcal{C}$ is a polynomial $W_\mathcal{C}(x,y)\in\mathbb{Z}[x,y]$ defined by 
\begin{equation*}
W_\mathcal{C}(x,y)=\sum_{\mathbf{c}\in\mathcal{C}}x^{n-\omega(\mathbf{c})}y^{\omega(\mathbf{c})}.
\end{equation*}\end{Def}

The size of the code can be determined from its weight enumerator since $W_\mathcal{C}(1,1)=|\mathcal{C}|$. Also, the polynomial $W_\mathcal{C}(1,y)$ describes the different weights of codewords in $\mathcal{C}$. For example, if $W_\mathcal{C}(1,y)=1+100 y^5+200 y^6+300 y^7$, then $\mathcal{C}$ contains the zero codeword, $100$ codewords of weight $5$, $200$ codewords of weight $6$, and $300$ codewords of weight $7$. Hence, $\mathcal{C}$ contains $601$ codewords and has a minimum weight of $5$. The MacWilliams identity relates $W_{\mathcal{C}}$ to $W_{\mathcal{C}^\perp}$. The identity is valid for linear codes over a wide class of rings called Frobenius rings \cite{Dougherty2017}. We state MacWilliams identity in its general form, where codes over finite fields are a special case.
\begin{Thm}[MacWilliams Identity]
\label{MacWilliams}
Let $\mathcal{C}$ be a linear code over a finite commutative Frobenius ring of size $q$. Then
\begin{equation*}
W_{\mathcal{C}^\perp}(x,y)=\frac{1}{|\mathcal{C}|}W_{\mathcal{C}}\left(x+(q-1)y,x-y\right).
\end{equation*}\end{Thm}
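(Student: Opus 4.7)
The plan is to prove the MacWilliams identity via the standard character-theoretic argument (Fourier analysis on the additive group of $R^n$), using the structure of Frobenius rings. I would begin by recalling that a finite commutative ring $R$ is Frobenius if and only if it admits a \emph{generating character}, i.e., a group homomorphism $\chi: (R,+)\to\mathbb{C}^\times$ whose kernel contains no nonzero ideal of $R$. The standard inner product $\langle\mathbf{a},\mathbf{b}\rangle=\sum_i a_i b_i$ then lets us define, for each $\mathbf{b}\in R^n$, the character $\chi_{\mathbf{b}}:R^n\to\mathbb{C}^\times$ by $\chi_{\mathbf{b}}(\mathbf{a})=\chi(\langle\mathbf{a},\mathbf{b}\rangle)$, and the map $\mathbf{b}\mapsto\chi_{\mathbf{b}}$ gives an isomorphism between $R^n$ and its character group. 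The Frobenius hypothesis is exactly what guarantees this duality.

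Next, for any function $f:R^n\to\mathbb{C}[x,y]$ define the Fourier transform $\hat{f}(\mathbf{a})=\sum_{\mathbf{b}\in R^n}\chi(\langle\mathbf{a},\mathbf{b}\rangle)f(\mathbf{b})$. The Poisson summation formula for the pair $(\mathcal{C},\mathcal{C}^\perp)$ states
\begin{equation*}
\sum_{\mathbf{b}\in\mathcal{C}^\perp}f(\mathbf{b})=\frac{1}{|\mathcal{C}|}\sum_{\mathbf{a}\in\mathcal{C}}\hat{f}(\mathbf{a}),
\end{equation*}
and is proved by swapping the order of summation and using the orthogonality relation that $\sum_{\mathbf{a}\in\mathcal{C}}\chi(\langle\mathbf{a},\mathbf{b}\rangle)$ equals $|\mathcal{C}|$ for $\mathbf{b}\in\mathcal{C}^\perp$ and $0$ otherwise (this last step is where the generating property of $\chi$ is essential, since we need the character of $R$ restricted to any nonzero submodule to be nontrivial).

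The key calculation is to apply this with the weight-tracking function $f(\mathbf{b})=x^{n-\omega(\mathbf{b})}y^{\omega(\mathbf{b})}$. Since $f$ factors across coordinates as $f(\mathbf{b})=\prod_{i=1}^n f_1(b_i)$ with $f_1(0)=x$ and $f_1(b)=y$ for $b\neq0$, the transform $\hat{f}$ also factors, and one computes coordinatewise
\begin{equation*}
\widehat{f_1}(a)=\sum_{b\in R}\chi(ab)f_1(b)=x+y\sum_{b\neq 0}\chi(ab).
\end{equation*}
For $a=0$ this yields $x+(q-1)y$. For $a\neq 0$, the map $b\mapsto\chi(ab)$ is a nontrivial character of $R$ (because $\chi$ is generating and $aR$ is a nonzero ideal), so $\sum_{b\in R}\chi(ab)=0$, giving $\sum_{b\neq 0}\chi(ab)=-1$ and hence $\widehat{f_1}(a)=x-y$. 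Therefore $\hat{f}(\mathbf{a})=(x+(q-1)y)^{n-\omega(\mathbf{a})}(x-y)^{\omega(\mathbf{a})}$.

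Substituting into Poisson summation, the left side is $\sum_{\mathbf{b}\in\mathcal{C}^\perp}x^{n-\omega(\mathbf{b})}y^{\omega(\mathbf{b})}=W_{\mathcal{C}^\perp}(x,y)$, and the right side is $\frac{1}{|\mathcal{C}|}\sum_{\mathbf{a}\in\mathcal{C}}(x+(q-1)y)^{n-\omega(\mathbf{a})}(x-y)^{\omega(\mathbf{a})}=\frac{1}{|\mathcal{C}|}W_{\mathcal{C}}(x+(q-1)y,x-y)$, which is the desired identity. The main obstacle is the character-theoretic justification of the two orthogonality steps: (a) nonvanishing of $\sum_{b\in R}\chi(ab)$ forces $a=0$ (needed in the transform calculation), and (b) the dual-code orthogonality used in Poisson summation. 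Both require the existence of a generating character, and so both depend crucially on $R$ being Frobenius — if one instead worked over an arbitrary finite commutative ring, the identity can fail, which is precisely why the theorem is stated in the Frobenius setting.
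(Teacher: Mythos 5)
The paper does not actually supply a proof of Theorem \ref{MacWilliams}; it states the identity and points to the literature (the reference \cite{Dougherty2017} in the preceding paragraph, and the work of Wood cited in the introduction). So there is no ``paper's proof'' to compare against. Your argument is, however, the standard character-theoretic proof over Frobenius rings that those references use, and it is essentially correct and complete: you take as your starting point Wood's characterization that a finite commutative ring is Frobenius if and only if it admits a generating character $\chi$, you invoke Poisson summation for the pair $(\mathcal{C},\mathcal{C}^\perp)$, and you compute the coordinatewise transform of the weight-tracking function to get $x+(q-1)y$ at $a=0$ and $x-y$ at $a\neq 0$.

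Two small points are worth spelling out a little more than your parenthetical remark does, since they are precisely where ``Frobenius'' enters. In the transform computation, for $a\neq 0$ you need $b\mapsto\chi(ab)$ to be a nontrivial character of $(R,+)$: if it were trivial then the nonzero ideal $aR$ would lie in $\ker\chi$, contradicting the generating property. In the Poisson-summation orthogonality, for $\mathbf{b}\notin\mathcal{C}^\perp$ you need $\mathbf{a}\mapsto\chi(\langle\mathbf{a},\mathbf{b}\rangle)$ to be a nontrivial character of $(\mathcal{C},+)$; here one should observe that $\{\langle\mathbf{a},\mathbf{b}\rangle : \mathbf{a}\in\mathcal{C}\}$ is an ideal of $R$ (since $\mathcal{C}$ is an $R$-submodule and the form is $R$-bilinear), it is nonzero because $\mathbf{b}\notin\mathcal{C}^\perp$, and so the generating property again applies. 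With those two observations made explicit, the argument is airtight and matches what the cited sources do.
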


\begin{exmp}
\label{ex_Sh3}
We continue with the $(2,1)$-MT code $\mathcal{C}$ discussed in Examples \ref{ex_Sh} and \ref{ex_Sh2}. A brute force calculation of the weight enumerator of $\mathcal{C}$ shows that $W_\mathcal{C}(x,y)=x^{60} + 400 x^{24} y^{36} + 328 x^{15} y^{45}$. From Theorem \ref{MacWilliams}, we find that the weight enumerator of the dual of $\mathcal{C}$ is as follows:
\begin{equation*}
\begin{split}
W_{\mathcal{C}^\perp}(x,y) &=  \frac{1}{729}W_{\mathcal{C}}\left(x+2y,x-y\right)
\\ &=  x^{60}+40 x^{58}y^2 + 240 x^{57}y^3 + 8760 x^{56}y^4  + \cdots + \\
&\qquad\qquad\qquad   47445329187307520 x y^{59} + 1581510989447168 y^{60}.
\end{split}
\end{equation*}
Therefore, $\mathcal{C}^\perp$ is a linear code over $\mathbb{F}_3$ of length $60$, dimension $54$, and $d_{\mathrm{min}}\left(\mathcal{C}^\perp\right)=2$.\end{exmp}

We now turn our attention to MT codes and focus on investigating their duals. The dual of a MT code is not only a linear code, it is MT as well, but with different shift constants.
\begin{Thm}
\label{MT_dual}
Let $\mathcal{C}$ be a $\Lambda$-MT code over $\mathbb{F}_q$ of block lengths $\left(m_1,m_2,\ldots,m_\ell\right)$, where $\Lambda=\left(\lambda_1,\lambda_2,\ldots,\lambda_\ell \right)$. Then $\mathcal{C}^\perp$ is $\Delta$-MT of block lengths $\left(m_1,m_2,\ldots,m_\ell\right)$, where $\Delta=\left(\frac{1}{\lambda_1},\frac{1}{\lambda_2},\ldots,\frac{1}{\lambda_\ell} \right)$.\end{Thm}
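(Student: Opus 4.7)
The plan is to use the equivalent characterization of MT codes provided by Theorem \ref{MT_Corresp}: a code is $\Lambda$-MT if and only if it is a $T_\Lambda$-invariant $\mathbb{F}_q$-subspace of $\mathbb{F}_q^n$. So it suffices to show that $\mathcal{C}^\perp$ is invariant under $T_\Delta$, and the natural route is to compute an ``adjoint'' relation between $T_\Lambda$ and $T_\Delta$ with respect to the standard bilinear form $\langle \cdot, \cdot \rangle$.

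First, I would work on a single block. Let $T_{\lambda,m}$ denote the map $(a_0,\ldots,a_{m-1})\mapsto (\lambda a_{m-1}, a_0,\ldots,a_{m-2})$. A direct calculation gives
\begin{equation*}
\langle T_{\lambda,m}(\mathbf{a}), \mathbf{b}\rangle = \lambda a_{m-1}b_0 + a_0 b_1 + \cdots + a_{m-2}b_{m-1} = \langle \mathbf{a}, S_{\lambda,m}(\mathbf{b})\rangle,
\end{equation*}
where $S_{\lambda,m}(\mathbf{b}) = (b_1,b_2,\ldots,b_{m-1},\lambda b_0)$. A second direct check shows $S_{\lambda,m}\circ T_{1/\lambda,m}$ is the identity, so $S_{\lambda,m}=T_{1/\lambda,m}^{-1}$. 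Since $T_\Lambda$ and $T_\Delta$ are block-diagonal, summing over all $\ell$ blocks yields the key adjointness identity
\begin{equation*}
\langle T_\Lambda(\mathbf{a}), \mathbf{b} \rangle = \langle \mathbf{a}, T_\Delta^{-1}(\mathbf{b}) \rangle \qquad \forall\, \mathbf{a},\mathbf{b}\in\mathbb{F}_q^n.
\end{equation*}

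Next, since $T_\Lambda$ is an $\mathbb{F}_q$-automorphism of $\mathbb{F}_q^n$ and $\mathcal{C}$ is $T_\Lambda$-invariant, restricting gives an injective $\mathbb{F}_q$-linear map $T_\Lambda\colon\mathcal{C}\to\mathcal{C}$, which is surjective by finite-dimensionality; thus $T_\Lambda^{-1}(\mathcal{C})=\mathcal{C}$. Now take any $\mathbf{b}\in\mathcal{C}^\perp$ and any $\mathbf{c}\in\mathcal{C}$. Applying the adjointness identity with $\mathbf{a}=T_\Lambda^{-1}(\mathbf{c})\in\mathcal{C}$,
\begin{equation*}
\langle \mathbf{c}, T_\Delta(\mathbf{b})\rangle = \langle T_\Lambda(T_\Lambda^{-1}(\mathbf{c})), T_\Delta(\mathbf{b})\rangle = \langle T_\Lambda^{-1}(\mathbf{c}), \mathbf{b}\rangle = 0.
\end{equation*}
Hence $T_\Delta(\mathbf{b})\in\mathcal{C}^\perp$, so $\mathcal{C}^\perp$ is $T_\Delta$-invariant, and by Theorem \ref{MT_Corresp} it is a $\Delta$-MT code of block lengths $(m_1,\ldots,m_\ell)$.

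I do not expect a serious obstacle here; the only point requiring care is getting the shifted index right in the per-block computation so that the adjoint really comes out to $T_{1/\lambda_j,m_j}^{-1}$ rather than $T_{1/\lambda_j,m_j}$ itself (a common sign/off-by-one pitfall). Once that per-block identity is verified, the block-diagonal structure of $T_\Lambda$ and the invertibility of $T_\Lambda$ on $\mathcal{C}$ make the rest of the argument essentially formal.
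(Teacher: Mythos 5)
Your proof is correct and takes essentially the same approach as the paper: both rely on the adjointness identity $\langle T_\Lambda(\mathbf{a}), T_\Delta(\mathbf{b})\rangle = \langle\mathbf{a},\mathbf{b}\rangle$ together with the $T_\Lambda$-invariance of $\mathcal{C}$, via Theorem \ref{MT_Corresp}. The only cosmetic difference is that you obtain $T_\Lambda^{-1}(\mathcal{C})=\mathcal{C}$ from injectivity on a finite-dimensional space, whereas the paper uses the finite multiplicative order $N$ of $T_\Lambda$ to write $T_\Lambda^{-1}=T_\Lambda^{N-1}$; and you verify the per-block adjointness explicitly, which the paper leaves implicit.
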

\begin{proof}
{From Theorem \ref{MT_Corresp}, $\mathcal{C}$ is a $T_{\Lambda}$-invariant subspace of $\mathbb{F}_q^{n}$, where $n=\sum_{j=1}^\ell m_j$ and $T_{\Lambda}$ is the automorphism given by \eqref{T_ell_Lambda}. Let $N=\mathrm{lcm}\left(t_1 m_1,t_2 m_2,\ldots, t_\ell m_\ell \right)$, where $t_j$ is the multiplicative order of $\lambda_j$ for $1\le j\le \ell$. According to the discussion after Corollary \ref{Corollary_GQC}, $T_{\Lambda}^N$ is the identity map. We show that $T_{\Delta}\left(\mathbf{b}\right)\in\mathcal{C}^\perp$ for every $\mathbf{b}\in\mathcal{C}^\perp$. For any $\mathbf{b}\in\mathcal{C}^\perp$ and $\mathbf{c}\in\mathcal{C}$, we have
\begin{equation*}
\resizebox{\hsize}{!}{
$\langle\mathbf{c}, T_{\Delta}\left(\mathbf{b}\right)\rangle
=\langle T_{\Lambda}^N\left(\mathbf{c}\right), T_{\Delta}\left(\mathbf{b}\right)\rangle
=\langle T_{\Lambda}\circ T_{\Lambda}^{N-1}\left(\mathbf{c}\right), T_{\Delta}\left(\mathbf{b}\right)\rangle
=\langle T_{\Lambda}^{N-1}\left(\mathbf{c}\right), \mathbf{b}\rangle
=0$}
\end{equation*}
because $T_{\Lambda}^{N-1}\left(\mathbf{c}\right)\in\mathcal{C}$. Then, $T_{\Delta}\left(\mathbf{b}\right)\in\mathcal{C}^\perp$ and $\mathcal{C}^\perp$ is $\Delta$-MT.}
\end{proof}

Since the dual of a $\Lambda$-MT code is $\Delta$-MT, it can be represented by a submodule of $\left(\mathbb{F}_q[x] \right)^\ell$ containing $M_\Delta$, see Definition \ref{def_MTcode}. We aim to present a GPM for $\mathcal{C}^\perp$. The following definition provides such a GPM.
\begin{Def}
\label{def_parity}
For $1\le i\le \ell$, let $0\ne\lambda_i\in\mathbb{F}_q$ and let $m_i$ be a positive integer. For a MT code of index $\ell$, let $\mathbf{G}=[g_{i,j}]$ be the reduced GPM, let $\mathbf{A}=[a_{i,j}]$ be the matrix that satisfies the identical equation of $\mathbf{G}$, and let $d_j=\deg(g_{j,j})$ for $1\le j\le \ell$. 
\begin{enumerate}
\item Let $\mathbf{A}\left( \frac{1}{x}\right)$ be the matrix obtained from $\mathbf{A}$ when $x$ is replaced by $\frac{1}{x}$.
\item Let $\mathbf{A}^*$ be the matrix obtained after multiplying the $(i,j)$-th entry of $\mathbf{A}\left( \frac{1}{x}\right)$ by $x^{m_i-d_j}$.
\item (Eliminate the negative exponents in $\mathbf{A}^*$) Let $\mathbf{A}^{**}$ be the matrix obtained from $\mathbf{A}^*$ by reducing the $(i,j)$-th entry (for $i<j$) of $\mathbf{A}^*$ modulo $\left(x^{m_i}-\frac{1}{\lambda_i}\right)$. Specifically, $x^{-\mu}$ is replaced by $\lambda_i x^{m_i-\mu}$ for $\mu\ge 1$.
\item The matrix $\mathbf{H}$ is defined to be the transpose of $\mathbf{A}^{**}$.

\end{enumerate}\end{Def}

\begin{Thm}
\label{MT_Parity_H}
Let $\mathcal{C}$ be a $\left(\lambda_1,\lambda_2,\ldots,\lambda_\ell \right)$-MT code over $\mathbb{F}_q$ of block lengths $\left(m_1,m_2,\ldots,m_\ell\right)$, let $\mathbf{G}=[g_{i,j}]$ be the reduced GPM of $\mathcal{C}$, and let $\mathbf{A}=[a_{i,j}]$ be the matrix that satisfies the identical equation of $\mathbf{G}$. The polynomial matrix $\mathbf{H}$ given in Definition \ref{def_parity} is a GPM for $\mathcal{C}^\perp$.\end{Thm}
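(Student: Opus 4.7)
The plan is to establish $\mathcal{C}^\perp=\mathcal{C}'$, where $\mathcal{C}'$ is the $\mathbb{F}_q[x]$-submodule of $\left(\mathbb{F}_q[x]\right)^\ell$ generated by the rows of $\mathbf{H}$. This requires three things: (i) $\mathbf{H}$ is a valid GPM for some $\Delta$-MT code $\mathcal{C}'$; (ii) $\mathcal{C}'\subseteq\mathcal{C}^\perp$; and (iii) $\dim_{\mathbb{F}_q}\mathcal{C}'=\dim_{\mathbb{F}_q}\mathcal{C}^\perp$. From these three, equality of codes is immediate.

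For (i), because $\mathbf{G}$ is reduced, $\mathbf{A}$ is upper triangular with $a_{j,j}g_{j,j}=x^{m_j}-\lambda_j$, so $\deg a_{j,j}=m_j-d_j$ and $a_{j,j}(0)\ne 0$ (using $\lambda_j\ne 0$). The three operations of Definition \ref{def_parity} preserve upper-triangularity, so after transposing, $\mathbf{H}$ is lower triangular with diagonal entries $h_{j,j}=x^{m_j-d_j}a_{j,j}(1/x)$ of degree $m_j-d_j$. Substituting $x\to 1/x$ in $a_{j,j}g_{j,j}=x^{m_j}-\lambda_j$ and multiplying through by $x^{m_j}$ yields $h_{j,j}\cdot[x^{d_j}g_{j,j}(1/x)]=-\lambda_j(x^{m_j}-1/\lambda_j)$, so $h_{j,j}\mid(x^{m_j}-1/\lambda_j)$. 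Applying the analogous reciprocal-and-reduce procedure to $\mathbf{G}$ produces a matrix $\mathbf{B}$ for which $\mathbf{B}\mathbf{H}=\mathrm{diag}(x^{m_j}-1/\lambda_j)$, the identical equation in the $\Delta$-MT setting; hence $\mathbf{H}$ generates a $\Delta$-MT code $\mathcal{C}'$.

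The core of the proof is (ii). Since $\mathcal{C}$ is generated as an $\mathbb{F}_q[x]$-module by the rows $\mathbf{g}_1,\ldots,\mathbf{g}_\ell$ of $\mathbf{G}$, it suffices to show $\langle T_\Lambda^s\mathbf{g}_i,\mathbf{h}_l\rangle=0$ in $\mathbb{F}_q^n$ for every $i,l\in\{1,\ldots,\ell\}$ and every $s\ge 0$. The starting identity is $\mathbf{A}\mathbf{G}=\mathbf{D}$, i.e., $\sum_k a_{i,k}(x)g_{k,l}(x)=\delta_{il}(x^{m_i}-\lambda_i)$. Substituting $x\to 1/x$, weighting the $k$-th summand by $x^{m_i-d_l}$ (matching step 2 of Definition \ref{def_parity}), and reducing each summand modulo $x^{m_k}-1/\lambda_k$ (matching step 3) converts the left-hand side into the entry $(\mathbf{G}\mathbf{H}^T)_{l,i}$ up to elements annihilated in the corresponding $\Delta$-MT block, while the right-hand side becomes a scalar multiple of $x^{m_i}-1/\lambda_i$ that vanishes in the relevant quotient ring. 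Using the standard identification of the coefficient of $x^s$ in such a Laurent product with the $\mathbb{F}_q^n$-inner product of a constacyclic shift, this vanishing yields $\langle T_\Lambda^s\mathbf{g}_i,\mathbf{h}_l\rangle=0$ for every shift $s$. I expect the main obstacle to be the bookkeeping between the $x\to 1/x$ substitution, the $x^{m_i-d_l}$ weighting, and the modular reduction; these three operations in Definition \ref{def_parity} are engineered precisely so that the translated identity is exactly the orthogonality statement across blocks with potentially different moduli.

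For (iii), Theorem \ref{dim_MT} gives $\dim_{\mathbb{F}_q}\mathcal{C}=\sum_j(m_j-d_j)=k$, so $\dim_{\mathbb{F}_q}\mathcal{C}^\perp=n-k=\sum_j d_j$. Since $\mathbf{H}$ is triangular with diagonal entries $h_{j,j}$ of degree $m_j-d_j$ dividing $x^{m_j}-1/\lambda_j$, the same dimension formula (applied to the $\Delta$-MT code $\mathcal{C}'$) gives $\dim_{\mathbb{F}_q}\mathcal{C}'=\sum_j\bigl(m_j-(m_j-d_j)\bigr)=\sum_j d_j$. Combined with the containment in (ii), this dimension equality forces $\mathcal{C}'=\mathcal{C}^\perp$, so $\mathbf{H}$ is a GPM of $\mathcal{C}^\perp$.
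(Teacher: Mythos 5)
The paper states Theorem \ref{MT_Parity_H} without proof, so there is no source argument to compare against; I evaluate your sketch on its own terms. The overall plan (validity of $\mathbf{H}$, containment $\mathcal{C}'\subseteq\mathcal{C}^\perp$, dimension count) is the right shape, and step (iii) is repairable with a small note: Theorem \ref{dim_MT} is stated for the Hermite normal form, which the lower-triangular $\mathbf{H}$ is not, but $\deg\det$ is a left-equivalence invariant equal to $\sum_j(m_j-d_j)$ for $\mathbf{H}$, so the diagonal degrees of the Hermite form of $\mathbf{H}$ still sum to $\sum_j(m_j-d_j)$ and the count survives.

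The genuine gap is step (ii), and it is structural rather than the ``bookkeeping'' you flag. Unwind the indices: the $l$-th row of $\mathbf{H}=(\mathbf{A}^{**})^t$ is the $l$-th \emph{column} of $\mathbf{A}^{**}$, so the block-$k$ contribution to $\langle T_\Lambda^s\mathbf{g}_i,\mathbf{h}_l\rangle$ pairs $g_{i,k}$ (row $i$ of $\mathbf{G}$) with $a^{**}_{k,l}$ (column $l$ of $\mathbf{A}^{**}$). Assembled over $k$, this has the index pattern of $(\mathbf{G}\,\mathbf{A}^{**})_{i,l}$, i.e.\ $\mathbf{G}$ \emph{on the left}. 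The identical equation only furnishes $\mathbf{A}\mathbf{G}=\mathbf{D}$ ($\mathbf{A}$ on the left), and the $x\mapsto 1/x$ substitution together with the weightings of Definition \ref{def_parity} do not flip the factors: carrying them through yields a relation of the shape $\mathbf{A}^{*}\mathbf{G}^\circ=-\mathrm{diag}(\lambda_j)\,\mathbf{D}_\Delta$ for a suitably weighted reciprocal $\mathbf{G}^\circ$ of $\mathbf{G}$, still $\mathbf{A}$-type on the left and never an expression in $g_{i,k}a^{**}_{k,l}$. What the argument implicitly requires is $\mathbf{G}\mathbf{A}=\mathbf{D}$, which Theorem \ref{A_G_Commutativity} grants for QT codes but which fails for genuinely MT codes; already in Example \ref{ex_Sh2} the polynomial $g_{1,2}$ has degree $39$ while $g_{1,1}a_{1,2}$ has degree $19$, so $(\mathbf{G}\mathbf{A})_{1,2}\ne 0$. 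There is a second, independent issue: the vanishing of a polynomial-matrix entry $(\mathbf{G}\mathbf{H}^T)_{l,i}$ (or $(\mathbf{G}\mathbf{H}^T)_{i,l}$) is not the correct translation of the Euclidean orthogonality, because the blockwise $\mathbb{F}_q^n$-inner product is a coefficientwise correlation, not a polynomial product; one of the two factors must first undergo $x\mapsto 1/x$, and for MT codes the reduction moduli $x^{m_k}-\lambda_k$ and $x^{m_k}-1/\lambda_k$ vary with $k$, so this replacement must be tracked block by block. Rebuilding step (ii) therefore requires a precise blockwise Laurent-coefficient formulation of $\langle T_\Lambda^s\mathbf{g}_i,\mathbf{h}_l\rangle$ together with a use of $\mathbf{A}\mathbf{G}=\mathbf{D}$ that respects the left/right orientation of the factors (for instance via a parity-check characterization of $\mathcal{C}$ obtained from $\mathbf{G}^{-1}=\mathbf{D}^{-1}\mathbf{A}$); as written, the proposal does not establish $\mathcal{C}'\subseteq\mathcal{C}^\perp$.
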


\begin{exmp}
We continue with the $(2,1)$-MT code $\mathcal{C}$ discussed in Examples \ref{ex_Sh}, \ref{ex_Sh2}, and \ref{ex_Sh3}. From Theorem \ref{MT_dual}, $\mathcal{C}^\perp$ is $(2,1)$-MT over $\mathbb{F}_3$ of length $60$ and dimension $54$. A GPM for $\mathcal{C}^\perp$ can be obtained from Definition \ref{def_parity} and Theorem \ref{MT_Parity_H} as follows:
\begin{equation*}
\mathbf{A}=\begin{pmatrix}
2+2 x + x^4 + x^5+x^6 & \quad 2x(1+x)^4\\
0 & 1
\end{pmatrix}.
\end{equation*}
\begin{equation*}
\begin{split}
\mathbf{A}\left( \frac{1}{x}\right)&=\begin{pmatrix}
x^{-6} + x^{-5} + x^{-4} +2 x^{-1} +2 & \quad 2x^{-1}(x^{-1}+1)^4\\
0 & 1
\end{pmatrix}.\\
\mathbf{A}^*&=\begin{pmatrix}
x^{20-14}\left(x^{-6} + x^{-5} + x^{-4} +2 x^{-1} +2\right) & \quad 2x^{20-40}x^{-1}(x^{-1}+1)^4\\
0 & x^{40-40}
\end{pmatrix}\\
&=\begin{pmatrix}
1 + x + x^{2} +2 x^{5} +2x^6 & \quad 2x^{-25}+ 2x^{-24}+ 2x^{-22}+2x^{-21}\\
0 & 1
\end{pmatrix}.\\
\mathbf{A}^{**}&=\begin{pmatrix}
1 + x + x^{2} +2 x^{5} +2x^6 & \quad x^{-5}+ x^{-4}+ x^{-2}+x^{-1}\\
0 & 1
\end{pmatrix}\\
&=\begin{pmatrix}
1 + x + x^{2} +2 x^{5} +2x^6 & \quad 2x^{15} + 2x^{16} + 2x^{18} + 2x^{19}\\
0 & 1
\end{pmatrix}.\\
\mathbf{H}&=\begin{pmatrix}
1 + x + x^{2} +2 x^{5} +2x^6 &  0 \\
2x^{15} + 2x^{16} + 2x^{18} + 2x^{19}& 1
\end{pmatrix}.
\end{split}
\end{equation*}
The reduced GPM $\mathbf{G}^\perp$ of $\mathcal{C}^\perp$ is obtained by reducing $\mathbf{H}$ to its Hermite normal form. Namely,
\begin{equation*}
\mathbf{G}^\perp=\begin{pmatrix}
1 & 2x+2x^2+x^3+x^4+x^5  \\
 0 &    2+2x+2x^2+x^5+x^6
\end{pmatrix}.
\end{equation*}\end{exmp}

\begin{Coroll}
\label{H_of_QC}
Let $\mathcal{C}$ be an $\ell$-QC code over $\mathbb{F}_q$ of co-index $m$ and the reduced GPM $\mathbf{G}=[g_{i,j}]$. If $\mathbf{A}=[a_{i,j}]$ is the matrix satisfying the identical equation of $\mathbf{G}$, then $\mathcal{C}^\perp$ is $\ell$-QC of co-index $m$ and a GPM 
\begin{equation*}
\mathbf{H}=\left( \mathbf{A}\left(\frac{1}{x}\right) \mathrm{diag}\left[x^{m-d_1},\ldots,x^{m-d_\ell}\right]\pmod{x^m-1}\right)^t,
\end{equation*}
where $d_j=\deg(g_{j,j})$ for $1\le j\le \ell$. The reduction modulo $x^m-1$ is applied to remove negative exponents of $x$ by replacing $x^{-\mu}$ with $x^{m-\mu}$ for $\mu\ge 1$.\end{Coroll}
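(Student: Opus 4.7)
The plan is to specialize Theorem \ref{MT_Parity_H} and Definition \ref{def_parity} to the QC case and rewrite the resulting construction as a single matrix formula. An $\ell$-QC code $\mathcal{C}$ of co-index $m$ is precisely a $\Lambda$-MT code with $\Lambda=(1,1,\ldots,1)$ and block lengths $(m,m,\ldots,m)$. Applying Theorem \ref{MT_dual} with these parameters immediately yields $\Delta=(1,1,\ldots,1)$, so $\mathcal{C}^\perp$ is also $\ell$-QC of co-index $m$. This handles the first assertion; the remaining work is to repackage the four-step construction of $\mathbf{H}$ in Definition \ref{def_parity}.

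First I would address step (2). In general, the $(i,j)$-th entry of $\mathbf{A}(1/x)$ is multiplied by $x^{m_i-d_j}$, which depends on both the row and column indices. Under the QC specialization $m_i=m$ for every $i$, this exponent collapses to $x^{m-d_j}$, depending only on the column index $j$. Multiplying column $j$ of a matrix uniformly by the scalar $x^{m-d_j}$ is exactly right multiplication by the diagonal matrix $\mathbf{E}=\mathrm{diag}[x^{m-d_1},\ldots,x^{m-d_\ell}]$. Therefore
\begin{equation*}
\mathbf{A}^{*}=\mathbf{A}\!\left(\tfrac{1}{x}\right)\mathbf{E}.
\end{equation*}

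Next I would address step (3). The reduction modulo $x^{m_i}-1/\lambda_i$ is, a priori, row-dependent. But for QC, $m_i=m$ and $\lambda_i=1$ for every $i$, so the modulus $x^m-1$ is the same across all rows and the reduction can be carried out on the whole matrix at once. Hence
\begin{equation*}
\mathbf{A}^{**}=\mathbf{A}\!\left(\tfrac{1}{x}\right)\mathbf{E}\pmod{x^m-1}.
\end{equation*}
Finally, step (4) gives $\mathbf{H}=(\mathbf{A}^{**})^{t}$, which is precisely the formula in the statement. By Theorem \ref{MT_Parity_H}, $\mathbf{H}$ is a GPM for $\mathcal{C}^\perp$. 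The only delicate point, and the place where one should pause to double-check, is that the column-wise nature of the exponent scaling in step (2) forces right multiplication (not left) by $\mathbf{E}$, and that the row-independence of the modulus in step (3) is exactly what permits the $\pmod{x^m-1}$ to be written outside the product; both are automatic from $\Lambda=(1,\ldots,1)$ and $m_1=\cdots=m_\ell=m$, so the argument is essentially a transcription rather than a new computation.
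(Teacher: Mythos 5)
Your proposal is correct and is precisely the intended argument: the corollary is the specialization of Definition~\ref{def_parity} and Theorem~\ref{MT_Parity_H} to the QC case ($\lambda_i=1$, $m_i=m$), and you correctly identify that the row-dependent exponent $m_i-d_j$ collapses to $m-d_j$, turning the entrywise scaling into right multiplication by $\mathrm{diag}[x^{m-d_1},\ldots,x^{m-d_\ell}]$, while the row-dependent modulus $x^{m_i}-1/\lambda_i$ becomes the uniform $x^m-1$. The observation that the reduction is harmless on the diagonal and below (since $\mathbf{A}$ is upper triangular and the diagonal entries of $\mathbf{A}^*$ are reciprocal polynomials with no negative exponents) is implicit in the paper and worth keeping in mind, but your transcription handles everything the corollary requires.
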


\begin{Coroll}
\label{H_of_QT}
Let $\mathcal{C}$ be an $(\ell,\lambda)$-QT code over $\mathbb{F}_q$ of co-index $m$ and the reduced GPM $\mathbf{G}=[g_{i,j}]$. If $\mathbf{A}=[a_{i,j}]$ is the matrix satisfying the identical equation of $\mathbf{G}$, then $\mathcal{C}^\perp$ is $(\ell,\frac{1}{\lambda})$-QT of co-index $m$ and a GPM 
\begin{equation*}
\mathbf{H}=\left( \mathbf{A}\left(\frac{1}{x}\right) \mathrm{diag}\left[x^{m-d_1},\ldots,x^{m-d_\ell}\right]\pmod{x^m-\frac{1}{\lambda}}\right)^t,
\end{equation*}
where $d_j=\deg(g_{j,j})$ for $1\le j\le \ell$. The reduction modulo $x^m-\frac{1}{\lambda}$ is applied to remove negative exponents of $x$ by replacing $x^{-\mu}$ with $\lambda x^{m-\mu}$ for $\mu\ge 1$.\end{Coroll}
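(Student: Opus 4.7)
The plan is to derive this corollary as a direct specialization of Theorem~\ref{MT_Parity_H}, using the observation already noted after Definition~\ref{def_MTcode} that an $(\ell,\lambda)$-QT code is exactly a $\Lambda$-MT code with equal block lengths $m_1=m_2=\cdots=m_\ell=m$ and shift vector $\Lambda=(\lambda,\lambda,\ldots,\lambda)$. From Theorem~\ref{MT_dual}, the dual $\mathcal{C}^\perp$ is $\Delta$-MT with $\Delta=(\tfrac{1}{\lambda},\ldots,\tfrac{1}{\lambda})$ and the same block lengths, which is the same as saying $\mathcal{C}^\perp$ is $(\ell,\tfrac{1}{\lambda})$-QT with co-index $m$. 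So the only substantive task is to show that the formula for $\mathbf{H}$ in Definition~\ref{def_parity}, when specialized to this setting, collapses to the expression stated in the corollary.

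First I would carry out Step~2 of Definition~\ref{def_parity}: since $m_i=m$ for every $i$, multiplying the $(i,j)$-th entry of $\mathbf{A}(\tfrac{1}{x})$ by $x^{m_i-d_j}=x^{m-d_j}$ depends only on the column index $j$. This is precisely right multiplication by the diagonal matrix $\mathrm{diag}[x^{m-d_1},\ldots,x^{m-d_\ell}]$, yielding
\begin{equation*}
\mathbf{A}^*=\mathbf{A}\!\left(\tfrac{1}{x}\right)\mathrm{diag}\!\left[x^{m-d_1},\ldots,x^{m-d_\ell}\right].
\end{equation*}
Next, Step~3 reduces each entry with $i<j$ modulo $x^{m_i}-\tfrac{1}{\lambda_i}=x^m-\tfrac{1}{\lambda}$, replacing $x^{-\mu}$ with $\lambda x^{m-\mu}$ to eliminate negative exponents. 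I would then observe that applying the same reduction uniformly to every entry of $\mathbf{A}^*$ (not only to the strict upper triangle) is harmless: the entries with $i>j$ are zero because $\mathbf{A}$ is upper triangular, and each diagonal entry $a_{i,i}(\tfrac{1}{x})\,x^{m-d_i}$ is the reciprocal polynomial of $a_{i,i}$, a genuine polynomial of degree at most $m-d_i<m$, so reduction modulo $x^m-\tfrac{1}{\lambda}$ is a no-op on it. Therefore Step~3 can be written as the single operation $\pmod{x^m-\tfrac{1}{\lambda}}$ applied to the whole matrix.

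Combining Steps~2 and~3 gives
\begin{equation*}
\mathbf{A}^{**}=\mathbf{A}\!\left(\tfrac{1}{x}\right)\mathrm{diag}\!\left[x^{m-d_1},\ldots,x^{m-d_\ell}\right]\pmod{x^m-\tfrac{1}{\lambda}},
\end{equation*}
and $\mathbf{H}=(\mathbf{A}^{**})^t$ is exactly the matrix in the statement. Theorem~\ref{MT_Parity_H} then guarantees that $\mathbf{H}$ is a GPM of $\mathcal{C}^\perp$, which completes the argument. There is no real obstacle here; the only point requiring a line of justification is the uniform applicability of the modular reduction, which follows from the upper triangularity of $\mathbf{A}$ together with the degree bound $\deg(a_{i,i})=m-d_i$ coming from the diagonal part $a_{i,i}g_{i,i}=x^m-\lambda$ of the identical equation.
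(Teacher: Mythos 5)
Your proof is correct and follows the paper's intended approach: the corollary is stated as an immediate specialization of Definition~\ref{def_parity} and Theorem~\ref{MT_Parity_H} to the uniform case $m_i=m$, $\lambda_i=\lambda$, and your derivation makes that specialization explicit. One tiny inaccuracy: the diagonal entry $a_{i,i}(1/x)\,x^{m-d_i}$ has degree exactly $m-d_i$, which equals $m$ when $g_{i,i}=1$ (so $a_{i,i}=x^m-\lambda$), so the strict bound $m-d_i<m$ you invoke can fail; this is harmless, however, since the reduction in Definition~\ref{def_parity} and in the corollary is defined to act only on negative exponents, and the exponents of the diagonal entry lie in $[0,m-d_i]$ regardless.
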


\begin{Coroll}
\label{H_of_GQC}
Let $\mathcal{C}$ be an $\ell$-GQC code over $\mathbb{F}_q$ of block lengths $(m_1,m_2,\ldots,m_\ell)$ and the reduced GPM $\mathbf{G}=[g_{i,j}]$. If $\mathbf{A}=[a_{i,j}]$ is the matrix satisfying the identical equation of $\mathbf{G}$, then $\mathcal{C}^\perp$ is $\ell$-GQC of block lengths $(m_1,m_2,\ldots,m_\ell)$ and a GPM $\mathbf{H}$, where
\begin{equation*}
\mathrm{Column}_j\left(\mathbf{H}\right)=\mathrm{row}_j \left( \mathbf{A}\left(\frac{1}{x}\right) \mathrm{diag}\left[x^{m_j-d_1},\ldots,x^{m_j-d_\ell}\right]\pmod{x^{m_j}-1}\right)
\end{equation*}
and $d_j=\deg(g_{j,j})$ for $1\le j\le \ell$. The reduction modulo $x^{m_j}-1$ is applied to remove negative exponents of $x$ by replacing $x^{-\mu}$ with $x^{m_j-\mu}$ for $\mu\ge 1$.\end{Coroll}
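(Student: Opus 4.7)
The plan is to derive the corollary as a specialization of Theorem~\ref{MT_Parity_H} to the case where every shift constant equals one, combined with an index-bookkeeping argument that recasts the row-wise construction of Definition~\ref{def_parity} in the column-wise form stated. By Definition~\ref{def_MTcode} and Corollary~\ref{Corollary_GQC}, an $\ell$-GQC code of block lengths $(m_1,\ldots,m_\ell)$ is precisely a $\Lambda$-MT code with $\Lambda=(1,\ldots,1)$; Theorem~\ref{MT_dual} then immediately yields that $\mathcal{C}^\perp$ is $\Delta$-MT with $\Delta=(1,\ldots,1)$, i.e.\ again $\ell$-GQC with the same block lengths. This disposes of the structural part of the conclusion.

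With that in hand, I would apply Theorem~\ref{MT_Parity_H} to infer that the matrix $\mathbf{H}=(\mathbf{A}^{**})^t$ produced by the four-step recipe of Definition~\ref{def_parity}, specialized to $\lambda_i=1$ for all $i$, is a GPM of $\mathcal{C}^\perp$. The remaining work is purely combinatorial: check that this $\mathbf{H}$ admits the column-wise description claimed. Unwinding Definition~\ref{def_parity}, the $(i,j)$-entry of $\mathbf{A}^*$ equals $a_{i,j}(1/x)\,x^{m_i-d_j}$, so row $i$ of $\mathbf{A}^*$ coincides with row $i$ of $\mathbf{A}(1/x)\cdot\mathrm{diag}[x^{m_i-d_1},\ldots,x^{m_i-d_\ell}]$. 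Passing from $\mathbf{A}^*$ to $\mathbf{A}^{**}$ reduces each entry in row $i$ modulo $x^{m_i}-1$; the restriction $i<j$ in Definition~\ref{def_parity} is harmless because $\mathbf{A}$ is upper triangular and its diagonal entry $a_{i,i}(1/x)\,x^{m_i-d_i}$ is already a polynomial thanks to $\deg(a_{i,i})=m_i-d_i$. Transposing to form $\mathbf{H}$ interchanges rows and columns, so column $j$ of $\mathbf{H}$ equals row $j$ of $\mathbf{A}^{**}$, which is the formula in the corollary.

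The only subtlety worth flagging is that, in contrast with the QC case of Corollary~\ref{H_of_QC}, a single global modulus does not suffice here: the modulus $x^{m_i}-1$ varies with the row index of $\mathbf{A}^{**}$, and transposition converts this row index into the column index of $\mathbf{H}$, which is exactly why column $j$ of $\mathbf{H}$ carries the block-$j$ modulus $x^{m_j}-1$. I do not anticipate a genuine obstacle; the argument is a careful unwinding of Definition~\ref{def_parity} under $\Lambda=(1,\ldots,1)$, together with the observation that the block lengths $m_j$ need not be equal and therefore must be tracked column-by-column rather than globally.
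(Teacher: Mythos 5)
Your proposal is correct and follows exactly the route the paper intends: the corollary is stated as an immediate specialization of Theorem~\ref{MT_Parity_H} (via Definition~\ref{def_parity}) to $\Lambda=(1,\ldots,1)$, and your unwinding — noting that row $i$ of $\mathbf{A}^*$ equals row $i$ of $\mathbf{A}(1/x)\,\mathrm{diag}[x^{m_i-d_1},\ldots,x^{m_i-d_\ell}]$, that the $i<j$ restriction in step (3) is vacuous because $\mathbf{A}$ is upper triangular with $\deg(a_{i,i})=m_i-d_i$ making the diagonal already polynomial, and that transposition converts the row-index $i$ of the modulus $x^{m_i}-1$ into the column index $j$ of $\mathbf{H}$ — is precisely the bookkeeping the paper leaves implicit. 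Your closing remark on why the modulus must be tracked column-by-column, unlike the QC case of Corollary~\ref{H_of_QC}, is an accurate and useful observation.
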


\begin{Coroll}
\label{S_Dual_Orthog}
Let $\mathcal{C}$ denote an $\ell$-QC, $(\ell,\lambda)$-QT, $\ell$-GQC, or $\left(\lambda_1,\lambda_2,\ldots,\lambda_\ell \right)$-MT code over $\mathbb{F}_q$ and let $\mathbf{H}$ be an $\ell\times\ell$ GPM of $\mathcal{C}^\perp$, e.g., $\mathbf{H}$ is the polynomial matrix defined in Corollary \ref{H_of_QC}, Corollary \ref{H_of_QT}, Corollary \ref{H_of_GQC}, or Definition \ref{def_parity} respectively. If $\mathbf{G}$ is an $\ell\times\ell$ GPM of $\mathcal{C}$, then $\mathcal{C}$ is self-orthogonal if and only if there is a polynomial matrix $\mathbf{U}$ such that $\mathbf{G}=\mathbf{U}\mathbf{H}$. Additionally, $\mathcal{C}$ is self-dual if and only if $\mathbf{U}$ is invertible.\end{Coroll}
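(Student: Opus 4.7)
The plan is to derive the corollary as a near-immediate consequence of Theorem \ref{Containment}, once one recognizes that the proof of that theorem uses only the module-generation property of GPMs and not any specific shift-vector agreement. Both $\mathcal{C}$ and $\mathcal{C}^\perp$ live inside the common ambient free module $\left(\mathbb{F}_q[x]\right)^\ell$: this is explicit from the definition for $\mathcal{C}$, and follows from Theorem \ref{MT_dual} (together with the QC/QT/GQC specializations) for $\mathcal{C}^\perp$. By Theorem \ref{Free_MTcode}, both codes are free of rank $\ell$, and the rows of the $\ell\times\ell$ GPMs $\mathbf{G}$ and $\mathbf{H}$ are $\mathbb{F}_q[x]$-bases of $\mathcal{C}$ and $\mathcal{C}^\perp$ respectively.

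For the self-orthogonality equivalence, I would unwind the definition: self-orthogonality of $\mathcal{C}$ means $\mathcal{C}\subseteq\mathcal{C}^\perp$, and since the rows of $\mathbf{H}$ generate $\mathcal{C}^\perp$ as an $\mathbb{F}_q[x]$-module, this inclusion is equivalent to every row of $\mathbf{G}$ being an $\mathbb{F}_q[x]$-linear combination of the rows of $\mathbf{H}$, i.e., to the existence of a polynomial matrix $\mathbf{U}$ with $\mathbf{G}=\mathbf{U}\mathbf{H}$. This is exactly the first assertion of Theorem \ref{Containment}.

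For the self-duality equivalence, suppose first that $\mathcal{C}=\mathcal{C}^\perp$. The two inclusions yield $\mathbf{G}=\mathbf{U}\mathbf{H}$ and $\mathbf{H}=\mathbf{V}\mathbf{G}$ for polynomial matrices $\mathbf{U}, \mathbf{V}$; substituting gives $\left(\mathbf{U}\mathbf{V}-\mathbf{I}_{\ell}\right)\mathbf{G}=\mathbf{0}$, and the $\mathbb{F}_q[x]$-linear independence of the rows of $\mathbf{G}$ (Theorem \ref{Free_MTcode}) forces $\mathbf{U}\mathbf{V}=\mathbf{I}_{\ell}$. Over the commutative ring $\mathbb{F}_q[x]$, a one-sided inverse of a square matrix is automatically two-sided (pass to determinants), so $\mathbf{U}$ is invertible. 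Conversely, if $\mathbf{G}=\mathbf{U}\mathbf{H}$ with $\mathbf{U}$ invertible, then $\mathbf{H}=\mathbf{U}^{-1}\mathbf{G}$ shows $\mathcal{C}^\perp\subseteq\mathcal{C}$, which together with $\mathcal{C}\subseteq\mathcal{C}^\perp$ forces equality.

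The only subtle point, rather than a genuine obstacle, is that in the QT and MT cases $\mathcal{C}$ and $\mathcal{C}^\perp$ typically have different shift constants (reciprocal to each other), so Theorem \ref{Containment} as stated does not literally apply to the pair $(\mathcal{C},\mathcal{C}^\perp)$. However, its proof uses only that $\mathbf{G}$ and $\mathbf{H}$ are generating matrices for submodules of the common ambient $\left(\mathbb{F}_q[x]\right)^\ell$; therefore the argument transfers uniformly across all four settings (QC, QT, GQC, MT) without modification.
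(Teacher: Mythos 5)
Your proposal reproduces the paper's short argument (invoke Theorem~\ref{MT_Parity_H} and then Theorem~\ref{Containment}), and you are right that the proof of Theorem~\ref{Containment} works for any pair of rank-$\ell$ submodules of $\left(\mathbb{F}_q[x]\right)^\ell$. But the ``subtle point'' you dismiss at the end is exactly where a genuine gap lies, and it sits one step earlier than the appeal to Theorem~\ref{Containment}: you take for granted that the $\mathbb{F}_q$-code inclusion $\mathcal{C}\subseteq\mathcal{C}^\perp$ is equivalent to the $\mathbb{F}_q[x]$-submodule inclusion of the two lifts in $\left(\mathbb{F}_q[x]\right)^\ell$, i.e.\ to $\mathbf{G}=\mathbf{U}\mathbf{H}$.

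That equivalence fails in general for QT and MT codes. The code $\mathcal{C}$ is lifted into $\left(\mathbb{F}_q[x]\right)^\ell$ through reduction modulo $x^{m_j}-\lambda_j$, so its lift contains $M_\Lambda$; whereas $\mathcal{C}^\perp$, being $\Delta$-MT with $\Delta=(\lambda_1^{-1},\ldots,\lambda_\ell^{-1})$, is lifted through reduction modulo $x^{m_j}-\lambda_j^{-1}$, so its lift contains $M_\Delta$. When some $\lambda_j^2\neq1$ these are genuinely different embeddings, and self-orthogonality of $\mathcal{C}$ (which only says $\phi(\mathcal{C})\subseteq\phi(\mathcal{C}^\perp)$ on the low-degree representatives) does not force $M_\Lambda$ into the lift of $\mathcal{C}^\perp$: the residue of $(x^{m_j}-\lambda_j)\mathbf{e}_j$ modulo $x^{m_j}-\lambda_j^{-1}$ is $(\lambda_j^{-1}-\lambda_j)\mathbf{e}_j$, a nonzero constant multiple of $\mathbf{e}_j$, so one would additionally need $\mathbf{e}_j\in\mathcal{C}^\perp$. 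A concrete failure: over $\mathbb{F}_5$ take $\ell=2$, $m_1=m_2=1$, $\Lambda=(2,2)$, $\mathcal{C}=\langle(2,1)\rangle$. This code is self-dual, its reduced GPM $\mathbf{G}$ has rows $(1,3)$ and $(0,x-2)$, Definition~\ref{def_parity} gives $\mathbf{H}$ with rows $(1-2x,0)$ and $(2,1)$, yet $(x-2,0)$ lies in the lift of $\mathcal{C}$ but not in the lift of $\mathcal{C}^\perp$, so $\mathbf{G}=\mathbf{U}\mathbf{H}$ has no polynomial solution $\mathbf{U}$. Your argument (and indeed the paper's one-line proof) is correct exactly when $\Lambda=\Delta$, i.e.\ $\lambda_j^2=1$ for all $j$, which covers QC and GQC; for general QT and MT codes the step you flagged needs either a restricting hypothesis or a lemma identifying the two lifts, not a dismissal.
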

\begin{proof}
{Theorem \ref{MT_Parity_H} shows that $\mathbf{H}$ is a GPM for $\mathcal{C}^\perp$. The result then follows from Theorem \ref{Containment}.}
\end{proof}

\begin{exmp}
\label{Ex_on_QC_Optimal}
We investigate in detail one of the QC codes given in Table \ref{tab1}. Let $\mathcal{C}$ be the binary $5$-QC code of length $25$ that has the reduced GPM
\begin{equation*}
\mathbf{G}=\begin{pmatrix}
1+x & 0 & 0 & x+x^4 & x+x^2+x^3+x^4\\
0 & 1+x & 0 & x+x^2+x^3+x^4 & x+x^4\\
0 & 0 & 1+x^5 & 0 & 0\\
0 & 0 & 0 & 1+x^5 & 0\\
0 & 0 & 0 & 0 & 1+x^5\\
\end{pmatrix}.
\end{equation*}
Using brute force method, the weight enumerator of $\mathcal{C}$ is $W_\mathcal{C}(x,y)=x^{25} + 130 x^{17} y^8 + 120 x^{13} y^{12} + 5 x^{9} y^{16}$ and, hence, $d_{\mathrm{min}}\left(\mathcal{C}\right)=8$. By Theorem \ref{dim_MT}, the dimension of $\mathcal{C}$ is $8$. Hence $\mathcal{C}$ is optimal \cite{Grassl}. The identical equation of $\mathbf{G}$ is satisfied by the polynomial matrix
\begin{equation*}
\mathbf{A}=\begin{pmatrix}
1+x+x^2+x^3+x^4&0&0&x+x^2+x^3&x+x^3\\
0& 1+x+x^2+x^3+x^4& 0& x+x^3 &x+x^2+x^3\\
0&0&1&0&0\\
0&0&0&1&0\\
0&0&0&0&1
\end{pmatrix}.
\end{equation*}
Corollary \ref{H_of_QC} provides a GPM for $\mathcal{C}^\perp$ which is given by
\begin{equation*}
\mathbf{H}=\begin{pmatrix}
1+x+x^2+x^3+x^4&0& 0& 0& 0\\
0& 1+x+x^2+x^3+x^4& 0& 0& 0\\
0&0&1&0&0\\
x^2+x^3+x^4&x^2+x^4&0&1&0\\
x^2+x^4 &x^2+x^3+x^4& 0&0&1\\
\end{pmatrix}.
\end{equation*}
Finding the Hermite normal form of $\mathbf{H}$ yields the reduced GPM
\begin{equation*}
\mathbf{G}^\perp=\begin{pmatrix}
1&0&0&x+x^2+x^3&x+x^3\\
0&1&0&x+x^3&x+x^2+x^3\\
0&0&1&0&0\\
0&0&0&1+x+x^2+x^3+x^4&0\\
0&0&0&0&1+x+x^2+x^3+x^4\\
\end{pmatrix}.
\end{equation*}
The dual code $\mathcal{C}^\perp$ is $5$-QC over $\mathbb{F}_2$ of length $25$ and dimension $17$. From the MacWilliams identity, the weight enumerator of $\mathcal{C}^\perp$ is 
\begin{align*}
W_{\mathcal{C}^\perp}(x,y) & = \frac{1}{256}W_\mathcal{C}(x+y,x-y)\\
& = x^{25} + 5 x^{24} y + 10 x^{23} y^2 + 10 x^{22} y^3 +\cdots + 10 x^2 y^{23} + 5 x y^{24} + y^{25}.
\end{align*}
From Corollary \ref{S_Dual_Orthog}, $\mathcal{C}$ is self-orthogonal because $\mathbf{G}=\mathbf{U}\mathbf{G}^\perp$, where
\begin{equation*}
\mathbf{U}=\begin{pmatrix}
1+x&0&0&0&0\\
0&1+x&0&0&0\\
0&0&1+x^5&0&0\\
0&0&0&1+x&0\\
0&0&0&0&1+x\\
\end{pmatrix}.
\end{equation*}\end{exmp}

\section{Combining properties of QC codes}
Suppose that $\mathcal{C}$ is a linear code over $\mathbb{F}_q$ of length $n$. For any codeword $\mathbf{c}\in\mathcal{C}$, the reverse of $\mathbf{c}$ is the vector $\mathbf{r}\in\mathbb{F}_q^n$ obtained by reversing the coordinates of $\mathbf{c}$. That is, $\mathbf{r}=\left(c_{n-1},\ldots,c_{1},c_{0}\right)$ whenever $\mathbf{c}=\left(c_0,c_1,\ldots,c_{n-1} \right)$. The set of all reverses of the codewords of $\mathcal{C}$ forms a linear code over $\mathbb{F}_q$ of length $n$. In this section, $\mathcal{C}$ is restricted to be $\ell$-QC of co-index $m$ and length $n=m\ell$. Hence, the reverse of the codeword given by \eqref{shift2} is
\begin{equation*}
\mathbf{r}\!=\!\left( c_{m-1,\ell}, \ldots, c_{m-1, 2}, c_{m-1,1}, c_{m-2,\ell}, \ldots , c_{m-2,2}, c_{m-2,1}, \ldots, c_{0,\ell}, \ldots, c_{0,2}, c_{0,1} \right)\!.
\end{equation*}

\begin{Def}
\label{def_reversed}
Let $\mathcal{C}$ be an $\ell$-QC code over $\mathbb{F}_q$. The reversed code $\mathcal{R}$ of $\mathcal{C}$ is the set containing the reverse of each codeword of $\mathcal{C}$. That is, 
\begin{equation*}
\mathcal{R}=\left\{\text{The reverse of } \mathbf{c}\  \forall \   \mathbf{c}\in\mathcal{C}\right\}.
\end{equation*}
In addition, $\mathcal{C}$ is called reversible if $\mathcal{C}=\mathcal{R}$, i.e., the reverse of each codeword is a codeword.\end{Def}

\begin{Thm}
Let $\mathcal{C}$ and $\mathcal{R}$ be as in Definition \ref{def_reversed}. Then $\mathcal{R}$ is linear, QC, and has the same index, co-index, dimension, and minimum distance as $\mathcal{C}$.\end{Thm}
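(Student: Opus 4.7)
The plan is to introduce the coordinate-reversal map $\rho : \mathbb{F}_q^n \to \mathbb{F}_q^n$ defined by $\rho(c_0, c_1, \ldots, c_{n-1}) = (c_{n-1}, \ldots, c_1, c_0)$, so that $\mathcal{R} = \rho(\mathcal{C})$ by Definition \ref{def_reversed}. This map is an $\mathbb{F}_q$-linear involution (in particular a bijection) that merely permutes coordinates, so it preserves the Hamming weight. Linearity of $\mathcal{R}$ follows because $\rho$ is linear and $\mathcal{C}$ is a subspace; the equality $\dim_{\mathbb{F}_q} \mathcal{R} = \dim_{\mathbb{F}_q}\mathcal{C}$ follows because $\rho$ restricts to a vector space isomorphism $\mathcal{C} \to \mathcal{R}$; and the equality $d_{\mathrm{min}}(\mathcal{R}) = d_{\mathrm{min}}(\mathcal{C})$ follows because $\omega(\rho(\mathbf{c})) = \omega(\mathbf{c})$ for every $\mathbf{c}$, so the two codes share the same weight enumerator.

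Next I would establish the commutation identity
\begin{equation*}
\rho \circ T_\ell = T_\ell^{-1} \circ \rho,
\end{equation*}
by a direct coordinate-wise check using the definition of $T_\ell$ from \eqref{shift1}. Since $\mathcal{C}$ is a finite set invariant under $T_\ell$ and $T_\ell$ is a bijection, $\mathcal{C}$ is also invariant under $T_\ell^{-1}$. Using $\rho = \rho^{-1}$, the identity rearranges to $T_\ell = \rho \circ T_\ell^{-1} \circ \rho$, hence for any $\mathbf{r} = \rho(\mathbf{c}) \in \mathcal{R}$,
\begin{equation*}
T_\ell(\mathbf{r}) = \rho\bigl(T_\ell^{-1}(\mathbf{c})\bigr) \in \rho(\mathcal{C}) = \mathcal{R}.
\end{equation*}
Thus $\mathcal{R}$ is invariant under $T_\ell$ and is QC with shift index at most $\ell$.

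Finally, to upgrade this to equality of indices, I would invoke the involutive symmetry of $\rho$. If $\ell_\mathcal{R}$ denotes the (minimal) index of $\mathcal{R}$, then the argument above applied to $\mathcal{R}$ (whose reversed code is $\mathcal{C}$) shows that $\mathcal{C}$ is invariant under $T_{\ell_\mathcal{R}}$, so by minimality of the index of $\mathcal{C}$ we get $\ell \le \ell_\mathcal{R}$. Combined with $\ell_\mathcal{R} \le \ell$ from Step 2, this gives $\ell_\mathcal{R} = \ell$, and the co-index equality $m_\mathcal{R} = n/\ell_\mathcal{R} = n/\ell = m$ is automatic from Proposition \ref{Prop_index}.

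The only genuinely delicate point is the commutation relation $\rho \circ T_\ell = T_\ell^{-1} \circ \rho$; everything else is a formal consequence of $\rho$ being a weight-preserving linear involution. I would carry out the verification by explicitly applying both sides to $(a_1, a_2, \ldots, a_n)$ and comparing the resulting tuples coordinate by coordinate, using the convention that $T_\ell^{-1}$ sends $(b_1, \ldots, b_n)$ to $(b_{\ell+1}, \ldots, b_n, b_1, \ldots, b_\ell)$.
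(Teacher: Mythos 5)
Your proof is correct and takes essentially the same approach as the paper: both treat reversal as a weight-preserving permutation of coordinates to get linearity, dimension, and minimum distance for free, and both rest on the observation that applying $T_\ell$ to a reversed codeword is the same as reversing $T_\ell^{-1}$ of the original codeword (your commutation identity $\rho \circ T_\ell = T_\ell^{-1}\circ\rho$). Your final symmetry argument deducing equality of indices from the involutive nature of $\rho$ is a bit more careful than the paper, which asserts equality of the index directly after establishing invariance under $T_\ell$.
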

\begin{proof}
{In fact, $\mathcal{C}$ and $\mathcal{R}$ are permutation equivalent. Hence, $\mathcal{R}$ is linear and has the same length, dimension and minimum distance as $\mathcal{C}$. Let $\ell$ and $m$ be the index and co-index of $\mathcal{C}$, respectively. For any $\mathbf{r}\in\mathcal{R}$, there exists a codeword $\mathbf{c}\in\mathcal{C}$ such that $\mathbf{r}$ is the reverse of $\mathbf{c}$. It can be seen that $T_\ell\left(\mathbf{r} \right)$ is the reverse of $T_\ell^{-1}\left(\mathbf{c}\right)\in\mathcal{C}$, where $T_\ell$ is the automorphism of $\mathbb{F}_q^n$ that corresponds to $\ell$ coordinates shift. Thus, $T_\ell\left(\mathbf{r} \right)\in\mathcal{R}$ and $\mathcal{R}$ is $\ell$-QC of co-index $m$.}
\end{proof}

The polynomial representation of the codewords of $\mathcal{R}$ is related to that of the codewords of $\mathcal{C}$. Specifically, if we write the polynomial representation of a codeword of $\mathcal{C}$ as $\mathbf{c}=\left(c_1(x), c_2(x),\ldots,c_\ell(x)\right)$, the polynomial representation of its reverse is
\begin{equation}
\label{reverse}
\mathbf{r}=x^{m-1}\left(c_\ell\left(\frac{1}{x}\right),c_{\ell-1}\left(\frac{1}{x}\right),\ldots,c_2\left(\frac{1}{x}\right),c_1\left(\frac{1}{x}\right)\right).
\end{equation}

The following result provides an explicit GPM formula for $\mathcal{R}$, and the proof can be found in \cite{ELDIN2021}.
\begin{Thm}
\label{main_theorem}
Let $\mathcal{C}$ be an $\ell$-QC code over $\mathbb{F}_q$ of co-index $m$ with a GPM $\mathbf{G}=[g_{i,j}]$. Let $\mathbf{J}=\left[\delta_{i,\ell+1-j}\right]$ be the $\ell\times\ell$ backward identity matrix. A GPM for the reversed code $\mathcal{R}$ of $\mathcal{C}$ is given by
\begin{equation}
\label{Intheorem}
\mathbf{F}=\left(\mathrm{diag}\left[x^{m+d_{1}},\ldots,x^{m+d_{\ell}}\right] \mathbf{G}\left(\frac{1}{x}\right)+(1-x^m)\mathrm{diag}\left[g_{1,1}^*,\ldots,g_{\ell,\ell}^*\right]\right) \mathbf{J},
\end{equation}
where $d_i=\deg(g_{i,i})$ and $g_{i,i}^*=x^{d_i}g_{i,i}\left(\frac{1}{x}\right)$ for $1\le i\le \ell$.\end{Thm}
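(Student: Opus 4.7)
The plan is to verify that $\mathbf{F}$ is a GPM of $\mathcal{R}$ by showing (i) every row of $\mathbf{F}$ lies in $\mathcal{R}$, and (ii) the rows of $\mathbf{F}$ in fact generate all of $\mathcal{R}$ as an $\mathbb{F}_q[x]$-submodule of $\left(\mathbb{F}_q[x]\right)^\ell$.

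For (i), denote by $\mathbf{g}_i$ and $\mathbf{F}_i$ the $i$-th rows of $\mathbf{G}$ and $\mathbf{F}$, respectively, and let $\tilde{\mathbf{g}}_i := x^{m-1}\!\left(g_{i,\ell}(1/x),\ldots,g_{i,1}(1/x)\right)$ be the formal Laurent-polynomial reverse of $\mathbf{g}_i$ suggested by \eqref{reverse}. An entry-by-entry expansion of \eqref{Intheorem} produces the key identity
\begin{equation*}
\mathbf{F}_i \;=\; x^{d_i+1}\,\tilde{\mathbf{g}}_i \;-\; (x^m-1)\,g_{i,i}^{*}\,\mathbf{e}_{\ell+1-i},
\end{equation*}
where the correction term exactly cancels the gap $x^{m+d_i}g_{i,i}(1/x)-g_{i,i}^{*}=(x^m-1)g_{i,i}^{*}$ at the diagonal position $\ell+1-i$, while the off-diagonal entries match directly. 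The factor $x^{d_i+1}$ ensures $x^{d_i+1}\tilde{\mathbf{g}}_i$ is a genuine polynomial vector even in the edge case $d_i=m$ (where $\tilde{\mathbf{g}}_i$ formally carries an $x^{-1}$ term), and its projection modulo $x^m-1$ equals $x^{d_i+1}$ times the reverse of $\overline{\mathbf{g}}_i\in\mathcal{C}$, which is a codeword of $\mathcal{R}$ by Definition \ref{def_reversed}. Since $\mathcal{R}$ is an $\mathbb{F}_q[x]$-module containing $M=\left((x^m-1)\mathbb{F}_q[x]\right)^\ell$, both summands lie in $\mathcal{R}$, so $\mathbf{F}_i\in\mathcal{R}$.

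For (ii), each $g_{i,i}$ is a monic divisor of $x^m-1$ and hence has nonzero constant term; therefore $g_{i,i}^{*}=x^{d_i}g_{i,i}(1/x)$ has degree exactly $d_i$. The matrix in parentheses of \eqref{Intheorem} is upper triangular with diagonal $g_{1,1}^{*},\ldots,g_{\ell,\ell}^{*}$, so
\begin{equation*}
\deg\det\mathbf{F} \;=\; \sum_{i=1}^{\ell}d_i \;=\; \deg\det\mathbf{G}.
\end{equation*}
By (i), $\langle\mathbf{F}\rangle\subseteq\mathcal{R}$, and since the rows of the reduced GPM $\mathbf{G}^{(\mathcal{R})}$ of $\mathcal{R}$ form a basis (Theorem \ref{Free_MTcode}), each row of $\mathbf{F}$ is an $\mathbb{F}_q[x]$-linear combination of them, giving $\mathbf{F}=\mathbf{U}\,\mathbf{G}^{(\mathcal{R})}$ for some polynomial matrix $\mathbf{U}$. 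Because coordinate reversal is an $\mathbb{F}_q$-vector space isomorphism, $\dim_{\mathbb{F}_q}\mathcal{R}=\dim_{\mathbb{F}_q}\mathcal{C}=n-\sum d_i$, and Theorem \ref{dim_MT} applied to $\mathcal{R}$ gives $\deg\det\mathbf{G}^{(\mathcal{R})}=\sum d_i$. Comparing degrees in $\det\mathbf{F}=\det\mathbf{U}\cdot\det\mathbf{G}^{(\mathcal{R})}$ forces $\deg\det\mathbf{U}=0$, so $\mathbf{U}$ is a unit over $\mathbb{F}_q[x]$ and hence invertible. By Theorem \ref{equivalent}, $\mathbf{F}$ is a GPM of $\mathcal{R}$.

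The main obstacle lies in step (i): establishing the precise identity $\mathbf{F}_i = x^{d_i+1}\tilde{\mathbf{g}}_i - (x^m-1)g_{i,i}^{*}\mathbf{e}_{\ell+1-i}$ requires careful bookkeeping, especially in the edge case $d_i=m$ where $\tilde{\mathbf{g}}_i$ carries a negative power. Once this identity is in hand, the determinant-degree computation in step (ii) neatly upgrades the containment $\langle\mathbf{F}\rangle\subseteq\mathcal{R}$ to equality.
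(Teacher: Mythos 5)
Your proof is correct. Both steps check out: the row identity $\mathbf{F}_i = x^{d_i+1}\tilde{\mathbf{g}}_i - (x^m-1)g_{i,i}^{*}\mathbf{e}_{\ell+1-i}$ follows directly from expanding the product with $\mathbf{J}$, and since $(x^m-1)g_{i,i}^{*}\mathbf{e}_{\ell+1-i}\in M\subseteq\mathcal{R}$ while $x^{d_i+1}\tilde{\mathbf{g}}_i$ projects modulo $x^m-1$ to $x^{d_i+1}$ acting on the reverse of the codeword $\overline{\mathbf{g}}_i\in\mathcal{C}$, each $\mathbf{F}_i$ lies in $\mathcal{R}$. The edge case $d_i=m$ is handled correctly, since then $g_{i,i}=x^m-1$ forces row $i$ of the reduced GPM to equal $(x^m-1)\mathbf{e}_i$, so the Laurent term is killed after multiplication by $x^{d_i+1}$ and reduction modulo $x^m-1$. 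For the containment upgrade, your observation that the inner matrix is upper triangular with diagonal $g_{1,1}^*,\ldots,g_{\ell,\ell}^*$ — each of degree exactly $d_i$ because $g_{i,i}\mid x^m-1$ has nonzero constant term — gives $\deg\det\mathbf{F}=\sum d_i$, matching $\deg\det\mathbf{G}^{(\mathcal{R})}$ via Theorem \ref{dim_MT} and the fact that reversal is an $\mathbb{F}_q$-isomorphism; hence $\mathbf{U}$ in $\mathbf{F}=\mathbf{U}\mathbf{G}^{(\mathcal{R})}$ has unit determinant and is invertible. One implicit assumption worth making explicit: the theorem as stated reads ``a GPM,'' but the notation $d_i=\deg g_{i,i}$, $g_{i,i}^*$, the upper-triangularity, and the divisibility $g_{i,i}\mid x^m-1$ only make sense for the reduced GPM; your proof correctly interprets it this way. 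The paper itself defers the proof to the cited reference, so no direct comparison of approaches is possible, but your containment-plus-determinant-degree strategy is the standard one for establishing GPM formulas in this module-theoretic framework.
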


From Theorems \ref{Containment} and \ref{main_theorem}, a QC code with a GPM $\mathbf{G}$ is reversible if and only if there is an invertible polynomial matrix $\mathbf{U}$ such that $\mathbf{F}=\mathbf{U}\mathbf{G}$, where $\mathbf{F}$ is given by \eqref{Intheorem}. By the uniqueness of the Hermite normal form, an equivalent condition for the reversibility of a QC code is that writing $\mathbf{F}$ in its Hermite normal form yields the reduced GPM of $\mathcal{C}$.

\begin{exmp}
\label{Ex_on_QC_Optimal2}
We continue with the binary $5$-QC code $\mathcal{C}$ given in Example \ref{Ex_on_QC_Optimal}. From Theorem \ref{main_theorem}, a GPM for the reversed code $\mathcal{R}$ of $\mathcal{C}$ is
\begin{equation*}
\mathbf{F}=\begin{pmatrix}
x^2+x^3+x^4+x^5&x^2+x^5&0&0&1+x\\
x^2+x^5&x^2+x^3+x^4+x^5&0&1+x&0\\
0&0&1+x^5&0&0\\
0&1+x^5&0&0&0\\
1+x^5&0&0&0&0\\
\end{pmatrix}.
\end{equation*}
One can show that $\mathcal{C}$ is reversible because $\mathbf{F}=\mathbf{U}\mathbf{G}$ for the invertible polynomial matrix
\begin{equation*}
\mathbf{U}=\begin{pmatrix}
x^2+x^4&x^2+x^3+x^4&0&0&1+x\\
x^2+x^3+x^4&x^2+x^4&0&1+x&0\\
0&0&1&0&0\\
0&1+x+x^2+x^3+x^4&0&x+x^3&x+x^2+x^3\\
1+x+x^2+x^3+x^4&0&0&x+x^2+x^3&x+x^3\\
\end{pmatrix}.
\end{equation*}
Equivalently, $\mathcal{C}$ is reversible because the Hermite normal form of $\mathbf{F}$ is $\mathbf{G}$. Therefore, we have shown
that $\mathcal{C}$ is binary optimal self-orthogonal reversible QC code.\end{exmp}

In \cite{ELDIN2021}, computer search is used to present some binary optimal self-orthogonal reversible QC codes with different index values. The minimum distances of these codes are calculated using brute force or the method mentioned in \cite{Grassl2}. We present these computer search results in Table \ref{tab1}. The reader can check the properties of these codes in the same way we used in Examples \ref{Ex_on_QC_Optimal} and \ref{Ex_on_QC_Optimal2}, see also Example \ref{Ex_on_QC_Optimal3} below. Table \ref{tab1} records the non-zero entries $g_{i,j}$ of the reduced GPMs $\mathbf{G}=[g_{i,j}]$ of these codes. In this table, $\{0,6,7,8,10,11\}$ is used to abbreviate the polynomial $1+x^6+x^7+x^8+x^{10}+x^{11}\in\mathbb{F}_2[x]$. We conclude this chapter by summarizing some sufficient and necessary conditions that represent the different states between QC code, its dual code, and its reversed code. We omit the proof of the following result which can be found in \cite{ELDIN2021}.

\begin{table*}[h]
\caption{Binary optimal self-orthogonal reversible QC codes}
\label{tab1}
\begin{tabular}{|c|c|c|c|c|}
\hline&&&&\\[-.9em]
$\ell$ & $n$ & $k$ & $d_\mathrm{min}$ & $G=(g_{i,j})$  \\[.2em] \hline&&&&\\[-.9em]

\multirow{2}{*}{2} & \multirow{2}{*}{64} & \multirow{2}{*}{32} & \multirow{2}{*}{12} & $g_{1,1}=\{0\}$,\  $g_{2,2}=\{0,32\}$, \\
&  &  &  & $g_{1,2}=\{2,5,6,7,8,9,10,11,12,15,16,18,19,20,22,24,25,28,29,30,31\}$\\
\hline&&&&\\[-.9em]
\multirow{2}{*}{3} & \multirow{2}{*}{36} & \multirow{2}{*}{6} & \multirow{2}{*}{16} & $g_{1,1}=\{0,1,2,4,5,6\}$,\  $g_{1,2}=\{1,5,7,11\}$,\\  
&  &  &  & $g_{1,3}=\{0,6,7,8,10,11\}$,\  $g_{2,2}=g_{3,3}=\{0,12\}$ \\
\hline&&&&\\[-.9em]

\multirow{4}{*}{4} & \multirow{4}{*}{68} & \multirow{4}{*}{34} & \multirow{4}{*}{12} &
$g_{1,1}=g_{1,2}=\{ 0\}$,\  $g_{1,3}=\{0,3,4,7,10,11,14\}$, \\
&  &  &  & $g_{1,4}=\{1,2,6,7,10,12,14\}$, \ $g_{2,2}=\{0,1\}$, \\
&  &  &  & $g_{2,3}=\{1,4,5,9,10,15\}$,\  $g_{2,4}=\{0,3,4,7,8,9,12,14\}$, \  $g_{4,4}=\{0,17\}$,\\
&  &  &  & $g_{3,3}=g_{3,4}=\{0, 1, 2, 3, 4, 5, 6, 7, 8, 9,10,11,12,13,14,15,16\}$ \\\hline&&&&\\[-.9em]

\multirow{2}{*}{5} & \multirow{2}{*}{25} & \multirow{2}{*}{8} & \multirow{2}{*}{8} &
$g_{1,1}=g_{2,2}=\{0, 1\}$,\  $g_{1,4}=g_{2,5}=\{1,4\}$, \\
&  &  &  &  $g_{1,5}=g_{2,4}=\{1,2,3,4\}$,\  $g_{3,3}=g_{4,4}=g_{5,5}=\{0,5\}$ \\\hline&&&&\\[-.9em]

\multirow{4}{*}{6} & \multirow{4}{*}{36} & \multirow{4}{*}{18} & \multirow{4}{*}{8} &
$g_{1,1}=g_{1,3}=g_{2,2}=g_{2,5}=\{ 0\}$,\  $g_{1,4}=\{2,4\}$,\\
&  &  &  &  $g_{1,5}=g_{4,4}=g_{4,6}=\{0,1,2,3,4,5\}$, \ $g_{1,6}=\{0,1,3,5\}$,\\
&  &  &  &  $g_{2,4}=\{3\}$,\  $g_{2,6}=\{0,1,2,4,5\}$, \ $g_{3,3}=\{0,1\}$,\  $g_{3,4}=\{0,3,4\}$,\\
&  &  &  &  $g_{3,5}=\{3,4\}$,\  $g_{3,6}=\{0,2,5\}$,\  $g_{5,5}=g_{6,6}=\{0,6\}$ \\\hline&&&&\\[-.9em]

\multirow{4}{*}{7} & \multirow{4}{*}{42} & \multirow{4}{*}{14} & \multirow{4}{*}{12} &
$g_{1,1}=\{0\}$,\  $g_{1,3}=\{0,1,2,3\}$,\  $g_{1,4}=\{0,3\}$,\  $g_{1,5}=\{5\}$,\  $g_{1,6}=\{2,3,4,5\}$, \\
&  &  &  & $g_{2,2}=\{0,1\}$,\  $g_{2,3}=\{2\}$,\  $g_{2,4}=\{1,4\}$,\  $g_{2,5}=\{0,1,4,5\}$, \\
&  &  &  & $g_{2,6}=g_{3,6}=g_{4,4}=g_{4,6}=\{0,1,2,3,4,5\}$,\  $g_{2,7}=\{1\}$,\\
&  &  &  & $g_{3,3}=\{0,2,4\}$,\   $g_{3,7}=\{1,3,5\}$, \  $g_{5,5}=g_{6,6}=g_{7,7}=\{0,6\}$ \\\hline&&&&\\[-.9em]

\multirow{4}{*}{8} & \multirow{4}{*}{40} & \multirow{4}{*}{20} & \multirow{4}{*}{8} &
$g_{1,1}=g_{2,2}=g_{3,3}=g_{4,4}=\{0\}$,\  $g_{1,5}=g_{4,8}=\{0,2,4\}$,\\
&  &  &  & $g_{1,6}=g_{2,5}=g_{3,8}=g_{4,7}=\{0,1,4\}$, \ $g_{1,7}=g_{2,8}=\{0,2\}$,\  $g_{1,8}=\{1,2,4\}$, \\
&  &  &  & $g_{2,6}=g_{3,7}=\{3\}$,\  $g_{2,7}=\{2\}$,  \ $g_{3,5}=g_{4,6}=\{1\}$,\  $g_{3,6}=\{1,4\}$,\\
&  &  &  & $g_{4,5}=\{1,2,3,4\}$, \  $g_{5,5}=g_{6,6}=g_{7,7}=g_{8,8}=\{0,5\}$ \\\hline&&&&\\[-.9em]

\multirow{6}{*}{9} & \multirow{6}{*}{54} & \multirow{6}{*}{24} & \multirow{6}{*}{12} &
$g_{1,1}=g_{1,2}=g_{3,3}=g_{3,4}=\{0 \}$,\  $g_{1,6}=\{ 1\}$,\  $g_{1,7}=\{ 1,3,5\}$,\\
&  &  &  & $g_{1,8}=\{ 0,2\}$, \ $g_{1,9}=g_{2,5}=g_{3,5}=g_{4,5}=\{ 1,2,4,5\}$, \  $g_{2,2}=g_{4,4}=\{ 0,1\}$,\\
&  &  &  & $g_{2,6}=g_{4,8}=\{ 0,1,4\}$, \ $g_{2,7}=\{ 0,1,2,3,4\}$,\  $g_{2,8}=\{ 1,4\}$,\  $g_{2,9}=\{ 0,2,3,4\}$, \\
&  &  &  & $g_{3,6}=\{ 3,4\}$,\  $g_{3,7}=\{ 0,1,3,5\}$,\ $g_{3,8}=\{ 2\}$,\  $g_{3,9}=\{ 2,3,5\}$, \\
&  &  &  & $g_{4,6}=\{ 0,1,2,3\}$, \ $g_{4,7}=\{ 0,1,2,5\}$,\  $g_{4,9}=\{ 0,2,4\}$, \\
&  &  &  & $g_{5,5}=g_{7,7}=g_{9,9}=\{ 0,6\}$, \ $g_{6,6}=g_{6,7}=g_{8,8}=g_{8,9}=\{ 0,1,2,3,4,5\}$\\\hline&&&&\\[-.9em]

\multirow{5}{*}{10} & \multirow{5}{*}{40} & \multirow{5}{*}{20} & \multirow{5}{*}{8} &
$g_{1,1}=g_{2,2}=g_{3,3}=g_{4,4}=g_{5,5}=g_{3,6}=g_{5,8}=\{0\}$,\\
&  &  &  &  $g_{1,6}=g_{3,8}=g_{5,10}=\{ 0,1\}$,\ $g_{1,7}=g_{1,9}=g_{2,10}=g_{4,10}=g_{5,6}=\{ 2,3\}$,\\
&  &  &  &  $g_{1,8}=g_{2,7}=g_{3,10}=g_{4,9}=\{ 1,2\}$,\ $g_{1,10}=g_{4,6}=g_{5,7}=\{ 3\}$, \\
&  &  &  &  $g_{2,6}=g_{5,9}=\{ 0,1,2\}$,\  $g_{2,8}=g_{3,9}=\{ 1\}$,\  $g_{2,9}=g_{4,7}=\{ 2\}$,\\
&  &  &  &  $g_{3,7}=g_{4,8}=\{ 0,2,3\}$ ,\  $g_{6,6}=g_{7,7}=g_{8,8}=g_{9,9}=g_{10,10}=\{ 0,4\}$\\\hline
\end{tabular}
\end{table*}

\begin{Thm}
\label{concluding_Th}
Let $\mathcal{C}$ be an $\ell$-QC code over $\mathbb{F}_q$ of co-index $m$. Let $\mathbf{G}$ be the reduced GPM of $\mathcal{C}$ and let $\mathbf{A}$ be the matrix that satisfies the identical equation of $\mathbf{G}$. Let $\mathcal{C}^\perp$ and $\mathcal{R}$ be the dual and reversed code of $\mathcal{C}$, respectively. Then,
\begin{enumerate}
\item $\mathcal{C}^\perp \supseteq \mathcal{R}$ if and only if $\mathbf{G}\mathbf{J}\mathbf{G}^t \equiv \mathbf{0}_{\ell\times\ell} \pmod{x^m-1}$. Therefore, if $\mathcal{C}$ is reversible, then $\mathcal{C}$ is self-orthogonal if and only if $\mathbf{G}\mathbf{J}\mathbf{G}^t\equiv \mathbf{0}_{\ell\times\ell} \pmod{x^m-1}$.
\item $\mathcal{C}^\perp \subseteq \mathcal{R}$ if and only if $\mathbf{A}^t\mathbf{J}\mathbf{A} \equiv \mathbf{0}_{\ell\times\ell} \pmod{x^m-1}$. Therefore, if $\mathcal{C}$ is reversible, then $\mathcal{C}^\perp \subseteq \mathcal{C}$ if and only if $\mathbf{A}^t\mathbf{J}\mathbf{A}\equiv \mathbf{0}_{\ell\times\ell} \pmod{x^m-1}$.
\item If $\mathcal{C}$ is reversible, then $\mathcal{C}$ is self-dual if and only if $\mathbf{G}\mathbf{J}\mathbf{G}^t\equiv\mathbf{A}^t\mathbf{J}\mathbf{A}\equiv \mathbf{0}_{\ell\times\ell} \pmod{x^m-1}$ if and only if $\mathbf{A}=\mathbf{J}\mathbf{G}^t\mathbf{J}$.
\item If $\mathbf{A}=\mathbf{J}\mathbf{G}^t\mathbf{J}$, then $\mathcal{C}$ is self-dual if and only if $\mathcal{C}$ is reversible,
\item If $\mathcal{C}$ is self-dual, then $\mathcal{C}$ is reversible if and only if $\mathbf{A}=\mathbf{J}\mathbf{G}^t\mathbf{J}$.

\end{enumerate}\end{Thm}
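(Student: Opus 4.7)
The plan is to translate each inclusion among $\mathcal{C}$, $\mathcal{R}$, and $\mathcal{C}^\perp$ into an equivalent matrix identity modulo $x^m-1$, using Theorem \ref{Containment} together with the explicit GPM formulas of Theorem \ref{main_theorem} for $\mathcal{R}$ and Corollary \ref{H_of_QC} for $\mathcal{C}^\perp$. One useful preliminary fact is that $\mathbf{A}$ is non-singular over $\mathbb{F}_q(x)$ (since $\det(\mathbf{A})\det(\mathbf{G})=(x^m-1)^\ell$) and thus, by Theorem \ref{A_G_Commutativity}, serves as a parity-check matrix for $\mathcal{C}$: a polynomial vector $\mathbf{c}$ lies in $\mathcal{C}$ iff $\mathbf{c}\mathbf{A}\equiv\mathbf{0}\pmod{x^m-1}$.

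For Part~1, I would express orthogonality via the polynomial pairing $\mathbf{a}\cdot\mathbf{b}=[x^0]\sum_j a_j(x)b_j(1/x)$ in $\mathscr{R}=\mathbb{F}_q[x]/\langle x^m-1\rangle$, which together with QC invariance gives $\mathbf{b}\in\mathcal{C}^\perp$ iff $\mathbf{G}(x)\mathbf{b}(1/x)^t\equiv\mathbf{0}\pmod{x^m-1}$. Substituting the reverse $\mathbf{g}_k^R=x^{m-1}\mathbf{g}_k(1/x)\mathbf{J}$ of the $k$-th row of $\mathbf{G}$ (equation \eqref{reverse}) and absorbing the resulting unit $x^{1-m}$ modulo $x^m-1$ yields $\mathbf{G}\mathbf{J}\mathbf{g}_k^t\equiv\mathbf{0}\pmod{x^m-1}$; letting $k$ vary produces the matrix identity $\mathbf{G}\mathbf{J}\mathbf{G}^t\equiv\mathbf{0}_{\ell\times\ell}\pmod{x^m-1}$. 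Reversibility converts $\mathcal{R}\subseteq\mathcal{C}^\perp$ into self-orthogonality of $\mathcal{C}$. For Part~2, I would observe that the reverse map is an $\mathbb{F}_q$-isometry, so $\mathcal{R}^\perp=(\mathcal{C}^\perp)^R$; thus $\mathcal{C}^\perp\subseteq\mathcal{R}$ is equivalent to $(\mathcal{C}^\perp)^R\subseteq(\mathcal{C}^\perp)^\perp$, and applying Part~1 to the QC code $\mathcal{C}^\perp$ (with GPM $\mathbf{H}$) yields $\mathbf{H}\mathbf{J}\mathbf{H}^t\equiv\mathbf{0}\pmod{x^m-1}$. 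Substituting $\mathbf{H}^t\equiv\mathbf{A}(1/x)\,\mathrm{diag}[x^{m-d_1},\ldots,x^{m-d_\ell}]\pmod{x^m-1}$ from Corollary \ref{H_of_QC}, and using that the diagonal scalings and the automorphism $x\mapsto 1/x$ are units of $\mathscr{R}$, reduces the identity to $\mathbf{A}^t\mathbf{J}\mathbf{A}\equiv\mathbf{0}\pmod{x^m-1}$.

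For Part~3, the first biconditional is immediate from Parts~1 and~2: under reversibility, $\mathcal{C}=\mathcal{C}^\perp$ is equivalent to $\mathcal{R}=\mathcal{C}^\perp$, hence to both inclusions, hence to the two polynomial identities. For the biconditional with $\mathbf{A}=\mathbf{J}\mathbf{G}^t\mathbf{J}$, the forward direction is a direct substitution: plugging into $\mathbf{G}\mathbf{A}=(x^m-1)\mathbf{I}_\ell$ and multiplying by $\mathbf{J}$ on the right yields the exact identity $\mathbf{G}\mathbf{J}\mathbf{G}^t=(x^m-1)\mathbf{J}$, and a parallel computation (using $\mathbf{J}^t=\mathbf{J}$ and $\mathbf{J}^2=\mathbf{I}_\ell$) gives $\mathbf{A}^t\mathbf{J}\mathbf{A}=(x^m-1)\mathbf{J}$. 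For the converse I would write $\mathbf{G}\mathbf{J}\mathbf{G}^t=(x^m-1)\mathbf{K}$ and $\mathbf{A}^t\mathbf{J}\mathbf{A}=(x^m-1)\mathbf{L}$; multiplying the first by $\mathbf{A}^t$ on the right and using $\mathbf{G}^t\mathbf{A}^t=(x^m-1)\mathbf{I}_\ell$ gives $\mathbf{G}\mathbf{J}=\mathbf{K}\mathbf{A}^t$, while the dual computation gives $\mathbf{A}^t=\mathbf{L}\mathbf{G}\mathbf{J}$, so $\mathbf{K}\mathbf{L}=\mathbf{I}_\ell$ and in particular $\det(\mathbf{K})\in\mathbb{F}_q^*$. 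Since $\mathbf{G}$ is upper triangular in Hermite form, $\mathbf{K}$ is supported in the upper anti-triangle and $\det(\mathbf{K})$ equals, up to sign, the product of the anti-diagonal entries $g_{i,i}g_{\ell+1-i,\ell+1-i}/(x^m-1)$. Its constancy forces $g_{i,i}g_{\ell+1-i,\ell+1-i}=x^m-1$ for each $i$; combined with the degree bound $\deg(\mathbf{G}\mathbf{J}\mathbf{G}^t)_{i,k}<d_i+d_k$ for $i+k<\ell+1$ and the dimension identity $\sum d_j=m\ell/2$ forced by self-duality, this forces the off-anti-diagonal entries of $\mathbf{K}$ to vanish. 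Hence $\mathbf{K}=\mathbf{J}$ and $\mathbf{A}=\mathbf{J}\mathbf{G}^t\mathbf{K}^{-1}=\mathbf{J}\mathbf{G}^t\mathbf{J}$ (using that $\mathbf{K}$ is symmetric).

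Parts~4 and~5 are short corollaries. The forward substitution in Part~3 shows $\mathbf{A}=\mathbf{J}\mathbf{G}^t\mathbf{J}$ implies both polynomial identities and hence $\mathcal{R}=\mathcal{C}^\perp$, so under this hypothesis self-duality $\mathcal{C}=\mathcal{C}^\perp$ and reversibility $\mathcal{C}=\mathcal{R}$ coincide, proving Part~4. For Part~5, if $\mathcal{C}$ is self-dual, then reversibility gives $\mathcal{R}=\mathcal{C}=\mathcal{C}^\perp$ which by Part~3 is equivalent to $\mathbf{A}=\mathbf{J}\mathbf{G}^t\mathbf{J}$; conversely $\mathbf{A}=\mathbf{J}\mathbf{G}^t\mathbf{J}$ gives reversibility through Part~4. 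The main obstacle is the converse direction of Part~3: lifting the two mod-$(x^m-1)$ identities to the sharp equality $\mathbf{K}=\mathbf{J}$ requires the combination of the symmetry $\mathbf{K}=\mathbf{K}^t$, the invertibility of $\mathbf{K}$ over $\mathbb{F}_q[x]$, and a delicate degree argument against the upper-triangular Hermite-normal-form structure of $\mathbf{G}$ that rules out any spurious polynomial factor.
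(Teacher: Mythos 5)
The paper itself does not prove Theorem~\ref{concluding_Th}; it refers the reader to \cite{ELDIN2021}. So there is no in-text proof to compare against, and your argument must be judged on its own.

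Your overall strategy is sound. Parts~1 and~2 via the polynomial pairing, the fact that the reverses of the rows of $\mathbf{G}$ generate $\mathcal{R}$ as an $\mathbb{F}_q[x]$-module, the duality/reverse interchange $\mathcal{R}^\perp=(\mathcal{C}^\perp)^R$, and the use of Corollary~\ref{H_of_QC} to pass from $\mathbf{H}$ to $\mathbf{A}$ all check out. The forward substitution in Part~3 and the derivations of Parts~4 and~5 from Part~3 are also correct. The one place that needs repair is the converse of Part~3, specifically the sentence claiming ``$\deg(\mathbf{G}\mathbf{J}\mathbf{G}^t)_{i,k}<d_i+d_k$ for $i+k<\ell+1$''. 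That bound is neither obviously true nor, even if granted, sufficient: from $\sum_j d_j=m\ell/2$ alone one cannot conclude $d_i+d_k\le m$, so $d_i+d_k$ could exceed $m$ and the argument would not force $\mathbf{K}_{i,k}=0$. What does work is a term-by-term estimate using the Hermite normal form together with the anti-diagonal identity you have already established. Since $\mathbf{G}$ is upper triangular with $\deg g_{t,j}<d_j$ for $t<j$, each summand $g_{i,j}\,g_{k,\ell+1-j}$ of $(\mathbf{G}\mathbf{J}\mathbf{G}^t)_{i,k}=\sum_{j=i}^{\ell+1-k}g_{i,j}\,g_{k,\ell+1-j}$ has degree at most $d_j+d_{\ell+1-j}$, with equality impossible when $i+k<\ell+1$ because the two equalities $j=i$ and $\ell+1-j=k$ cannot hold simultaneously. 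Once you know $g_{i,i}g_{\ell+1-i,\ell+1-i}=x^m-1$, so $d_j+d_{\ell+1-j}=m$ for all $j$, every summand has degree strictly less than $m$, hence $(x^m-1)\mathbf{K}_{i,k}$ has degree less than $m$ and $\mathbf{K}_{i,k}=0$. With that substitution the converse goes through; the invocation of ``$\sum d_j=m\ell/2$ forced by self-duality'' becomes unnecessary, since the anti-diagonal identity already implies it.

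One further small point: the ``preliminary fact'' that $\mathbf{A}$ serves as a parity-check matrix ($\mathbf{c}\in\mathcal{C}$ iff $\mathbf{c}\mathbf{A}\equiv\mathbf{0}\pmod{x^m-1}$) is stated but never actually used in the body of the argument, so you can safely drop it or keep it only as context.
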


\begin{exmp}
\label{Ex_on_QC_Optimal3}
We continue with the binary optimal self-orthogonal reversible QC code discussed in Example \ref{Ex_on_QC_Optimal2}. Since $\mathcal{C}$ is reversible, Theorem \ref{concluding_Th} can be used to check the self-orthogonality of $\mathcal{C}$, see Example \ref{Ex_on_QC_Optimal}. Specifically, $\mathcal{C}$ is self-orthogonal if and only if $\mathbf{G}\mathbf{J}\mathbf{G}^t\equiv \mathbf{0}_{\ell\times\ell} \pmod{1+x^5}$. In fact, we have
\begin{equation*}
\mathbf{G}\mathbf{J}\mathbf{G}^t=(1+x^5)\begin{pmatrix}
0&0&0&0&1+x\\
0&0&0&1+x&0\\
0&0&1+x^5&0&0\\
0&1+x&0&0&0\\
1+x&0&0&0&0
\end{pmatrix}.
\end{equation*}\end{exmp}

\bibliographystyle{spmpsci}
\bibliography{References}

\end{document}